\numberwithin{equation}{section}
\numberwithin{figure}{section}
\theoremstyle{plain}
  \theoremstyle{definition}
  \theoremstyle{plain}
  \theoremstyle{definition}
  \theoremstyle{remark}
  \theoremstyle{remark}
  \newtheorem*{rem*}{\protect\remarkname}
  \theoremstyle{plain}
\newcommand{\bra}[1]{{\left\langle{#1}\right\vert}}
\newcommand{\ket}[1]{{\left\vert{#1}\right\rangle}}
\newcommand{\nc}{\newcommand}
\nc{\rnc}{\renewcommand}
\newcommand{\proj}[1]{|#1\rangle\langle #1|}
\nc{\vev}[1]{\langle#1\rangle}
\nc{\grad}{{\vec{\nabla}}}
\DeclareMathOperator{\tr}{\mbox{\rm Tr}}
\newcommand{\be}{\begin{equation}}
\newcommand{\ee}{\end{equation}}
\newcommand{\bea}{\begin{eqnarray}}
\newcommand{\eea}{\end{eqnarray}}
\newcommand{\nn}{\nonumber}
\newcommand{\bi}{\begin{itemize}}
\newcommand{\ei}{\end{itemize}}
\newcommand{\bn}{\begin{enumerate}}
\newcommand{\en}{\end{enumerate}}
\def\beas#1\eeas{\begin{eqnarray*}#1\end{eqnarray*}}
\def\ba#1\ea{\begin{align}#1\end{align}}
\nc{\bas}{\[\begin{aligned}}
\nc{\eas}{\end{aligned}\]}
\nc{\bpm}{\begin{pmatrix}}
\nc{\epm}{\end{pmatrix}}
\def\nn{\nonumber}
\def\L{\left} 
\def\R{\right}
\def\ra{\rightarrow}
\newtheorem{dfn}{Definition}
\newtheorem*{rep@theorem}{\rep@title}
\newcommand{\newreptheorem}[2]{%
\newenvironment{rep#1}[1]{%
 \def\rep@title{#2 \ref{##1}}%
 \begin{rep@theorem}}%
 {\end{rep@theorem}}}
\def\eps{\epsilon}
\def\benum{\begin{enumerate}}
\def\eenum{\end{enumerate}}
\def\bit{\begin{itemize}}
\def\eit{\end{itemize}}
\def\bdesc{\begin{description}}
\def\edesc{\end{description}}
\nc{\todo}[1]{\textcolor{red}{todo: #1}}
\def\begsub#1#2\endsub{\begin{subequations}\label{eq:#1}\begin{align}#2\end{align}\end{subequations}}
\nc\qand{\qquad\text{and}\qquad}
\nc\mnb[1]{\medskip\noindent{\bf #1}}
\nc\mn{\medskip\noindent}
\newtheorem{theorem}{Theorem}
\newtheorem{proposition}[theorem]{Proposition}
\newtheorem{lemma}[theorem]{Lemma}
\newtheorem{corollary}[theorem]{Corollary}
\newtheorem{definition}[theorem]{Definition}
\newenvironment{step}
  {
    \begin{enumerate}

  }
  {\end{enumerate}}
\newenvironment{protocol*}[1]
  {
    \begin{center}
      \hrulefill\\
      \textbf{#1}
  }
  {
    \vspace{-1\baselineskip}
    \hrulefill
    \end{center}
  }
\newcommand{\Tr}{\mbox{\rm Tr}}
\newcommand{\Id}{\ensuremath{\mathop{\rm Id}\nolimits}}
\newcommand{\Es}[1]{\textsc{E}_{#1}}
\newcommand{\C}{\ensuremath{\mathbb{C}}}
\newcommand{\N}{\ensuremath{\mathbb{N}}}
\newcommand{\F}{\ensuremath{\mathbb{F}}}
\newcommand{\setft}[1]{\mathrm{#1}}
\newcommand{\Density}{\setft{D}}
\newcommand{\Pos}{\setft{Pos}}
\newcommand{\Lin}{\setft{L}}
\newcommand{\Ext}{\ensuremath{\mathrm{Ext}}}
\newcommand{\nmExt}{\ensuremath{\mathrm{nmExt}}}
\newcommand{\Adv}{\ensuremath{\mathrm{Adv}}}
\newcommand{\mac}{\ensuremath{\textsc{MAC}}}
\newcommand{\kd}{\ensuremath{\textsc{KeyDerived}}}
\newcommand{\kc}{\ensuremath{\textsc{KeyConfirmed}}}
\newcommand{\mH}{\mathcal{H}}
\newcommand{\Hmin}{H_{\mathrm{min}}}
\newcommand{\guess}{\textsc{guess}}
\newcommand{\logeps}{{\log(1/\eps)}}
\newcommand{\Enc}{{\mathsf{Enc}}}
\nc{\nl}{\nn \\ &=}  
\nc{\nnl}{\nn \\ &}  
\nc{\fot}{\frac{1}{2}} 
\nc{\oo}[1]{\frac{1}{#1}} 
\newcommand{\ben}{\begin{enumerate}}
\newcommand{\een}{\end{enumerate}}
\nc{\mc}{\mathcal}
\nc{\beq}{\begin{equation}}
\nc{\eeq}{\end{equation}}
\nc{\norm}[1]{\L\| #1 \R\|}
\nc{\onenorm}[1]{\L\| #1 \R\|_1} 
\nc{\Ra}{\Rightarrow}
\nc{\zo}{\{0,1\}}	
  \providecommand{\algorithmname}{Algorithm}
  \providecommand{\definitionname}{Definition}
  \providecommand{\lemmaname}{Lemma}
  \providecommand{\problemname}{Problem}
  \providecommand{\remarkname}{Remark}
\providecommand{\theoremname}{Theorem}
\begin{document}

\title{A Quantum-Proof Non-Malleable Extractor\\[2mm]
\large With Application to Privacy Amplification against Active Quantum Adversaries}
\author{Divesh Aggarwal\thanks{Center of Quantum Technologies, and Department of Computer Science, NUS, Singapore. email: \texttt{dcsdiva@nus.edu.sg}.}  \and Kai-Min Chung\thanks{Institute of Information Science, Academia Sinica, Taipei 11529, Taiwan. email: \texttt{kmchung@iis.sinica.edu.tw}} \and Han-Hsuan Lin\thanks{SPMS, Nanyang Technological University and Centre for Quantum Technologies, Singapore. email: \texttt{linhh@ntu.edu.sg}}\and Thomas
  Vidick\thanks{Department of Computing and Mathematical Sciences,
    California Institute of Technology, Pasadena, USA. email:
    \texttt{vidick@cms.caltech.edu}. Supported by NSF CAREER Grant CCF-1553477, AFOSR YIP award number FA9550-16-1-0495, and the IQIM, an NSF Physics Frontiers Center (NSF Grant PHY-1125565) with support of the Gordon and Betty Moore Foundation (GBMF-12500028).}}
\date{}

\begin{titlepage}
\clearpage
  \maketitle
\thispagestyle{empty}

\begin{abstract}
In privacy amplification, two  mutually trusted parties aim to amplify the secrecy of an initial shared secret $X$ in order to establish a shared private key $K$ by exchanging messages over an insecure communication channel. If the channel is authenticated the task can be solved in a single round of communication using a strong randomness extractor; choosing a quantum-proof extractor allows one to establish security against quantum adversaries. 

In the case that the channel is not authenticated, Dodis and Wichs (STOC'09) showed that the problem can be solved in two rounds of communication using a non-malleable extractor, a stronger pseudo-random construction than a strong extractor. 

We give the first construction of a non-malleable extractor that is secure against quantum adversaries. The extractor is based on a construction by Li (FOCS'12), and is able to extract from source of min-entropy rates larger than $1/2$. Combining this construction with a quantum-proof variant of the reduction of Dodis and Wichs, shown by Cohen and Vidick (unpublished), we obtain the first privacy amplification protocol secure against active quantum adversaries.  
\end{abstract}

\end{titlepage}

\section{Introduction}

\paragraph{Privacy amplification.}
We study the problem of {\em privacy amplification}~\cite{BBR88,Mau92,BBCM95,MW97} (PA). In this problem, two parties, Alice and Bob, share a weak secret $X$ (having min-entropy at least $k$). Using $X$ and an insecure communication channel, Alice and Bob would like to securely agree on a secret key $R$ that is $\eps$-close to uniformly random even to an adversary Eve who may have full control over their communication channel. This elegant problem has multiple applications including biometric authentication, leakage-resilient cryptography, and quantum cryptography.

If the adversary Eve is passive, i.e., she is only able to observe the communication but may not alter the messages exchanged, then there is a direct solution based on the use of a strong seeded randomness extractor $\Ext$~\cite{NZ96}. This can be done by Alice selecting a uniform seed $Y$ for the extractor, and sending the seed to Bob; Alice and Bob both compute the key $R = \Ext(X, Y)$, which is close to being uniformly random and independent of $Y$ by the strong extractor property. The use of a quantum-proof extractor suffices to protect against adversaries holding quantum side information about the secret $X$. 
 
Privacy amplification is substantially more challenging when the adversary is active, i.e. Eve can not only read but also modify messages exchanged across the communication channel. This problem has been studied extensively~\cite{MW97, RW03, DKKRS12, DW09, CKOR10, DLWZ14, CRS12, Li12a, Li12b, Li13, DY13, ADJMR14,Li15, CGL15, Coh15, AHL16, Li17}, yielding constructions that are optimal in any of the parameters involved in the problem, including the min-entropy $k$, the error $\eps$, and the communication complexity of the protocol.

\paragraph{Active adversaries with quantum side information.}
We  consider the problem of active attacks by quantum adversaries. This question arises naturally when privacy amplification is used as a sub-protocol, e.g., as a post-processing step in quantum key distribution (QKD), when it may not be safe to assume that the classical communication channel is authenticated.\footnote{QKD relies on an authenticated channel at other stages of the protocol, and here we only address the privacy amplification part: indeed, PA plays an important role in multiple other cryptographic protocols, and it is a fundamental task that it is useful to address first.} To the best of our knowledge the question was first raised in~\cite{BF11}, whose primary focus is privacy amplification with an additional property of source privacy. Although the authors of~\cite{BF11} initially claimed that their construction is secure against quantum side information, they later realized that there was an issue with their argument, and withdrew their claim of quantum security. The only other work we are aware of approaching the question of privacy amplification in the presence of active quantum adversaries is a paper by one of us~\cite{CV16}. In this paper it is shown that a classical protocol for PA introduced by Dodis and Wichs~\cite{DW09} remains secure against active quantum attacks when the main tool used in the protocol, a non-malleable extractor, is secure against quantum side information (a notion that is also formally introduced in that paper, and to which we return shortly). Unfortunately, the final contribution of~\cite{CV16}, a construction of a quantum-proof non-malleable extractor, also had a flaw in the proof, invalidating the construction. Thus, the problem of quantum-secure active privacy amplification remained open.

It may be useful to discuss the difficulty faced by both these previous works, as it informed our own construction. The issue is related to the modeling of the side information held by the adversary Eve, and how that side information evolves as messages are being exchanged, and possibly modified, throughout the privacy amplification protocol. To explain this, consider the setting for a non-malleable extractor. Here, Alice initially has a secret $X$ (the source), while Eve holds side information $E$, a quantum state, correlated with $X$. Alice selects a uniformly random seed $Y$ and computes $\Ext(X,Y)$. However, in addition to receiving $Y$ (as would already be the case for a strong randomness extractor), Eve is also given the possibility to select an arbitrary $Y'\neq Y$ and receive $\Ext(X,Y')$ as ``advice'' to help her break the extractor --- i.e., distinguish $\Ext(X,Y)$ from uniform. Now, clearly in any practical scenario the adversary may use her side information $E$ in order to guide her choice of $Y'$; thus $Y'$ should be considered as the outcome of a measurement $\{M_y^{y'}\}$, depending on $Y=y$ and performed on $E$, which returns an outcome $Y'=y'$ and a post-measurement state $E'$. This means that the security of the extractor should be considered with respect to the side information $E'$. But due to the measurement, $E'$ may be correlated with both $X$ and $Y$ in a way that cannot be addressed by standard techniques for the analysis of strong extractors. Indeed, even if $E'$ is classical, so that we can condition on its value, $X$ and $Y$ may not be independent after conditioning on $E' = e'$; due to the lack of independence it is unclear whether extraction works. 

The issue seems particularly difficult to accommodate when analyzing extractors based on the technique of ``alternate extraction'', as was attempted in~\cite{BF11,CV16}.
In fact, in the original version of~\cite{BF11} the issue is overlooked, resulting in a flawed security proof. In~\cite{CV16} the authors attempted to deal with the difficulty by using the formalism of quantum Markov chains; unfortunately, there is a gap in the argument and it does not seem like the scenario can be modeled using the Markov chain formalism. Note that in the classical setting the issue does not arise: having fixed $E=e$ we can consider $Y'$ to be a fixed, deterministic function of $Y$ --- there is no $E'$ to consider, and $X$ is independent of both $Y$ and $Y'$ conditioned on $E=e$.
In this paper we do not address the issue, but instead focus on a specific construction of non-malleable extractor whose security can be shown by algebraic techniques sidestepping the difficulty; we explain our approach in more detail below.

\paragraph{Our results.} We show that a non-malleable extractor introduced by Li~\cite{Li12a} in the classical setting is secure against quantum side information. Combining this construction with the protocol of Dodis and Wichs and its proof of security from~\cite{CV16}, we obtain the first protocol for privacy amplification that is secure against active quantum adversaries.  

Before describing our results in more detail we summarize Li's construction and its analysis for the case of classical side information. The construction is based on the inner product function. Let $p$ be a prime, $\F_p$ the finite field with $p$ elements, and $\langle \cdot, \cdot\rangle$ the inner product over $\F_p$. Consider the function $\Ext: \F_p^n \times \F_p^n \to \F_p$ given by $\Ext(X, Y) := \langle X, Y \rangle$, where $X \in \F_p^n$ is a weak secret with min-entropy (conditioned on the adversary's side information) assumed to be greater than $(n \log p)/2$, and $Y$ is a uniformly random and independent seed. For this function to be a non-malleable extractor, it is required that $\Ext(X, Y)$ is close to uniform and independent of $\Ext(X, f(Y))$, where $f$ is any adversarially chosen function such that $f(Y) \neq Y$ for all $Y$. This is clearly not true, since if $f(Y) = cY$ for some $c \in \F_p\setminus \{1\}$, then $\Ext(X, f(Y)) = c \Ext(X, Y)$, and hence we don't get the desired independence. Thus, for such a construction to work, it is necessary to encode the source $Y$ as $\Enc(Y)$, for a well-chosen function $\Enc$, in such a way that $\langle X, \Enc(Y) \rangle - c\cdot \langle X, \Enc(f(Y))\rangle$ is hard to guess. The non-uniform XOR lemma~\cite{CRS12,AHL16} shows that it is sufficient to show that  $\langle X, \Enc(Y) \rangle - c\cdot \langle X, \Enc(f(Y))\rangle = \langle X, \Enc(Y) - c\cdot \Enc(f(Y)) \rangle$ is close to uniform conditioned on  $Y$ and $E$. 
The encoding that we use in this paper (which is almost the same as the encoding chosen by Li) is to take $Y \in \F_{p}^{n/2}$, and encode it as $Y \| Y^2$, which we view as an $n$-character string over $\F_p$, with the symbol $\|$ denoting concatenation of strings and the square taken by first interpreting $Y$ as an element of $\F_{p^{n/2}}$. Then it is not difficult to show that for any function $f$ such that $f(Y) \neq Y$ and any $c$, we have that $(Y\| Y^2) - (c \cdot f(Y) \| c\cdot f(Y)^2)$ (taking the addition coordinatewise) has min-entropy almost $(n\log p)/2$. Thus, provided $X$ has sufficiently high min-entropy and using the fact that $X$ and $(Y\| Y^2) - (c \cdot f(Y) \| c\cdot f(Y)^2)$ are independent conditioned on $E$, the strong extractor property of the inner product function gives the desired result.\footnote{This description is a little different from Li's description since he was working with a field of size $2^n$, but we find it more convenient to work with a prime field.}



Our main technical result is a proof of security of Li's extractor, against quantum side information. We show the following (we refer to Definition~\ref{def:nmext} for the formal definition of a quantum-proof non-malleable extractor):


\begin{theorem}\label{thm:main}
Let $p\neq 2$ be a prime. Let $n$ be an even integer. Then for any $\eps>0$ the function $\nmExt(X,Y):\F_p^n \times \F_p^{n/2} \ra \F_p$ given by $\langle X, Y\|Y^2 \rangle$ is an $(\L(\frac{n}{2}+6\R){\log p} -1+4\log\oo{\eps},\eps)$  quantum-proof non-malleable extractor. 
\end{theorem}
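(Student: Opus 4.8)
The plan is to carry out, now in the presence of the adversary's quantum register $E$, the ``non-uniform XOR lemma plus inner-product extractor'' strategy sketched above. Fix a source $X\in\F_p^n$ with $\Hmin(X|E)\ge k$ (the threshold in the statement), a uniform seed $Y\in\F_p^{n/2}$ independent of $(X,E)$, and a tampering function $f$ with $f(Y)\neq Y$ for all $Y$. First I would apply the quantum-proof non-uniform XOR lemma~\cite{CRS12,AHL16} to the output register $\langle X, Y\|Y^2\rangle\in\F_p$, simultaneously expanding the advice register $\langle X, f(Y)\|f(Y)^2\rangle$ in the character basis of $\F_p$. This reduces the non-malleable extractor property to the claim that, for every nontrivial additive character $\chi$ of $\F_p$ and every $c\in\F_p$, the operator $\bbE_Y\,\tr_X\big[\chi(\langle X, Z_c(Y)\rangle)\,\rho_{XE}\big]\ot\proj Y$ has small trace norm, where $Z_c(Y):=(Y\|Y^2)-c\cdot(f(Y)\|f(Y)^2)\in\F_p^n$ (squares taken in $\F_{p^{n/2}}$, addition coordinatewise). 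Two observations make the reduction legitimate: $Y$ is independent of $(X,E)$, so $Z_c(Y)$ is independent of $X$ given $E$; and since the operator inside depends on $Y$ only through $Z_c(Y)$, replacing the conditioning on $Y$ by conditioning on $Z_c(Y)$ costs nothing in trace norm.

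Next I would prove that $Z_c(Y)$ has min-entropy at least $\tfrac n2\log p-1$. Writing $Z_c=z_1\|z_2$ with $z_1,z_2\in\F_{p^{n/2}}$, a preimage $y$ must satisfy $y-cf(y)=z_1$ and $y^2-cf(y)^2=z_2$. For $c\neq 0$ the first equation gives $f(y)=c^{-1}(y-z_1)$; substituting into the second produces a polynomial equation of degree at most $2$ in $y$ over $\F_{p^{n/2}}$, hence at most two solutions. The degenerate cases give at most one solution and are handled directly: $c=0$, where $Z_0=Y\|Y^2$ is injective; and the single value of $c$ for which the quadratic coefficient vanishes, where one is left with a linear equation whose leading coefficient is a nonzero scalar multiple of $z_1=y-cf(y)$, which is itself nonzero because $f$ has no fixed point (this, together with $2\neq 0$ in $\F_{p^{n/2}}$, is where $p\neq 2$ is used). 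So each value of $Z_c(Y)$ is attained by at most two of the $p^{n/2}$ seeds, which gives the claimed bound.

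Finally, the trace-norm estimate from the first step is exactly the statement that the $\F_p$-inner-product two-source extractor on $\F_p^n$ is quantum-proof on the side carrying the quantum side information: with $\Hmin(X|E)\ge k$, with $\Hmin(Z_c(Y))\ge\tfrac n2\log p-1$, and with $X$ and $Z_c(Y)$ independent given $E$, the value $\langle X, Z_c(Y)\rangle$ is $\eps'$-close to uniform given $Z_c(Y)$ and $E$, with $\eps'$ decaying like $2^{-(k+\frac n2\log p-1-n\log p)/2}$ up to $\poly(p)$ factors; equivalently, the relevant Fourier coefficient has trace norm $O(\eps')$. Summing over the at most $p$ values of $c$ and the at most $p$ nontrivial characters $\chi$, and absorbing the $\sqrt p$ loss from the XOR lemma, the choice $k=(\tfrac n2+6)\log p-1+4\log(1/\eps)$ makes the total error at most $\eps$. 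The conceptual crux is the short argument in the second step — that, unlike the raw inner product, the \emph{difference} $Z_c(Y)$ retains almost full min-entropy — but the technically substantial part is the quantum-proof inner-product bound: one needs parameters sharp enough to swallow the $\poly(p)$ and entropy-deficit losses incurred by the reduction and the advice register, which forces the Fourier/collision analysis of the inner product to be run in the quantum setting rather than through the far too lossy estimate $\|\cdot\|_1\le\sqrt{\dim}\,\|\cdot\|_2$; pinning down the constants so as to land exactly on the stated threshold is the main bookkeeping burden.
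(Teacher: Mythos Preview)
Your outline has a genuine gap at the point where you invoke the two-source extractor property of the inner product. You model the tampering as a deterministic function $f$ of the seed alone and then argue that ``$Z_c(Y)$ is independent of $X$ given $E$'' because $Y$ was sampled independently of $(X,E)$. But in the quantum definition of a non-malleable extractor (Definition~\ref{def:nmext}) the adversary does not produce $Y'$ from $Y$ alone: she applies a CPTP map $\Adv$ to the joint register $(Y,E)$, obtaining $Y'$ together with a \emph{post-measurement} side-information register $E'$. The security requirement is stated with respect to $E'$, not $E$. Once the adversary has touched $E$ to produce $Y'$, the residual state $E'$ is in general correlated with $Y$ (and hence with $Z_c(Y,Y')$) as well as with $X$; conditioning on $E'$ does not leave $X$ and $Z_c(Y,Y')$ independent, so the hypothesis of the quantum-proof two-source extractor lemma is not met. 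The paper flags exactly this obstruction in the introduction and again just before describing the communication game: ``this cannot be shown by a reduction to the security proof of the inner product function as a two-source extractor against side information, as $X$ and $g(Y,Y')$ are \emph{not} independent (not even conditioned on the value of $E'$ when $E'$ is classical).''

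What the paper does instead is bypass the independence issue entirely. After the XOR-lemma reduction (which is essentially as you describe, with $t=1$), it shows that any adversary distinguishing $\langle X,g_a(Y,Y')\rangle$ from uniform yields a strategy in a one-way communication game in which Bob, given $Y$ and Alice's quantum message $E$, predicts $\langle X,g_a(Y,Y')\rangle$ for some $Y'\neq Y$ of his choosing. Lemma~\ref{lem:commugame} then upper-bounds the success probability in this game directly via a reconstruction argument: running Bob in superposition over $Y$, applying a Fourier transform on the register holding $g_a(Y,Y')$, and using the $2$-to-$1$ property of $g_a$ (your second step, which is correct and matches Lemma~\ref{lem:221}) to control the residual register. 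This yields a guesser for $X$ from $E$ and hence a min-entropy lower bound, without ever asserting independence of $X$ and $g_a(Y,Y')$ conditioned on the adversary's state. Your combinatorial step survives, but the analytic core has to be replaced by this communication-game argument.
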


We give the main ideas behind our proof of security for this construction, highlighting the points of departure from the classical analysis. Subsequently, we explain the application to privacy amplification. 

\paragraph{Proof ideas.} 
We begin by generalizing the first step of Li's argument, the reduction provided by the non-uniform XOR lemma, to the quantum case. An XOR lemma with quantum side information is already shown in~\cite{KK12}, where the lemma is used to show security of the  inner product unction as a two-source extractor against quantum side information. This version is not sufficient for our purposes, and we establish the following generalization, which may be of independent interest (we refer to Section~\ref{sec:xor} for additional background and definitions):

\begin{lemma}\label{lem:xor2}
Let $p$ be a prime power and $t$ an integer. Let $\rho_{X_0XE}$ be a ccq state with $X_0 \in \F_p$ and $X=(X_1,\dots,X_t)\in \F^t_p$. For all $a=(a_1,\dots,a_t)\in \F^t_p$, define a random variable $Z=X_0+\vev{a,X}=X_0+\sum_{i=1}^t a_i X_i$. Let $\eps\geq 0$ be such that for all $a$,  $\fot \onenorm{\rho^a_{ZE}-U_Z\otimes \rho_E}\leq \epsilon$.  Then
\ba
 \fot\big\|\rho_{X_0XE}-U_{X_0}\otimes \rho_{XE}\big\|_1 &\leq p^{\frac{t+1}{2}} \sqrt{\frac{\epsilon}{2}}\;.
\ea
\end{lemma}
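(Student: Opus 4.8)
The plan is to reduce the statement to a Fourier-analytic / XOR-lemma argument over the finite field $\F_p$. The key object is the "bias" of a linear combination: for each $a \in \F_p^t$ and each nonzero character $\chi$ of $\F_p$, the hypothesis controls $\tfrac12\|\rho_{ZE}^a - U_Z \otimes \rho_E\|_1 \leq \eps$, which by the standard relation between trace distance to uniform and the Fourier coefficients (the quantum XOR lemma of \cite{KK12}, or a direct second-moment computation) bounds $\sum_{\chi \neq 1}\big\|\hat\rho^{a,\chi}_E\big\|_1^2$, where $\hat\rho^{a,\chi}_E = \sum_z \chi(z)\,\rho^{a,z}_E$ and $\rho^{a,z}_E$ is the (subnormalized) conditional state of $E$ given $Z = z$. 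Writing $Z = X_0 + \langle a, X\rangle$, each such Fourier coefficient is itself a Fourier coefficient of the full conditional state $\rho^{x_0,x}_E$ evaluated at the character indexed by $(\chi, a\chi) \in \widehat{\F_p^{t+1}}$: that is, $\hat\rho^{a,\chi}_E = \hat\rho_E^{(\chi, a_1\chi,\dots,a_t\chi)}$ where $\hat\rho_E^{\psi} = \sum_{x_0,x}\psi(x_0,x)\,\rho_E^{x_0,x}$. So the hypothesis, ranging over all $a$ and all $\chi \neq 1$, controls precisely those Fourier coefficients of $\rho_{X_0 X E}$ indexed by vectors in $\F_p^{t+1}$ whose first coordinate is nonzero — and every such vector arises exactly once this way (given first coordinate $\chi \neq 1$, the remaining coordinates are $a\chi$ for a unique $a$).

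Next I would invoke the quantum XOR lemma in the other direction to bound $\tfrac12\|\rho_{X_0 X E} - U_{X_0}\otimes \rho_{XE}\|_1$. The trick is that distance of $X_0$ from uniform conditioned on $(X,E)$ is governed by exactly those Fourier coefficients with nonzero first coordinate — the coefficients with first coordinate zero correspond to the marginal on $(X,E)$ and drop out when we subtract $U_{X_0}\otimes\rho_{XE}$. Concretely, $\tfrac12\|\rho_{X_0XE} - U_{X_0}\otimes\rho_{XE}\|_1$ is at most (a constant times) $\big(\sum_{\psi:\,\psi_0 \neq 1}\big\|\hat\rho_E^\psi\big\|_1^2\big)^{1/2}$ up to the appropriate normalization, but one must be careful: the clean $\ell_2$-type bound from Parseval/Cauchy–Schwarz naturally produces the Hilbert–Schmidt (Frobenius) norm, and converting from Frobenius to trace norm on a $d$-dimensional space costs a factor $\sqrt d$. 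Here the relevant dimension is $|\F_p^{t+1}| = p^{t+1}$ (the number of values of $(X_0,X)$), which is the source of the $p^{(t+1)/2}$ factor. Combining: $\tfrac12\|\rho_{X_0XE} - U_{X_0}\otimes\rho_{XE}\|_1 \lesssim p^{(t+1)/2}\cdot\big(\max_{a,\chi} \tfrac12\|\rho^a_{ZE} - U_Z\otimes\rho_E\|_1\big)^{1/2} \leq p^{(t+1)/2}\sqrt{\eps/2}$, which is the claimed bound after tracking constants.

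The main obstacle — and the place where I expect the constants and the exact exponent to be decided — is the two-sided passage through the XOR lemma while keeping the $\ell_1$-versus-$\ell_2$ (trace versus Frobenius) bookkeeping honest. On the "easy" direction (hypothesis $\Rightarrow$ bound on Fourier coefficients) we lose nothing; on the "hard" direction (Fourier coefficients $\Rightarrow$ conclusion) the Cauchy–Schwarz step that sums the $p^{t+1} - p^t$ relevant coefficients is where the dimension factor $p^{(t+1)/2}$ enters, and one has to check that no extra factor of $p$ sneaks in from the characters of $\F_p$ itself (there are $p-1$ nonzero characters, but these get absorbed cleanly). A secondary subtlety is that the conditional states $\rho_E^{x_0,x}$ are subnormalized (weighted by the probability of $(x_0,x)$), so one should phrase everything in terms of the "pretty good" / collision-type quantities used in \cite{KK12,AHL16} rather than normalized states, to avoid spurious dependence on $\min_{x_0,x}\Pr[(X_0,X)=(x_0,x)]$. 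I would also double-check whether the bijection "first coordinate $\chi\neq 1$ $\leftrightarrow$ pair $(a,\chi)$" is used to get an equality of sums (no loss) or just an inclusion; it should be an equality, which is what makes the final bound as tight as stated.
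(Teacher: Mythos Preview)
Your Fourier plan captures the correct structural identity: the characters of $\F_p^{t+1}$ with nonzero first coordinate are in bijection with pairs $(a,\chi)$, $\chi\neq 1$, and this is precisely what underlies the paper's key identity~\eqref{eq:xor2collision}. The gap is in your handling of the Frobenius-to-trace conversion. You assert that the relevant dimension is $p^{t+1}$, but a bare Parseval/Cauchy--Schwarz argument does not give this. Parseval on operator-valued Fourier coefficients controls $\sum_{x_0,x}\|\rho_E^{x_0,x}-\tfrac{1}{p}\rho_E^x\|_2^2$ in terms of $\sum_{\psi}\|\hat\rho_E^\psi\|_2^2$, and converting each block from $\|\cdot\|_2$ to $\|\cdot\|_1$ then costs an uncontrolled factor $\sqrt{\dim\mH_E}$. (The quantity $\big(\sum_\psi\|\hat\rho_E^\psi\|_1^2\big)^{1/2}$ you display is not what Parseval produces, and there is no Parseval identity for trace norms.)

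The paper's proof is itself a second-moment argument, but one that neutralizes the $E$-dimension by working with the conditioned collision probability $\Gamma_c(\rho_{AB}|\rho_B)=\tr\big(\rho_{AB}(I_A\otimes\rho_B^{-1/2})\big)^2$, i.e., by inserting $\rho_E^{-1/2}$ before squaring. Lemma~\ref{lem:3qb} gives the two-sided link between trace distance and this normalized $\ell_2$ quantity: the lower bound uses H\"older's inequality with $A=C=I\otimes\rho_E^{1/4}$, so that $\|A\|_4^4=\tr\rho_E=1$ absorbs the $E$-dimension for free; the upper bound uses the eigenvalue estimate of Lemma~\ref{lem:eig2}. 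The averaging step~\eqref{eq:xor2collision} is then exactly your Parseval observation, carried out in this normalized inner product. You mention ``pretty good / collision-type quantities'' only as a secondary fix for subnormalized conditional states; in fact the $\rho_E^{-1/2}$ normalization is the primary device that makes the dimension factor come out as $p^{t+1}$ rather than $p^{t+1}\cdot\dim\mH_E$.
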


XOR lemmas are typically proved via Fourier-based techniques (including the one in~\cite{KK12}). Here we instead rely on a collision probability-based argument inspired from~\cite{AHL16}. We prove  Lemma~\ref{lem:xor2} by observing that such arguments generalize to the quantum setting, as in the proof of the quantum leftover hash lemma in~\cite{leftoverhash}. 


Based on the XOR lemma (used with $t=1$), following Li's arguments it remains to show that the random variable $\vev{X,g(Y,Y')}\in\F_p$, where $g(Y,Y') = Y\|Y^2 - c(Y'\| Y'^2) \in \F_p^n$, is close to uniformly distributed from the adversary's point of view, specified by side information $E'$, for every $c \neq 0 \in \F_p$. 
As already mentioned earlier, this cannot be shown by  a reduction to the security proof of the inner product function as a two-source extractor against side information, as $X$ and $g(Y,Y')$ are \emph{not} independent (not even conditioned on the value of $E'$ when $E'$ is classical).

Instead, we are led to a more direct analysis which proceeds by formulating the problem as a communication task.\footnote{The correspondence between security of quantum-proof strong extractors and communication problems has been used repeatedly before, see e.g.~\cite{GKKRW07,KK12}.  } 
We relate the task of breaking our construction --- distinguishing $\vev{X,g(Y,Y')}$ from uniform --- to success in the following task. Alice is given access to a random variable $X$, and Bob is given a uniformly random $Y$. Alice is allowed to send a quantum message $E$, correlated to $X$, to Bob. Bob then selects a $Y' \neq Y$ and returns a value $b\in \F_p$. The players win if $b=\vev{X,g(Y,Y')}$. 
Based on our previous reductions it suffices to show that no strategy can succeed with probability substantially higher than random in this game, unless Alice's initial message to Bob contains a large amount of information about $X$; more precisely, unless the min-entropy of $X$, conditioned on $E$, is less than half the length of $X$. 

Note that the problem as we formulated it does not fall in standard frameworks for communication complexity. In particular, it is a relation problem, as Bob is allowed to choose the value $Y'$ to which his prediction $b$ applies. This seems to prevent us from using any prior results on the communication complexity of the inner product function, and we develop an ad-hoc proof which may be of independent interest. We approach the problem using the ``reconstruction paradigm'' (used in e.g.~\cite{DVPR11}), which amounts to showing that from any successful strategy of the players one may construct a measurement for Bob which completely ``reconstructs'' $X$, given $E$; if this can be achieved with high enough probability it will contradict the min-entropy assumption on $X$, via its dual formulation as a guessing probability~\cite{minentropyguess}. We show this by running Bob's strategy ``in superposition'', and applying a Fourier transform to recover a guess for $X$. This argument is similar to one introduced in~\cite{CDNT97,NS}. We refer to Section~\ref{sec:com-game} for more detail. 


\paragraph{Application to privacy amplification.}
Finally we discuss the application of our quantum-proof non-malleable extractor to the problem of privacy amplification against active quantum attacks, which is our original motivation. The application is based on a breakthrough result by Dodis and Wichs~\cite{DW09}, who were first to show the existence of a two-round PA protocol with optimal (up to constant factors) entropy loss $L = \Theta(\logeps)$, for any initial min-entropy $k$. This was achieved by defining and showing the existence of non-malleable extractors with very good parameters.

The protocol from~\cite{DW09} is recalled in Section~\ref{sec:pa}. The protocol proceeds as follows. Alice sends a uniformly random seed $Y$ to Bob over the communication channel, which is controlled by Eve. Bob receives a possibly modified seed $Y'$. Then Alice computes a key $K = \nmExt(X, Y)$, and Bob computes $K' = \nmExt(X, Y')$. In the second round, Bob generates another uniformly random seed $W'$, and sends $W'$ together with $T' = \mac_{K'}(W')$ to Alice, where $\mac$ is a one-time message authentication code. Alice receives a possibly modified $T, W$ and checks whether $T = \mac_K(W)$. If yes, then the shared secret between Alice and Bob is $\Ext(X, W) = \Ext(X, W')$ with overwhelming probability, where $\Ext$ is any strong seeded extractor. 

The security of this protocol intuitively follows from the following simple observation. If the adversary does not modify $Y$, then $K' = K$, and so $W'$ must be equal to $W$ by the security of the $\mac$. If $Y' \neq Y$, then by the non-malleability property of $\nmExt$, $K$ is uniform and independent of $K'$, and so it is impossible for the adversary to predict $\mac_K(W)$ for any $W$ even given $K'$ and $W'$. 

Since~\cite{DW09} could not construct an explicit non-malleable extractor, they instead defined and constructed a so called a look-ahead extractor, which can be seen as a weakening of the non-malleability requirement of a non-malleable extractor. This was done by using the alternating extraction protocol by Dziembowski and Pietrzak~\cite{DP07}. 

In~\cite{CV16}, Dodis and Wichs' reduction is extended to the case of quantum side information, provided that the non-malleable Extractor $\nmExt$ used in the protocol satisfies the approriate definition of quantum non-malleability, and $\Ext$ is a strong quantum-proof extractor. Based on our construction of a quantum-proof non-malleable extractor (Theorem~\ref{thm:main}) we immediately obtain a PA protocol that is secure as long as the initial secret $X$ has a min-entropy rate of (slightly more than) half. The result is formalized as Corollary~\ref{cor:dw-pa} in Section~\ref{sec:pa}. 

%

\paragraph{Future work.} 
There have been a series of works in the classical setting~\cite{DLWZ14,CRS12,Li12a, Li12b, DY13, Li15, CGL15, Coh15, AHL16, Li17} that have given privacy amplification protocols (via constructing non-malleable extractors or otherwise) that achieve near-optimal parameters. In particular, Li~\cite{Li17} constructed a non-malleable extractor that works for min-entropy $k = \Omega(\log n +  \logeps \log \logeps)$, where $\eps$ is the error probability. 

Our quantum-proof non-malleable extractor requires the min-entropy rate of the initial weak secret to be larger than $1/2$. We leave it as an open question whether one of the above-mentioned protocols that work for min-entropy rate smaller than $1/2$ in the classical setting can be shown secure against quantum side information. 

\section{Preliminaries}\label{sec:prelim}

\subsection{Notation}
\label{sec:notation}

For $p$ a prime power we let $\F_p$ denote the finite field with $p$ elements. For any positive integer $n$, there is a natural bijection $\phi: \F_p^n \mapsto \F_{p^n}$ that preserves group addition and scalar multiplication, i.e., the following hold:
\begin{itemize}
\item For all $c \in \F_p$, and for all $x \in \F_p^n$, $\phi(c \cdot x) = c \cdot \phi(x)$. 
\item For all $x_1, x_2 \in \F_p^n$, $\phi(x_1) + \phi(x_2) = \phi(x_1 + x_2)$.
\end{itemize}
We use this bijection to define the square of an element in $\F_p^n$, e.g. for $y \in \F_p^n$ 
\ba
 y^2=\phi^{-1}\L((\phi(y))^2\R)\;.
\ea 
We write $\langle \cdot,\cdot \rangle$ for the inner product over $\F_p^n$. $\log$ denotes the logarithm with base $2$.

We write $\mH$ for an arbitrary finite-dimensional Hilbert space, $\Lin(\mH)$ for the linear operators on $\mH$, $\Pos(\mH)$ for positive semidefinite operators, and $\Density(\mH)\subset \Pos(\mH)$ for positive semidefinite operators of trace $1$ (\emph{density matrices}). A linear map $\Lin(\mH)\to\Lin(\mH')$ is CPTP if it is completely positive, i.e. $T\otimes \Id(A)\geq 0$ for any $d\geq 0$ and $A\in\Pos(\mH\otimes\C^d)$, and trace-preserving.

We use capital letters $A,B,E,X,Y,Z,\ldots$ to denote quantum or classical random variables. Generally, the letters near the beginning of the alphabet, such as $A,B,E$, represent quantum variables (density matrices on a finite-dimensional Hilbert space), while the letters near the end, such as $X,Y,Z$ represent classical variables (ranging over a finite alphabet). We sometimes represent classical random variables as density matrices diagonal in the computational basis, and write e.g. $(A ,B  , \dots, E )_\rho$ for the density matrix $\rho_{A ,B, \dots, E}$. For a quantum random variable $A$, we denote $\mH_{A}$ the Hilbert space on which the associated density matrix $\rho_A$ is supported, and $d_A$ its dimension. If $X$ is classical we loosely identify its range $\{0,\ldots,d_X-1\}$ with the space $\mH_X$ spanned by $\{\ket{0}_X,\ldots,\ket{d_X-1}_X\}$. We denote $I_A$ the identity operator on $\mH_A$.   When a density matrix specifies the states of two random variables, one of which is classical and the other is quantum, we call it a classical-quantum(cq)-state. A cq state
 $(X,E)_\rho$ takes the form
\begin{align}
&\rho_{XE}=\sum_x \proj{x}_X\otimes \rho^x_E\;,   \nn
\end{align}
 where the summation is over all $x$ in the range of $X$ and $\{\rho^x_E\}$ are positive semidefinite matrices with $\tr \rho_E^x=p_x$, where $p_x$ is the probability of getting the outcome $x$ when measuring the $X$ register. Similarly, a ccq state $(X,Y,E)_\sigma$ is a density matrix over two classical variables and one quantum variable, e.g. $\sigma_{XYE}=\sum_{x,y} \proj{x}_X\otimes \proj{y}_Y\otimes \sigma^{xy}_E$. We will sometimes add or remove random variables from an already-specified density matrix. When we omit a random variable, we mean the reduced density matrix, e.g. $(Y,E)_\sigma= \tr_X( \sigma_{XYE})$. When we introduce a classical variable, we meant that the  classical variable is computed into another classical register. For example, for a function $F(\cdot,\cdot)$ on variables $X,Y$,
 \ba
 (F(X,Y),X,Y,E)_\sigma = \sum_{f,x,y} {\delta}({f,F(x,y)}) \proj{f}\otimes\proj{x}\otimes\proj{y}\otimes \sigma^{xy}_E\;, \nn
 \ea
 where ${\delta}(\cdot,\cdot)$ is the Kronecker delta function, and the summation over $f$ is taken over the range of $F$.  When $F$ is a random function, the density matrix is averaged over the appropriate probability distribution.

We use $U_\Sigma$ to denote the uniform distribution over a set $\Sigma$. For $m$-bit string $\zo^m$, we abbreviate $U_{\zo^m}$ as $U_m$. For a classical random variable $X$, $U_X$ denote the uniform distribution over the range of $X$. 

For $p\geq 1$ we write $\norm{\cdot}_p$ for the Schatten $p$-norm (for a normal matrix, this is the $p$-norm of the vector of singular values). We write $\norm{\cdot}$ for the operator norm.  

We write $\approx_\eps$ to denote that two density matrices are $\eps$-close to each other in trace distance. For example,  $(X,E)_\rho \approx_\eps (U_X,E)_\rho$ means $\fot \norm{\rho_{XE}- U_X \otimes \rho_{E}}_1 \leq \eps$.
Note that in case both $X$ and $E$ are classical random variables, this reduces to the statistical distance. 

\subsection{Quantum information}

The min-entropy of a classical random variable $X$ conditioned on quantum side information $E$ is defined as follows.

\begin{definition}[Min-entropy]\label{def:min-entropy}
Let $\rho_{XE} \in \Density(\mH_X \otimes \mH_E)$ be a cq state. The \emph{min-entropy} of $X$ conditioned on $E$ is defined as
  \begin{equation*}
    \Hmin({X|E})_\rho = \max \{\lambda \geq 0 :  \exists \sigma_E \in \Pos(\mH_E), \tr\L(\sigma_E \R)\leq1, \, \mathrm{s.t.}\,\, 2^{-\lambda} I_X \otimes \sigma_E \geq \rho_{XE}\}.
  \end{equation*}
When the state $\rho$ with respect to which the entropy is measured is clear from context we simply write $\Hmin({X|E})$ for $\Hmin({X|E})_\rho$.	
\end{definition}

\begin{definition}[$(n,k)$-source]
A cq state $\rho_{XE}$ is an $(n,k)$-source if $n= \log d_X$ and $\Hmin({X|E}))_\rho \geq k$.
\end{definition}

It is often convenient to consider the \emph{smooth} min-entropy, which is obtained by maximizing the min-entropy over all cq states in an $\eps$-neighborhood of $\rho_{XE}$. The definition of this neighborhood depends on a choice of metric; the canonical choice is the ``purified distance''. Since this choice will not matter for us we defer to~\cite{tomamichel2015quantum} for a precise definition.

\begin{definition}\label{prelim:def:smooth-min-entropy}
  Let $\eps \geq 0$ and $\rho_{XE} \in \Density(\mH_X \otimes\mH_E)$ a cq state. The
  \emph{$\eps$-smooth min-entropy} of $X$ conditioned on $E$ is defined as
  \begin{equation*}
    \Hmin^\eps(X|E)_\rho = \max_{\sigma_{XE} \in \mathcal{B}(
      \rho_{XE},\eps) } \Hmin(X|E)_\sigma,
  \end{equation*}
	where $\mathcal{B}(
      \rho_{XE},\eps) $ is the ball of radius $\eps$ around $\rho_{XE}$, taken with respect to the purified distance.
\end{definition}

Rather than using Definition~\ref{def:min-entropy}, we will most often rely on an operational expression for the min-entropy stated in the following lemma from~\cite{minentropyguess}.

\begin{lemma}[Min-entropy and guessing probability]\label{lem:minentropyguess}
For a cq state $\rho_{XE}\in \Density(\mH_X\otimes\mH_E)$, the guessing probability is defined as the probability to correctly guess $X$ with the optimal strategy to measure $E$, i.e.
\ba
p_{guess}(X|E)_\rho= \sup_{\{M_x\}}\sum_x p_x \tr \L( M_x \rho^x_E\R)\;,
\ea  
where ${\{M_x\}}$ is a positive operator-valued measure (POVM) on $\mH_E$. Then the guessing probability is related to the min-entropy by 
\ba
p_{guess}(X|E)_\rho= 2^{-\Hmin(X|E)_\rho}\;.
\ea
\end{lemma}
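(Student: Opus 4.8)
The statement is the operational characterization of conditional min-entropy established by König, Renner and Schaffner~\cite{minentropyguess}. The plan is to prove it through semidefinite programming duality: realize $p_{guess}(X|E)_\rho$ as the optimal value of an SDP, compute the Lagrange dual, and recognize the dual as exactly the program defining $2^{-\Hmin(X|E)_\rho}$. To set up the primal, write $\rho_{XE}=\sum_x \proj{x}_X\otimes\rho^x_E$ with $\rho^x_E=p_x\tilde\rho^x_E\geq 0$, where $\tilde\rho^x_E$ is the normalized conditional state and $\sum_x p_x=1$. A guessing strategy is a POVM $\{M_x\}$ on $\mH_E$ indexed by the range of $X$ --- restricting to $d_X$ outcomes is without loss of generality, since coarse-graining a finer POVM never lowers the success probability --- and its success probability is $\sum_x p_x\tr(M_x\tilde\rho^x_E)=\sum_x\tr(M_x\rho^x_E)$. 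Thus
\ba
p_{guess}(X|E)_\rho = \max\Big\{\sum_x \tr(M_x\rho^x_E)\ :\ M_x\geq 0\ \forall x,\ \sum_x M_x=I_E\Big\},\nn
\ea
an SDP in $\{M_x\}$ with compact feasible set, so the maximum is attained.

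Next I would dualize. Introducing a Hermitian multiplier $\sigma_E$ for the constraint $\sum_x M_x=I_E$, the Lagrangian is $\tr\sigma_E+\sum_x\tr\big((\rho^x_E-\sigma_E)M_x\big)$, whose supremum over $M_x\geq 0$ equals $\tr\sigma_E$ when $\sigma_E\geq\rho^x_E$ for every $x$ and $+\infty$ otherwise. Hence the dual program is
\ba
\min\big\{\tr\sigma_E\ :\ \sigma_E\geq\rho^x_E\ \forall x\big\}=\min\big\{\tr\sigma_E\ :\ I_X\otimes\sigma_E\geq\rho_{XE}\big\},\nn
\ea
using that a block-diagonal operator is positive semidefinite iff each block is. Weak duality is the one-line bound $\sum_x\tr(M_x\rho^x_E)\leq\sum_x\tr(M_x\sigma_E)=\tr\sigma_E$. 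For strong duality I would invoke Slater's condition: $M_x=I_E/d_X$ is strictly feasible for the primal and the primal value is at most $1$, so the primal and dual optima coincide (and the dual optimum is attained).

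It then remains to identify the dual optimum with $2^{-\Hmin(X|E)_\rho}$ by a rescaling argument. Since $\sigma_E=\rho_E$ is dual-feasible with trace $1$, the dual optimum is some $t\leq 1$. Any dual-feasible $\sigma_E$ with $\tr\sigma_E=t\leq 1$ yields, setting $\lambda:=-\log t\geq 0$ and $\sigma'_E:=\sigma_E/t$ (so $\tr\sigma'_E=1$), the relation $2^{-\lambda}\,I_X\otimes\sigma'_E=I_X\otimes\sigma_E\geq\rho_{XE}$, which makes $\lambda$ feasible in Definition~\ref{def:min-entropy}; hence $\Hmin(X|E)_\rho\geq-\log t$. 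Conversely, if $\lambda\geq 0$ and a subnormalized $\sigma'_E$ witness the min-entropy, then $2^{-\lambda}\sigma'_E$ is dual-feasible with trace at most $2^{-\lambda}$. Combining the two inequalities, the dual optimum equals exactly $2^{-\Hmin(X|E)_\rho}$, and strong duality gives $p_{guess}(X|E)_\rho=2^{-\Hmin(X|E)_\rho}$.

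The only genuinely non-routine ingredient is strong duality. I would obtain it from the standard SDP duality theorem via the Slater point above; alternatively, for a self-contained treatment, one can derive it from a minimax argument for the bilinear map $(\{M_x\},\sigma_E)\mapsto\tr\sigma_E+\sum_x\tr((\rho^x_E-\sigma_E)M_x)$ over the compact set of POVMs and a suitable bounded set of candidate operators $\sigma_E$. Everything else --- reducing to $d_X$-outcome POVMs, weak duality, and the scaling identification with the min-entropy --- is routine bookkeeping.
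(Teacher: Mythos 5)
The paper states this lemma without proof, importing it directly from the cited reference \cite{minentropyguess} (K\"onig--Renner--Schaffner), whose original argument is precisely the semidefinite-programming duality you lay out. Your proposal is correct --- the primal/dual pair, the Slater point $M_x = I_E/d_X$, and the rescaling that matches the dual optimum to the $\lambda\geq 0$, $\tr(\sigma_E)\leq 1$ normalization of Definition~\ref{def:min-entropy} are all handled properly --- so it reproduces the standard proof of the cited result rather than deviating from anything in this paper.
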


\subsection{Extractors}

We first give the definition of a strong quantum-proof extractor. Recall the notation $(X,E)_\rho \approx_\eps (X',E')_\rho$ for $\fot \|\rho_{XE}-\rho_{X'E'}\|_1 \leq \eps$, and $U_m$ for a random variable uniformly distributed over $m$-bit strings.

\begin{definition}\label{def:ext}
 Let $k$ be an integer and $\eps\geq 0$.
  A function $\Ext: \mH_X \times \mH_Y \to \mH_Z$ is a
 strong $(k,\eps)$ quantum-proof extractor if for all cq
  states $\rho_{XE}\in\Density(\mH_X\otimes \mH_E)$ with $\Hmin(X|E) \geq k$,
  and for a classical uniform $Y\in \mH_Y$ independent of $\rho_{XE}$,
$$  (\Ext(X,Y),Y,E)_\rho \approx_\eps (U_Z,Y,E)_\rho\;.$$
 \end{definition}

There are known explicit constructions of strong quantum-proof extractors. 

\begin{theorem}[\cite{leftoverhash}]\label{thm:quantum-LHL} 
For any integers $d_X,k$ and for any $\eps > 0$ there exists an explicit strong $(k,\eps)$ quantum-proof extractor $\Ext \colon \{0,\ldots,d_X-1\} \times \{0,\ldots,d_Y-1\} \to \{0,\ldots,d_Z-1\}$ with $\log d_Y = O(\log d_X)$ and $\log d_Z = k - O(\log (1/\eps) - O(1)$.
\end{theorem}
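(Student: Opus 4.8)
The plan is to instantiate $\Ext$ with a two-universal family of hash functions and then invoke the quantum leftover hash lemma of~\cite{leftoverhash}. Concretely, set $n=\lceil\log d_X\rceil$ and identify $\{0,\dots,d_X-1\}$ with a subset of $\F_{2^n}$; let the seed $y$ range over $\F_{2^n}$, and for an output length $m$ to be fixed below define $\Ext(x,y)$ to be the first $m$ bits of the product $x\cdot y$ computed in $\F_{2^n}$, so $d_Z=2^m$. This family is two-universal and explicitly computable, and its seed alphabet has size $d_Y=2^n$, so $\log d_Y=O(\log d_X)$ as required; it then remains only to choose $m$.

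The second step is to record the core estimate. Fix a cq source $\rho_{XE}$ with $\Hmin(X|E)_\rho\ge k$, and write $Z=\Ext(X,Y)$. The quantum leftover hash lemma of~\cite{leftoverhash} states that for any two-universal family with $m$-bit output,
\[
  \fot\,\onenorm{\rho_{ZYE}-U_Z\otimes U_Y\otimes\rho_E}\ \le\ \fot\,\sqrt{2^{\,m-\Hmin(X|E)_\rho}}\,,
\]
where the seed register $Y$ is carried on both sides, which is exactly the strong-extractor requirement of Definition~\ref{def:ext}. Taking $m=\lfloor k-2\logeps\rfloor$ makes the right-hand side at most $\eps$, and then $\log d_Z=m=k-O(\logeps)-O(1)$, finishing the parameter count.

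The single step that genuinely uses quantumness --- and the main obstacle --- is the passage from the trace norm to a Hilbert--Schmidt norm: one cannot bound $\onenorm{M}$ by $\sqrt{\rank M}\cdot\norm{M}_2$ for $M=\rho_{ZE}-U_Z\otimes\rho_E$ without paying a spurious factor of $d_E$. The resolution (due to Renner) is to measure the deviation in a $\sigma_E$-weighted Hilbert--Schmidt inner product, where $\sigma_E$ is the operator witnessing the min-entropy via $2^{-\Hmin(X|E)_\rho}\,I_X\otimes\sigma_E\ge\rho_{XE}$ (Definition~\ref{def:min-entropy}); Cauchy--Schwarz over the $Z$-register then costs only $\sqrt{d_Z}$, and expanding the square and invoking two-universality bounds the off-diagonal ``collision'' contribution by $2^{-\Hmin(X|E)_\rho}$ while the diagonal term supplies the factor $2^m$. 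A minor point is that two-universality controls only the \emph{average} of the collision term over seeds, but this suffices here because $\fot\,\onenorm{\rho_{ZYE}-U_Z\otimes U_Y\otimes\rho_E}$ is itself the average over $y$ of $\fot\,\onenorm{\rho^y_{ZE}-U_Z\otimes\rho_E}$ (a guarantee for a single fixed good seed would need an extra Markov step). Since all of this is carried out in~\cite{leftoverhash}, the proof reduces to verifying that the field-multiplication hash family is two-universal and has the claimed seed length.
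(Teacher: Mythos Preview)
The paper does not give its own proof of this theorem: it is stated in the preliminaries as a known result from~\cite{leftoverhash} and is simply cited. Your proposal is a correct sketch of the standard argument from that reference --- the field-multiplication family is two-universal with seed length $\log d_Y=\lceil\log d_X\rceil$, and the quantum leftover hash lemma then yields the stated output length with $m=k-2\logeps-O(1)$ --- so there is nothing to compare and no gap to flag.
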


%

We use the same definition of non-malleable extractor against quantum side information that was introduced in the work~\cite{CV16}. The definition is a direct generalization of the classical notion of non-malleable extractor introduced in~\cite{DW09}. The first step is to extend the notion that the adversary may query the extractor on any \emph{different} seed $Y'$ than the seed $Y$ actually used to the case where $Y'$ may be generated from $Y$ as well as quantum side information held by the adversary.

%

\begin{definition}[Map with no fixed points]\label{def:nfp}
Let $\mH_Y$, $\mH_E$ and $\mH_{E'}$ denote finite-dimensional Hilbert spaces.
We say that a CPTP map $T:\Lin(\mH_Y\otimes \mH_E)\to \Lin(\mH_Y\otimes \mH_{E'})$ has \emph{no fixed points} if for all $\rho_E\in \Density(\mH_E)$ and all computational basis states $\ket{y}\in \mH_Y$ it holds that
$$ \bra{y}_Y \,\Tr_{\mH_{E'}}\big( T\big(\proj{y}_Y\otimes \rho_E\big)\big)\,\ket{y}_Y\,  =\, 0\;.$$
\end{definition}

The following definition is given in~\cite{CV16}:

\begin{definition}[Non-mallleable extractor]\label{def:nmext}
Let $\mH_X$, $\mH_Y$, $\mH_Z$ be finite-dimensional Hilbert spaces, of respective dimension $d_X$, $d_Y$, and $d_Z$. Let $k\leq \log d_X$ and $\eps>0$. A function 
$$\nmExt \,:\, \{0,\ldots,d_X-1\}\times\{0,\ldots,d_Y-1\} \to \{0,\ldots,d_Z-1\}$$
is a $(k,\eps)$ quantum-proof non-malleable extractor if for every cq-state $(X,E)_\rho$ on  $\mH_X \otimes \mH_E$ such that $\Hmin(X|E)_\rho \geq k$ and any CPTP map $\Adv: \Lin(\mH_Y\otimes \mH_E) \to \Lin(\mH_Y \otimes \mH_{E'})$ with no fixed points,
$$ \big\| \sigma_{\nmExt(X,Y) \nmExt(X,Y') YY' E'} - U_Z \otimes \sigma_{\nmExt(X,Y') YY'E'}\big)\big\|_1 \leq \eps\;,$$
where
\begin{equation}\label{eq:adv-state}
 \sigma_{YY'XE'} \,=\, \frac{1}{d_Y}\sum_{y} \proj{y}_Y \otimes (I_X \otimes \Adv)( \proj{y}_Y \otimes \rho_{XE} )
\end{equation}
and $\sigma_{\nmExt(X,Y) \nmExt(X,Y') YY' E'}$ is obtained from $\sigma_{YY'XE'}$ by (classically) computing $\nmExt(X,Y)$ and $\nmExt(X,Y')$ in ancilla registers and tracing out $X$.
\end{definition}

\subsection{H\"{o}lder's inequality  } 
We use the following H\"{o}lder's inequality for matrices. For a proof, see e.g. \cite{Bhatia}.
\begin{lemma}[H\"{o}lder's inequality]\label{lem:holder}
For any $n\times n$ matrices $A$, $B$, $C$ with complex entries, and real numbers $r,s,t > 0$ satisfying $\oo{r}+\oo{s}+\oo{t}=1$,
\ba
\onenorm{ABC}  \leq \onenorm{|A|^r}^{1/r} \onenorm{|B|^s}^{1/s} \onenorm{|C|^t}^{1/t} \;.
\ea
\end{lemma}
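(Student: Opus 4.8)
The plan is to reduce the matrix H\"{o}lder inequality to the scalar H\"{o}lder inequality applied to the singular values, bridging the two by a multiplicative majorization that relates the singular values of the product $ABC$ to the products $\sigma_i(A)\sigma_i(B)\sigma_i(C)$ of the individual singular values (each sequence sorted in decreasing order). First I would record the reduction to Schatten norms: since $|A|=(A^*A)^{1/2}$ is positive semidefinite with eigenvalues equal to the singular values $\sigma_1(A)\ge\sigma_2(A)\ge\cdots$ of $A$, the matrix $|A|^r$ is positive semidefinite with eigenvalues $\sigma_i(A)^r$, so $\onenorm{|A|^r}=\tr(|A|^r)=\sum_i\sigma_i(A)^r$ and hence $\onenorm{|A|^r}^{1/r}=\big(\sum_i\sigma_i(A)^r\big)^{1/r}=:\norm{A}_r$ is the Schatten $r$-norm (similarly for $(B,s)$ and $(C,t)$). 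Since $\oo{r}+\oo{s}+\oo{t}=1$ with $r,s,t>0$, each of $r,s,t$ is $>1$, so these are genuine norms and the scalar H\"{o}lder inequality will apply below. Thus it suffices to prove $\onenorm{ABC}\le\norm{A}_r\,\norm{B}_s\,\norm{C}_t$.

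Next I would invoke Horn's multiplicative inequality for singular values of a product: for any $n\times n$ matrices $X,Y$ and every $k\le n$, $\prod_{i=1}^k\sigma_i(XY)\le\prod_{i=1}^k\sigma_i(X)\sigma_i(Y)$. Applying this with $(X,Y)=(A,BC)$ and then with $(X,Y)=(B,C)$, and multiplying the two nonnegative bounds, gives $\prod_{i=1}^k\sigma_i(ABC)\le\prod_{i=1}^k\sigma_i(A)\sigma_i(B)\sigma_i(C)$ for every $k$; that is, $(\sigma_i(ABC))_i$ is weakly log-majorized by $(\sigma_i(A)\sigma_i(B)\sigma_i(C))_i$, the latter sequence being automatically decreasing. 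By the standard implication ``weak log-majorization $\Rightarrow$ weak majorization'' for nonnegative sequences (see e.g.\ \cite{Bhatia}), summing all $n$ coordinates yields $\onenorm{ABC}=\sum_i\sigma_i(ABC)\le\sum_i\sigma_i(A)\sigma_i(B)\sigma_i(C)$. Finally, the scalar H\"{o}lder inequality for three nonnegative sequences with exponents $r,s,t$ (two applications of the two-sequence inequality, or weighted AM--GM) gives $\sum_i\sigma_i(A)\sigma_i(B)\sigma_i(C)\le\big(\sum_i\sigma_i(A)^r\big)^{1/r}\big(\sum_i\sigma_i(B)^s\big)^{1/s}\big(\sum_i\sigma_i(C)^t\big)^{1/t}=\norm{A}_r\norm{B}_s\norm{C}_t$, and chaining the inequalities finishes the proof. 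The rank-deficient case (some $\sigma_i=0$) is covered directly by the majorization lemmas, or can be obtained by replacing $A,B,C$ by $A+\delta I,B+\delta I,C+\delta I$ and letting $\delta\to 0^+$.

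The hard part, and essentially the only non-routine ingredient, is Horn's inequality. If I did not simply cite it, I would derive it from antisymmetric tensor powers: $\prod_{i=1}^k\sigma_i(M)=\norm{\Lambda^k M}$ (the operator norm of the $k$-th compound of $M$), together with $\Lambda^k(XY)=(\Lambda^k X)(\Lambda^k Y)$ and submultiplicativity of the operator norm, which immediately gives $\prod_{i\le k}\sigma_i(XY)=\norm{\Lambda^k(XY)}\le\norm{\Lambda^k X}\,\norm{\Lambda^k Y}=\big(\prod_{i\le k}\sigma_i(X)\big)\big(\prod_{i\le k}\sigma_i(Y)\big)$. An alternative route avoiding majorization altogether is complex interpolation --- the three-lines lemma applied to an entire matrix-valued family built from the polar decompositions of $A,B,C$ --- but matching the three distinct Schatten exponents $\oo{r},\oo{s},\oo{t}$ makes the setup fussier than the majorization argument, so I would present the latter as the main proof.
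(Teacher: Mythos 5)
The paper does not prove this lemma at all; it simply cites Bhatia's \emph{Matrix Analysis}, so there is no in-paper argument to compare against. Your proof is correct and is in fact essentially the standard textbook derivation one finds in that reference: reduce $\onenorm{|A|^r}^{1/r}$ to the Schatten norm $\bigl(\sum_i\sigma_i(A)^r\bigr)^{1/r}$, use Horn's multiplicative singular-value inequality (proved via antisymmetric tensor powers) to get the weak log-majorization of $\bigl(\sigma_i(ABC)\bigr)_i$ by $\bigl(\sigma_i(A)\sigma_i(B)\sigma_i(C)\bigr)_i$, pass from weak log-majorization to weak majorization, and finish with the three-sequence scalar H\"older inequality. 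All the ingredients you invoke are standard and correctly applied (including the observation that $r,s,t>1$ so the Schatten norms and scalar H\"older make sense), and the zero-singular-value case is indeed absorbed by the majorization lemmas, so the perturbation remark is unnecessary but harmless. In short: a complete, correct proof of a statement the paper outsources to a reference.
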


\section{Quantum XOR lemma} \label{sec:xor}

In this section we prove two XOR lemmas with quantum side information. First we prove a more standard version, Lemma~\ref{lem:xor}, in Section~\ref{subsection:xor}. We prove a  non-uniform version, Lemma~\ref{lem:xor2}, in Section~\ref{subsection:xor2}. Lemma~\ref{lem:xor} is not actually used in the analysis of our non-malleable extractor, but we include it as a warm-up to our non-uniform XOR lemma. Since XOR lemmas often play a fundamental role, both lemma might be of independent interest. The proofs are based on quantum collision probability techniques\footnote{The term ``quantum collision probability'' is ours.} from~\cite{leftoverhash} to transform a classical collision probability-based proof into one that also allows for quantum side information.
 When restricted to $\F_2$, our standard XOR lemma, Lemma~\ref{lem:xor}, is very similar to Lemma~10 of \cite{KK12}, although the result from~\cite{KK12} provides a tighter bound in this case.\footnote{\cite{KK12} provides a bound of $p^{2t}\eps^2$ (for $p=2$), while ours scales as $p^t \eps$, a quadratic loss.} However our result applies to $\F_p$, while it is unclear whether the proof of \cite{KK12} generalizes to $p>2$. Our result is based on the use of the collision probability, while \cite{KK12} uses Fourier analysis. The idea of non-uniform XOR lemma is natural in the context of non-malleable extractors, and has been explored in \cite{Li12a,CRS12,AHL16}.  Our non-uniform XOR lemma generalizes a restricted version of Lemma 3.15 of \cite{Li12a} to $\F_p$ with quantum side information.\footnote{Compared to~\cite[Lemma 3.15]{Li12a} we have $m=1$ and $n=t$.}

The quantum collision probability is defined as follows. 

\begin{definition} [Quantum collision probability]
Let $\rho_{AB} \in \Density(\mH_A\otimes \mH_B)$ and $\sigma_B \in \Density(\mH_B)$. The collision probability of $\rho_{AB}$, conditioned on $\sigma_B$, is defined as
\ba
\Gamma_c(\rho_{AB}|\sigma_B) \equiv \tr\L(\rho_{AB} (I_A\otimes \sigma_B^{-1/2}) \R)^2
\;,\ea
where $\sigma_B \in \Density(\mH_B)$.
\end{definition}
A careful reader might notice that $\Gamma_c \leq 1$ is not generally true, so calling $\Gamma_c$ collision \emph{probability} seems misleading. We give a general definition which allows arbitrary states $\rho_{AB}$ and $\sigma_B$ to match the existing literature, but here we always consider cq states $\rho_{AB}$ and take $\sigma_B=\rho_B$. We prove in Corollary~\ref{cor:gamma-ub} that $\Gamma_c \leq 1$ in such cases. $\Gamma_c(\rho_{AB}|\sigma_B)$ also reduces to the classical collision probability when both of $A,B$ are classical and $\sigma_B=\rho_B$.

{We will often use the following relation, also taken from~\cite{leftoverhash}, valid for any $\rho_{AB} \in \Density(\mH_A\otimes \mH_B)$: 
\begin{equation}\label{eq:trace-to-collision}
\tr\L((\rho_{AB}-U_A \otimes \rho_B )(I_A\otimes \rho_B^{-1/2})\R)^2  \,=\,\Gamma_c\big(\rho_{AB}|\rho_B\big) - \oo{d_A}\;,
\end{equation}
which can be verified by expanding the square: 
\begin{align*}
&\tr\Big((\rho_{AB}-U_A \otimes \rho_B )(I_A\otimes \rho_B^{-1/2})\Big)^2  \\
&= \tr\L(\rho_{AB} \, \rho_B^{-1/2}\R)^2 - 2 \tr \L(\rho_{AB} \, \rho_B^{-1/2} (U_A \rho_B) \rho_B^{-1/2}\R) +\tr\L((U_A \rho_B)  \rho_B^{-1/2}\R)^2 \\
&=\Gamma_c(\rho_{AB}|\rho_B) - \oo{d_A}\;.
\end{align*}} 

\subsection{Standard quantum XOR lemma}\label{subsection:xor}

In this section, we prove our quantum generalization of the standard XOR lemma, Lemma~\ref{lem:xor}, which roughly states that given a random variable $X\in \F_p^t$, if $\vev{a,X}$ is close to uniform for all $a\neq 0$, then $X$ is close to uniform.

\begin{lemma}[XOR lemma]\label{lem:xor}
Let $p$ be a prime power, $t$ an integer and $\eps\geq 0$. Let $X=(X_1,\dots,X_t)\in \F^t_p$ be a classical random variable, and $E$ some correlated quantum variable. For all $a=(a_1,\dots,a_t)\in \F^t_p$, define a random variable $Z=\vev{a,X}=\sum_{i=1}^t a_i X_i$. If for all $a \neq 0$,  $\fot \onenorm{\rho^a_{ZE}-U_Z\otimes \rho_E}\leq \epsilon$, then
\ba
\fot \big\|\rho_{XE}-U_X\otimes \rho_E\big\|_1\, \leq \,\frac{p^{t/2}}{\sqrt 2}\sqrt \epsilon\;.
\ea
\end{lemma}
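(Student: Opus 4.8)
The plan is to convert the trace-distance bound into a statement about the quantum collision probability $\Gamma_c(\rho_{XE}\mid \rho_E)$ and then estimate that quantity by decomposing it over the Fourier-type characters $a\in\F_p^t$. First I would invoke the identity \eqref{eq:trace-to-collision} with $A=X$, $B=E$, which tells us that bounding $\fot\|\rho_{XE}-U_X\otimes\rho_E\|_1$ reduces (via Cauchy--Schwarz / H\"older, i.e. $\|M\|_1 \le \sqrt{d_X}\,\|M\|_2$ applied to $M = (\rho_{XE}-U_X\otimes\rho_E)(I_X\otimes\rho_E^{-1/4})$-type expressions, as in~\cite{leftoverhash}) to showing that $\Gamma_c(\rho_{XE}\mid\rho_E) - \frac{1}{p^t}$ is small. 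So the heart of the argument is an upper bound of the form $\Gamma_c(\rho_{XE}\mid\rho_E) \le \frac{1}{p^t}\big(1 + (\text{sum of }2\eps\text{-type terms over }a\neq 0)\big)$.

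The key step is a ``Fourier/Parseval over $\F_p^t$'' decomposition. Classically, $\sum_{x} P(x)^2 = \frac{1}{p^t}\sum_{a}|\widehat{P}(a)|^2$, and the $a=0$ term contributes $\frac{1}{p^t}$ while each $a\neq 0$ term is controlled by the bias of $\langle a,X\rangle$. The quantum analogue, following the collision-probability technique of~\cite{leftoverhash}, replaces $P(x)$ by the operator $\rho_E^{-1/4}\rho_E^x\rho_E^{-1/4}$ and sums $\Gamma_c(\rho_{XE}\mid\rho_E) = \tr\big(\sum_x \rho_E^{-1/4}\rho_E^x\rho_E^{-1/4}\big)^2$; I would introduce the (non-normalized) ``character-twisted'' operators $\sigma_E^a := \sum_x \omega^{-\langle a,x\rangle}\,\rho_E^{-1/4}\rho_E^x\rho_E^{-1/4}$ for $\omega$ a primitive $p$-th root of unity (this is exactly the construction underlying $\rho^a_{ZE}$, pushed through the $\rho_E^{-1/4}$ conjugation), establish the Parseval-type identity $\Gamma_c(\rho_{XE}\mid\rho_E) = \frac{1}{p^t}\sum_{a\in\F_p^t}\tr\big((\sigma_E^a)^\dagger \sigma_E^a\big)$, note that $\sigma_E^0$ gives precisely the $\frac{1}{p^t}$ term, and bound each $a\neq0$ term by the collision-probability-to-trace-distance conversion applied to $\rho^a_{ZE}$, which by hypothesis is $2\eps$-close to $U_Z\otimes\rho_E$ and so contributes $O(\eps)$ (after using \eqref{eq:trace-to-collision} backwards for the single-coordinate variable $Z$, and $\Gamma_c \le 1$ from Corollary~\ref{cor:gamma-ub}). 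Summing the $p^t-1$ bad terms and the one good term yields $\Gamma_c(\rho_{XE}\mid\rho_E) - \frac{1}{p^t} \le \frac{p^t-1}{p^t}\cdot(\text{something}\lesssim \eps) \le \eps$, and then $\fot\|\rho_{XE}-U_X\otimes\rho_E\|_1 \le \frac{\sqrt{d_X}}{2}\sqrt{\Gamma_c - 1/d_X} \le \frac{p^{t/2}}{\sqrt2}\sqrt\eps$.

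I expect the main obstacle to be the noncommutativity bookkeeping in the quantum Parseval step: verifying that conjugating by $\rho_E^{-1/4}$ on each side interacts cleanly with the character sum, that cross terms $\tr\big((\sigma_E^a)^\dagger\sigma_E^b\big)$ with $a\neq b$ vanish (this uses $\sum_{x}\omega^{\langle a-b,x\rangle} = 0$, but one must be careful since the $\rho_E^x$ do not commute — the orthogonality is in the character index, not among the operators, so it should still go through as a straightforward index computation), and that the reduction from the $a\neq 0$ summands to the hypothesis on $\rho^a_{ZE}$ is clean given that $Z = \langle a, X\rangle$ collapses the $p^t$ values of $X$ onto the $p$ values of $Z$ (so one must track that $\sigma_E^a$ for a fixed $a\neq0$ equals the corresponding twisted operator built from the cq state $\rho_{ZE}$, a consistency check between the $X$-level and $Z$-level Fourier transforms). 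The constant $p^{t/2}/\sqrt2$ should fall out of matching $d_X = p^t$ and the factor $2$ from $\eps$ vs. $2\eps$ in the trace-distance normalization; I would double-check that the hypothesis is stated with $\fot\|\cdot\|_1 \le \eps$ (it is) so that the bias that enters the collision bound is $\le 2\eps$ per coordinate, and confirm the final $\sqrt\eps$ rather than $\sqrt{2\eps}$ survives after the $\sqrt2$ in the denominator.
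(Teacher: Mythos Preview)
Your approach is essentially the paper's, phrased in the Fourier dual picture. Both proofs reduce to bounding $\Gamma_c(\rho_{XE}\mid\rho_E)-p^{-t}$, establish an identity linking it to the quantities $\Gamma_c(\rho^a_{ZE}\mid\rho_E)-p^{-1}$ via character orthogonality over $\F_p^t$, and then invoke Lemma~\ref{lem:2qb} in both directions. The paper does this by expanding $\Gamma_c(\rho^a_{ZE}\mid\rho_E)-p^{-1}$ directly in terms of the blocks $\rho_E^x$ and averaging over $a$ using $\E_a[\delta(\langle a,x-x'\rangle,0)]=p^{-1}$ for $x\neq x'$, arriving at the exact identity $\Gamma_c(\rho_{XE}\mid\rho_E)=\frac{p}{p-1}\,\E_a\big[\Gamma_c(\rho^a_{ZE}\mid\rho_E)-p^{-1}\big]$ (Eq.~\eqref{eq:xorcollision}); your Parseval identity $\Gamma_c(\rho_{XE}\mid\rho_E)=p^{-t}\sum_a\tr\big((\sigma_E^a)^\dagger\sigma_E^a\big)$ is the same computation with complex characters in place of indicator functions. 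Your worry about noncommutativity in the Parseval step is unfounded: the orthogonality lives entirely in the scalar coefficients $\omega^{\langle a,x\rangle}$, so the operators $\rho_E^x$ never need to commute.

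One point to watch is the ``consistency check'' you flag between the $X$-level and $Z$-level transforms, which hides a factor of $p-1$. Your $\sigma_E^a$ is only the $u=1$ Fourier coefficient of $\rho^a_{ZE}$, while $\Gamma_c(\rho^a_{ZE}\mid\rho_E)-p^{-1}=\frac{1}{p}\sum_{u\neq 0}\tr\big((\sigma_E^{ua})^\dagger\sigma_E^{ua}\big)$ involves all of them. Bounding each $\tr\big((\sigma_E^a)^\dagger\sigma_E^a\big)$ individually by $p\big(\Gamma_c(\rho^a_{ZE}\mid\rho_E)-p^{-1}\big)\le 2\eps(p-1)$ and then summing over $a\neq 0$ overcounts and yields $p^{(t+1)/2}/\sqrt 2$ rather than $p^{t/2}/\sqrt 2$. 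You recover the stated constant by summing first: as $(a,u)$ ranges over $(\F_p^t\setminus\{0\})\times\F_p^*$ the product $ua$ hits each nonzero vector exactly $p-1$ times, so $\sum_{a\neq 0}\big(\Gamma_c(\rho^a_{ZE}\mid\rho_E)-p^{-1}\big)=\frac{p-1}{p}\sum_{b\neq 0}\tr\big((\sigma_E^b)^\dagger\sigma_E^b\big)$. This is precisely the $\frac{p}{p-1}$ factor that appears automatically in the paper's averaging identity~\eqref{eq:xorcollision}, which is the small advantage of the paper's real-indicator formulation over the explicit-character one.
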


To prove Lemma~\ref{lem:xor}, we  bound the collision probability as a function of the trace distance in Lemma~\ref{lem:2qb}. To prove Lemma~\ref{lem:2qb} we use the following lemma for ``conditioned'' cq states.
\begin{lemma}\label{lem:eig} For a cq state $\rho_{XE}$ with reduced density matrix $\rho_E$,
\ba
0\leq \L (I_X\otimes \rho_E^{-1/2}\R) \rho_{XE} \L (I_X\otimes \rho_E^{-1/2}\R) \leq  I_{XE}\;.
\ea
\end{lemma}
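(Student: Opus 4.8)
The plan is to prove this spectral bound by a purification argument, reducing the inequality on the ``weighted'' state to an obvious positivity statement. First I would write $\rho_{XE} = \sum_x \proj{x}_X \otimes \rho_E^x$ with each $\rho_E^x \ge 0$, so that $\rho_E = \sum_x \rho_E^x$. Passing to a purification, fix a purifying system $R$ and write $\rho_{XE} = \Tr_R \proj{\psi}$ for some $\ket{\psi} \in \mH_X \otimes \mH_E \otimes \mH_R$; because $\rho_{XE}$ is classical on $X$, we may take $\ket{\psi} = \sum_x \ket{x}_X \otimes \ket{\phi_x}_{ER}$ with $\Tr_R \proj{\phi_x}_{ER} = \rho_E^x$. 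The operator $M := (I_X \otimes \rho_E^{-1/2}) \rho_{XE} (I_X \otimes \rho_E^{-1/2})$ is manifestly positive semidefinite (it is a congruence of the PSD operator $\rho_{XE}$, with the convention that $\rho_E^{-1/2}$ denotes the pseudo-inverse on the support of $\rho_E$, and $\rho_{XE}$ is supported inside $I_X \otimes \supp(\rho_E)$), which gives the left-hand inequality.

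For the upper bound $M \le I_{XE}$, the key step is to exhibit $M$ as the reduced state, up to the right normalization, of a pure state living on $\mH_X \otimes \mH_E$ together with an ancilla, and to use that any such reduced ``Gram-type'' operator has norm at most $1$. Concretely, consider the vectors $\ket{v_x} := (I_X \otimes \rho_E^{-1/2}) (\ket{x}_X \otimes \ket{\phi_x}_{ER}) \in \mH_X \otimes \mH_E \otimes \mH_R$. Then $M \otimes$(something) relates to $\sum_x \proj{v_x}$; more usefully, I would check directly that for any unit vector $\ket{\xi}_{XE}$,
\ba
\bra{\xi} M \ket{\xi} = \bra{\xi}(I_X\otimes \rho_E^{-1/2})\rho_{XE}(I_X\otimes \rho_E^{-1/2})\ket{\xi} = \big\| \big(\langle \xi|\otimes I_R\big)\, (I_X\otimes \rho_E^{-1/2}\otimes I_R)\ket{\psi}\big\|^2 . \nn
\ea
Writing $\ket{\eta} := (I_X \otimes \rho_E^{-1/2} \otimes I_R)\ket{\psi}$, one computes $\Tr_R \proj{\eta}_{XER}$ restricted appropriately; the point is that $\|\eta\|^2 = \Tr\big((I_X\otimes \rho_E^{-1})\rho_{XE}\big) = \Tr\big(\rho_E^{-1}\rho_E\big) = \rank(\rho_E) =: r$, so $\ket{\eta}/\sqrt r$ is a unit vector and $\bra{\xi}M\ket{\xi} = r\,\|(\bra{\xi}\otimes I_R)\ket{\eta}/\sqrt r \cdot \sqrt r\|^2$ — this normalization bookkeeping is where care is needed, and it is the only real subtlety.

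The cleanest route, which I would actually carry out, avoids the purification normalization issue entirely: observe that $I_X \otimes \rho_E^{-1/2}$ commutes with nothing in particular, but $\rho_{XE} \le I_X \otimes \rho_E$ holds because $\rho_E^x \le \rho_E$ for every $x$ (as $\rho_E - \rho_E^x = \sum_{x'\neq x}\rho_E^{x'} \ge 0$), and hence $\rho_{XE} = \sum_x \proj{x}\otimes\rho_E^x \le \sum_x \proj{x}\otimes \rho_E = I_X \otimes \rho_E$. Conjugating this operator inequality by the PSD operator $I_X \otimes \rho_E^{-1/2}$ preserves it, yielding
\ba
(I_X\otimes \rho_E^{-1/2})\,\rho_{XE}\,(I_X\otimes \rho_E^{-1/2}) \;\le\; (I_X\otimes \rho_E^{-1/2})\,(I_X\otimes \rho_E)\,(I_X\otimes \rho_E^{-1/2}) \;=\; I_X \otimes \Pi_E \;\le\; I_{XE}, \nn
\ea
where $\Pi_E$ is the projection onto $\supp(\rho_E)$. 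Combined with the positivity from the first paragraph, this gives both inequalities. The main obstacle is purely a matter of being careful about pseudo-inverses: one must check that $\rho_{XE}$ is supported on $I_X \otimes \supp(\rho_E)$ so that the congruence by $\rho_E^{-1/2}$ loses nothing and the bound $I_X \otimes \Pi_E \le I_{XE}$ is the correct, tight statement. I would state this support fact explicitly (it follows from $0 \le \rho_E^x \le \rho_E$) before performing the conjugation.
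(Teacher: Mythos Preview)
Your ``cleanest route'' is correct and is essentially the paper's argument: the paper observes that the operator is block diagonal and bounds each block via $0 \le \rho_E^{-1/2}\rho_E^x\rho_E^{-1/2} \le \rho_E^{-1/2}\rho_E\rho_E^{-1/2} \le I_E$, which is exactly your inequality $\rho_{XE} \le I_X\otimes\rho_E$ conjugated by $I_X\otimes\rho_E^{-1/2}$. The purification detour in your first two paragraphs is unnecessary and can be dropped.
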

\begin{proof}
 We proceed by bounding the eigenvalues of the middle term. Using the fact that $\rho_{XE}$ is a cq state, we observe that 
\ba
\rho_E^{-1/2} \rho_{XE} \rho_E^{-1/2}=\sum_x \proj{x} \rho_E^{-1/2} \rho^x_{E} \rho_E^{-1/2}
\ea
is block diagonal. Combining with the fact that for all states $ \;\ket{\phi} \in \mH_E $ and $x$ in the range of $X$, 
\ba
0 \leq \vev{\phi|\rho_E^{-1/2} \rho^x_{E} \rho_E^{-1/2}|\phi} \leq  \vev{\phi|\rho_E^{-1/2} \rho_{E} \rho_E^{-1/2}|\phi} \leq 1\;.
\ea
We conclude that the eigenvalues of $\rho_E^{-1/2} \rho_{XE} \rho_E^{-1/2}$ are between 0 and 1.
\end{proof}
As a corollary of Lemma~\ref{lem:eig}, we can upper bound the quantum collision probability by $1$ in certain settings.
\begin{corollary}[Upper bound of quantum collision probability]\label{cor:gamma-ub} For any cq state $\rho_{XE}$,
\ba
\Gamma_c(\rho_{XE}|\rho_E) \leq 1 \, .
\ea
\end{corollary}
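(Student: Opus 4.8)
The plan is to deduce Corollary~\ref{cor:gamma-ub} directly from Lemma~\ref{lem:eig}, using the self-adjointness of the operator $M := (I_X\otimes\rho_E^{-1/2})\rho_{XE}(I_X\otimes\rho_E^{-1/2})$ together with the cyclicity of the trace. First I would note that by Lemma~\ref{lem:eig} we have $0\le M\le I_{XE}$, so in particular $M^2\le M$ (since for a positive operator with eigenvalues in $[0,1]$, squaring only decreases eigenvalues), and hence $\Tr(M^2)\le \Tr(M)$.

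Next I would unwind both sides. On the one hand, $\Tr(M)=\Tr\big((I_X\otimes\rho_E^{-1/2})\rho_{XE}(I_X\otimes\rho_E^{-1/2})\big)=\Tr\big(\rho_{XE}(I_X\otimes\rho_E^{-1})\big)$ by cyclicity, and since $\rho_{XE}$ is a cq state with $\rho_E=\sum_x\rho_E^x$ one checks $\Tr\big(\rho_{XE}(I_X\otimes\rho_E^{-1})\big)=\sum_x\Tr(\rho_E^x\rho_E^{-1})=\Tr(\rho_E\rho_E^{-1})=1$ (where $\rho_E^{-1}$ is the pseudo-inverse on the support of $\rho_E$, and one should be slightly careful that $\supp(\rho_E^x)\subseteq\supp(\rho_E)$, which holds because $\rho_E^x\le\rho_E$). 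On the other hand, $\Tr(M^2)=\Tr\big((I_X\otimes\rho_E^{-1/2})\rho_{XE}(I_X\otimes\rho_E^{-1/2})(I_X\otimes\rho_E^{-1/2})\rho_{XE}(I_X\otimes\rho_E^{-1/2})\big)$, which by cyclicity equals $\Tr\big(\rho_{XE}(I_X\otimes\rho_E^{-1})\rho_{XE}(I_X\otimes\rho_E^{-1})\big)=\Tr\big(\rho_{XE}(I_A\otimes\rho_E^{-1/2})\big)^2$ in the notation of the definition, i.e. exactly $\Gamma_c(\rho_{XE}\,|\,\rho_E)$. Combining, $\Gamma_c(\rho_{XE}\,|\,\rho_E)=\Tr(M^2)\le\Tr(M)=1$.

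Alternatively, and perhaps more cleanly, I would write $\Gamma_c(\rho_{XE}|\rho_E)=\Tr(M^2)=\|M\|_2^2\le \|M\|\cdot\|M\|_1 = \|M\|\cdot\Tr(M)$, using that $M\ge 0$ so its Schatten-$1$ norm is its trace, and then invoke $\|M\|\le 1$ from Lemma~\ref{lem:eig} and $\Tr(M)=1$ from the computation above; this avoids the operator inequality $M^2\le M$ altogether.

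The only real subtlety — not so much an obstacle as a point to state carefully — is the handling of the (pseudo-)inverse $\rho_E^{-1/2}$ when $\rho_E$ is not full rank: all the manipulations take place on $\supp(\rho_E)$, and the inclusion $\supp(\rho_E^x)\subseteq\supp(\rho_E)$ for each $x$ is what makes $\Tr(\rho_{XE}(I_X\otimes\rho_E^{-1}))=1$ rather than merely $\le 1$. Everything else is a routine trace computation, so I expect the proof to be two or three lines once Lemma~\ref{lem:eig} is in hand.
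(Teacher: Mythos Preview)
Your identification $\Gamma_c(\rho_{XE}|\rho_E)=\Tr(M^2)$ is incorrect, and this is where the argument breaks. By definition
\[
\Gamma_c(\rho_{XE}|\rho_E)=\Tr\big(\rho_{XE}(I_X\otimes\rho_E^{-1/2})\rho_{XE}(I_X\otimes\rho_E^{-1/2})\big)=\Tr(\rho_{XE}\,M),
\]
whereas
\[
\Tr(M^2)=\Tr\big(\rho_{XE}(I_X\otimes\rho_E^{-1})\rho_{XE}(I_X\otimes\rho_E^{-1})\big),
\]
which has $\rho_E^{-1}$ in place of $\rho_E^{-1/2}$. These do not agree in general: already for trivial $X$ (so $\rho_{XE}=\rho_E$) one gets $M=I$ on $\supp(\rho_E)$, hence $\Tr(M^2)=\dim\supp(\rho_E)$, while $\Gamma_c=\Tr(\rho_E)=1$. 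So the step ``$\Tr(M^2)=\Tr\big(\rho_{XE}(I_A\otimes\rho_E^{-1/2})\big)^2$'' is simply false, and both versions of your argument (the $M^2\le M$ route and the $\|M\|_2^2\le\|M\|\cdot\|M\|_1$ route) bound the wrong quantity.

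The fix is immediate once you notice $\Gamma_c=\Tr(\rho_{XE}\,M)$: since $\rho_{XE}\ge 0$ and Lemma~\ref{lem:eig} gives $M\le I_{XE}$, one has $\Tr(\rho_{XE}\,M)\le\Tr(\rho_{XE})=1$. This is precisely the paper's two-line proof, and it avoids any need to compute $\Tr(M)$ or worry about pseudo-inverses on the support.
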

\begin{proof}
Using the definition, 
\ba
\Gamma_c(\rho_{XE}|\rho_E) &= \tr\L( \rho_{XE} \L (I_X\otimes \rho_E^{-1/2}\R) \rho_{XE} \L (I_X\otimes \rho_E^{-1/2}\R) \R) \nn \\
& \leq \tr\L(\rho_{XE} I_{XE} \R) \nn\\
&\leq 1\;,
\ea
where the first inequality follows from Lemma~\ref{lem:eig}.
\end{proof}

The following lemma bounds the  collision probability as a function of the trace distance.

\begin{lemma}[Bounding collision probability with trace distance]\label{lem:2qb}
Let $ \rho_{XE}$ be a cq state. If 
\ba
\fot \onenorm{\rho_{XE}-U_X \otimes \rho_E}=\epsilon\;, 
\ea 
then 
\ba
\frac{4  \epsilon^2 }{d_X} \leq \Gamma_c(\rho_{XE}|\rho_E) -\oo{d_X} \leq 2 \epsilon\L(1- \oo{d_X}\R)\;.
\ea
\end{lemma}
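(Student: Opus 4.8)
The plan is to prove Lemma~\ref{lem:2qb} by relating the trace distance $\eps = \fot\onenorm{\rho_{XE}-U_X\otimes\rho_E}$ to the Hilbert--Schmidt-type quantity $\Delta := \tr\big((\rho_{XE}-U_X\otimes\rho_E)(I_X\otimes\rho_E^{-1/2})\big)^2$, and then invoking the identity~\eqref{eq:trace-to-collision}, which says exactly that $\Delta = \Gamma_c(\rho_{XE}|\rho_E) - \oo{d_X}$. So the whole lemma reduces to showing $\frac{4\eps^2}{d_X}\le \Delta \le 2\eps\big(1-\oo{d_X}\big)$.

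For the \textbf{upper bound} $\Delta \le 2\eps(1-\oo{d_X})$, I would write $\Delta = \tr\big(M\,\rho_E^{-1/2}\,M\,\rho_E^{-1/2}\big)$ where $M := \rho_{XE}-U_X\otimes\rho_E$ (using that $M$ acts block-wise and commutes appropriately with $I_X\otimes\rho_E^{-1/2}$), and then apply H\"older's inequality (Lemma~\ref{lem:holder}) — or more simply the matrix inequality $\tr(M N M N)\le \norm{N M}_\infty \cdot \norm{M}_1$ type bound — to split off one factor of $\onenorm{M} = 2\eps$. Concretely, $\Delta = \tr\big((\rho_E^{-1/2}M\rho_E^{-1/2})\,M\big) \le \norm{\rho_E^{-1/2}M\rho_E^{-1/2}}_\infty\,\onenorm{M}$. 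The operator norm of $\rho_E^{-1/2}M\rho_E^{-1/2} = (I_X\otimes\rho_E^{-1/2})\rho_{XE}(I_X\otimes\rho_E^{-1/2}) - U_X\otimes I_E$ is at most $1-\oo{d_X}$: the first term has eigenvalues in $[0,1]$ by Lemma~\ref{lem:eig}, and it is a cq (block-diagonal) positive operator whose partial trace over $X$ equals $I_E$, so within each $x$-block the spectrum lies in $[0,1]$ and subtracting $\oo{d_X}I_E$ pushes the extremes into $[-\oo{d_X},\,1-\oo{d_X}]$, giving operator norm $\le 1-\oo{d_X}$. Hence $\Delta \le (1-\oo{d_X})\cdot 2\eps$, as claimed. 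The main subtlety here is justifying that the relevant operator genuinely has spectrum in $[0,1]$ block-by-block, which is where Lemma~\ref{lem:eig} does the work; I expect this to be the fiddliest step.

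For the \textbf{lower bound} $\frac{4\eps^2}{d_X}\le\Delta$, I would run a Cauchy--Schwarz argument in the appropriate inner product. Write again $\Delta = \tr\big(N^2\big)$ with $N := (I_X\otimes\rho_E^{-1/2})^{1/2}\cdots$ — more cleanly, let $N = \rho_E^{-1/4}$-conjugation isn't self-evidently the right move; instead observe $\onenorm{M} = \onenorm{(I\otimes\rho_E^{1/4})(\rho_E^{-1/4}M\rho_E^{-1/4})(I\otimes\rho_E^{1/4})}$ and apply H\"older with exponents $(4,2,4)$ (Lemma~\ref{lem:holder}) to get $\onenorm{M}\le \onenorm{\rho_E^{1/2}}_{?}\cdots$. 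The cleanest route: by Cauchy--Schwarz for the Hilbert--Schmidt inner product on the $d_X d_E$-dimensional space, $\big(\tr(M\cdot(I_X\otimes\rho_E^{-1/2})^{1/2}\,P\,(I_X\otimes\rho_E^{-1/2})^{1/2})\big)^2 \le \Delta\cdot \tr(P^2)$ for a suitable test operator $P$; choosing $P$ to be (a conjugated version of) $\sgn(M)$ recovers $\onenorm{M}^2$ on the left, and $\tr(P^2)\le d_X$ since $P$ lives effectively on the $X$-register after the conjugation absorbs $\rho_E$. This yields $(2\eps)^2 \le \Delta\cdot d_X$, i.e. $\Delta \ge \frac{4\eps^2}{d_X}$. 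Finally, combining both bounds with~\eqref{eq:trace-to-collision} gives the stated chain $\frac{4\eps^2}{d_X}\le \Gamma_c(\rho_{XE}|\rho_E)-\oo{d_X}\le 2\eps(1-\oo{d_X})$. The lower-bound direction, specifically pinning down the correct test operator $P$ and the bound $\tr(P^2)\le d_X$, is the step I'd be most careful about.
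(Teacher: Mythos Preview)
Your plan matches the paper's proof. For the upper bound you do exactly what the paper does: rewrite $\Delta=\tr\big((\rho_E^{-1/2}M\rho_E^{-1/2})\,M\big)$, use Lemma~\ref{lem:eig} to place the spectrum of $\rho_E^{-1/2}\rho_{XE}\rho_E^{-1/2}-U_X\otimes I_E$ in $[-\oo{d_X},\,1-\oo{d_X}]$, and pull out $\onenorm{M}=2\eps$.

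For the lower bound, the H\"older $(4,2,4)$ route you mention and then abandon is precisely the paper's argument, and it finishes in one line: with $A=C=I_X\otimes\rho_E^{1/4}$ and $B=(I_X\otimes\rho_E^{-1/4})\,M\,(I_X\otimes\rho_E^{-1/4})$, Lemma~\ref{lem:holder} gives
\[
2\eps=\onenorm{ABC}\le \onenorm{|A|^4}^{1/4}\,\onenorm{|B|^2}^{1/2}\,\onenorm{|C|^4}^{1/4}
= d_X^{1/4}\sqrt{\Delta}\,d_X^{1/4},
\]
since $\onenorm{(I_X\otimes\rho_E^{1/4})^4}=\tr(I_X\otimes\rho_E)=d_X$ and $\onenorm{B^2}=\Delta$ by~\eqref{eq:trace-to-collision}. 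Squaring gives $4\eps^2\le d_X\Delta$. So the step you were worried about (pinning down $P$ and bounding $\tr(P^2)$) is unnecessary; just finish the H\"older computation you already started.
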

\begin{proof}
The first inequality follows from Lemma  4 of  \cite{leftoverhash}, we repeat it here for completeness. We use H\"{o}lder's inequality (Lemma~\ref{lem:holder}) with $r=t=4, \, s=2, \, A=C=I_X \otimes \rho_E^{1/4}$, and 
$$B=\L(I_X \otimes \rho_E^{-1/4}\R)   \L(\rho_{XE}-U_X \rho_E\R)   \L(I_X \otimes \rho_E^{-1/4}\R)\;.$$
 This leads to
\ba
 2 \epsilon &=  \onenorm{\rho_{XE}-U_X \rho_E} \nl
\onenorm{ABC} \nn \\
 & \leq \onenorm{A^4}^{1/4} \onenorm{B^2}^{1/2} \onenorm{C^4}^{1/4} \nl
\sqrt{d_X \tr \L( \L(\rho_{XE}-U_X \rho_E\R)  \L(I_X \otimes \rho_E^{-1/2}\R) \R)^2} \nl
\sqrt{d_X\L( \Gamma_c(\rho_{XE}|\rho_E) -\oo{d_X} \R)} \, ,
\ea
 where we used Eq.~\eqref{eq:trace-to-collision} in the last line. Squaring both sides and dividing by $d_A$, we get the desired inequality.
 
For the second inequality, we use  Lemma~\ref{lem:eig} to show the following sequence of inequalities
\ba \label{eq:bound-abs-delta}
&0\leq \L (I_X\otimes \rho_E^{-1/2}\R) \rho_{XE} \L (I_X\otimes \rho_E^{-1/2}\R) \leq  I_{XE} \nn \\
&\Ra  -\oo{d_X} I_{XE}\leq \L (I_X\otimes \rho_E^{-1/2}\R) \rho_{XE} \L (I_X\otimes \rho_E^{-1/2}\R) -U_X\otimes I_E\leq \L(1-\oo{d_X}\R) I_{XE} \nn \\
&\Ra \L| \rho_E^{-1/2}\rho_{XE} \rho_E^{-1/2}-U_X I_E \R| \leq \L(1-\oo{d_X}\R) I_{XE}\;.  
\ea 
Starting with  Eq.~\eqref{eq:trace-to-collision}, we get
\ba
 \quad\; \Gamma_c(\rho_{XE}|\rho_E) -\oo{d_X}  &=
{ \tr \L( \L(\rho_{XE}-U_X \rho_E\R)  \L(I_X \otimes \rho_E^{-1/2}\R) \R)^2} \nl
{ \tr \L( \L(\rho_{XE}-U_X \rho_E\R)  \L(\rho_E^{-1/2} \rho_{XE} \rho_E^{-1/2} - U_X I_E \R) \R)} \nn \\
 &\leq { \tr \L( \L|\rho_{XE}-U_X \rho_E\R|  \L|\rho_E^{-1/2} \rho_{XE} \rho_E^{-1/2} - U_X I_E \R|\R)} \nn \\
 &\leq { \tr \L( \L|\rho_{XE}-U_X \rho_E\R| \L(1- \oo{d_X}\R) I_{XE} \R)} \nl
 2 \epsilon  \L(1- \oo{d_X}\R)\;,
\ea
where we used Eq.~\eqref{eq:bound-abs-delta} to go from the third line to the fourth line. 
\end{proof}

Now we are ready to prove Lemma~\ref{lem:xor}. The proof idea is to start from the trace distance of $X$ to uniform, apply Lemma~\ref{lem:2qb} to get an upper bound in terms of the collision probability of $X$, apply Eq.~\eqref{eq:trace-to-collision} and expand the square to express the collision probability of $X$ in terms of the collision probability of $\vev{a,X}$, and finally apply Lemma~\ref{lem:2qb} again to get an upper bound in terms of the trace distance of $\vev{a,X}$ to uniform.

\newtheorem*{lem:xor}{Lemma \ref{lem:xor}}
\begin{lem:xor}[restated]
Let $p$ be a prime power, $t$ an integer and $\eps\geq 0$. Let $X=(X_1,\dots,X_t)\in \F^t_p$ be a classical random variable, and $E$ some correlated quantum variable. For all $a=(a_1,\dots,a_t)\in \F^t_p$, define a random variable $Z=\vev{a,X}=\sum_{i=1}^t a_i X_i$. If for all $a \neq 0$,  $\fot \onenorm{\rho^a_{ZE}-U_Z\otimes \rho_E}\leq \epsilon$, then
\ba
\fot \big\|\rho_{XE}-U_X\otimes \rho_E\big\|_1\, \leq \,\frac{p^{t/2}}{\sqrt 2}\sqrt \epsilon\;.
\ea
\end{lem:xor}
 
\begin{proof}
Note that
\ba
\rho^a_{ZE}=\sum_z \proj{z} \otimes \sum_x {\delta}(z-\vev{a,x} , 0)\rho^x_E\;.
\ea
We start by relating the collision probability of $Z$ and $X$. Using Eq.~\eqref{eq:trace-to-collision},
%
\ba
  \Gamma_c(\rho^a_{ZE}|\rho_E)-\oo{p}&=\tr\L[(\rho^a_{ZE}-U_Z\rho_E)I_Z\otimes \rho_E^{-1/2} \R]^2 \nl
\tr \L[ \sum_z\proj{z} \sum_x\L({\delta}\L(z-\vev{a,x},0\R)-\oo{p} \R)  \rho^x_E \L(I_Z \otimes \rho_E^{-1/2}\R)  \R]^2  \nl
\sum_z \tr \L[\sum_x \L( {\delta}\L(z-\vev{a,x},0\R)-\oo{p} \R)  \rho^x_E \rho_E^{-1/2}  \R]^2 \nl
\sum_{z,x,x'} \L[  {\delta}\L(z-\vev{a,x},0\R)  {\delta}\L(z-\vev{a,x'},0\R)-\frac{2}{p} {\delta}\L(z-\vev{a,x},0\R)+\frac{1}{p^2} \R]\tr\L(\rho^x_E\rho_E^{-1/2}\rho^{x'}_E\rho_E^{-1/2} \R) \nl
\sum_{x,x'} \L[  {\delta}\L(\vev{a,x-x'},0\R)  -\frac{1}{p} \R]\tr\L(\rho^x_E\rho_E^{-1/2}\rho^{x'}_E\rho_E^{-1/2} \R) \nl
\sum_{x} \L(  1 -\frac{1}{p} \R)\tr\L(\rho^x_E\rho_E^{-1/2}\rho^x_E\rho_E^{-1/2} \R) +
\sum_{x\neq	x'} \L[  {\delta}\L(\vev{a,x-x'},0\R)  -\frac{1}{p} \R]\tr\L(\rho^x_E\rho_E^{-1/2}\rho^{x'}_E\rho_E^{-1/2} \R)\;.
\ea
Averaging over $a\in \F^t_p$, we get:
\ba
\Es{a}  \L(\Gamma_c(\rho^a_{ZE}|\rho_E)-\oo{p}\R) &=   \L(  1 -\frac{1}{p}\R) \Gamma_c(\rho_{XE}|\rho_E) + \sum_{x\neq	x'} \Es{a} \L[  {\delta}\L(\vev{a,x-x'},0\R)  -\frac{1}{p} \R]\tr\L(\rho^x_E\rho_E^{-1/2}\rho^x_E\rho_E^{-1/2} \R) \nl
 \L(  1 -\frac{1}{p}\R) \Gamma_c(\rho_{XE}|\rho_E)\;. 
\ea
Dividing both sides by $\L(  1 -\frac{1}{p} \R)$ gives
\ba \label{eq:xorcollision}
\Gamma_c(\rho_{XE}|\rho_E)=\frac{p}{p-1}\Es{a}  \L(\Gamma_c(\rho^a_{ZE}|\rho_E)-\oo{p}\R)\;.
\ea
%

Having the most computational-heavy step done, we put the pieces together to prove our XOR lemma. 
\ba
\big\|\rho_{XE}-U_X\otimes \rho_E\big\|_1^2  & \leq p^t \; \Gamma_c(\rho_{XE}|\rho_E)-1 \nl
p^t \L(\frac{p}{p-1}\R)\Es{a}  \L(\Gamma_c(\rho^a_{ZE}|\rho_E)-\oo{p}\R) -1 \nl
 \L(\frac{p}{p-1}\R) \L[\sum_{a\neq 0} \L(\Gamma_c(\rho^a_{ZE}|\rho_E)-\oo{p}\R)+ \L(\Gamma_c(\rho^0_{ZE}|\rho_E)-\oo{p}\R)\R] -1 \nn \\
& \leq \L(\frac{p}{p-1}\R)\L[ (p^t-1) 2\epsilon \L(  1 -\frac{1}{p} \R) +\L(  1 -\frac{1}{p} \R) \R]-1 \nn \\
&\leq 2 p^t \epsilon\; . 
\ea
On the first line we use Lemma~\ref{lem:2qb},  the second line we use Eq.~\eqref{eq:xorcollision}, and the fourth line we use Lemma~\ref{lem:2qb} and Corollary~\ref{cor:gamma-ub}. Dividing both sides by $4$ and taking the square root, we get the desired result
\ba
\fot \onenorm{\rho_{XE}-U_X\otimes \rho_E}& \leq \frac{p^{t/2}}{\sqrt 2}\sqrt \epsilon\;. 
\ea
\end{proof}

\subsection{Non-uniform XOR lemma}\label{subsection:xor2}
Our non-uniform XOR lemma bounds the distance to uniform of a ccq state, a state with two classical registers and one quantum register. Roughly speaking, the lemma states that given two random variables $X_0 \in \F_p$ and $X \in \F_p^t$, if $X_0+\vev{a,X}$ is close to uniform, then $X_0$ is close to uniform given $X$. 

\newtheorem*{lem:xor2}{Lemma \ref{lem:xor2}}
\begin{lem:xor2}[restated]
Let $p$ be a prime power, $t$ an integer and $\eps\geq 0$. 
Let $\rho_{X_0XE}$ be a ccq state with $X_0 \in \F_p$ and $X=(X_1,\dots,X_t)\in \F^t_p$. For all $a=(a_1,\dots,a_t)\in \F^t_p$, define a random variable $Z=X_0+\vev{a,X}=X_0+\sum_{i=1}^t a_i X_i$. If for all $a$,  $\fot \onenorm{\rho^a_{ZE}-U_Z\otimes \rho_E}\leq \epsilon$, then
\ba
 \fot\big\|\rho_{X_0XE}-U_{X_0}\otimes \rho_{XE}\big\|_1 &\leq \frac{p^{(t+1)/2}}{\sqrt{2}}\sqrt{\epsilon} \;.
\ea
\end{lem:xor2}


The proof of the non-uniform XOR lemma follows the same structure as the proof of the standard XOR lemma in Section~\ref{subsection:xor}: we bound the collision probability by the trace distance in Lemma~\ref{lem:3qb}, then prove the non-uniform XOR lemma based on that. First we establish the analogue of Eq.~\eqref{eq:trace-to-collision} for any ccq state $\rho_{XZE}$:
\ba \label{eq:trace-to-collision2}
&\tr\L((\rho_{XZE}-U_X \otimes \rho_{ZE} )(I_{XZ}\otimes \rho_E^{-1/2})\R)^2 \nn \\
&= \tr\L(\rho_{XZE} \, \rho_E^{-1/2}\R)^2 - 2 \tr \L(\rho_{XZE} \, \rho_E^{-1/2} (U_X \rho_{ZE}) \rho_E^{-1/2}\R) +\tr\L((U_X \rho_{ZE})  \rho_E^{-1/2}\R)^2 \nn \\
&=\Gamma_c(\rho_{XZE}|\rho_E) - \oo{d_X}\Gamma_c(\rho_{ZE}|\rho_E)\;.
\ea
Similar to Lemma~\ref{lem:eig}, we need the following lemma to bound the collision probability by the trace distance in Lemma~\ref{lem:3qb}.  
\begin{lemma}\label{lem:eig2} Let $\rho_{XZE}$ be a ccq state. Then 
\ba
 -\oo{d_X}I_{XZE}\leq &\L(I_{XZ}\otimes \rho_E^{-1/2}\R)(\rho_{XZE}-U_X \otimes \rho_{ZE} )\L(I_{XZ}\otimes \rho_E^{-1/2}\R)   \leq \L(1-\oo{d_X}\R) I_{XZE}\;. 
\ea
\end{lemma}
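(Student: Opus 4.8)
The plan is to follow the template of Lemma~\ref{lem:eig}: since $\rho_{XZE}$ is a ccq state, conjugating $\rho_{XZE}-U_X\otimes\rho_{ZE}$ by $I_{XZ}\otimes\rho_E^{-1/2}$ produces an operator that is block-diagonal with respect to the two classical registers $X$ and $Z$, and it then suffices to bound the eigenvalues of each block separately.

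Concretely, I would write $\rho_{XZE}=\sum_{x,z}\proj{x}_X\otimes\proj{z}_Z\otimes\rho_E^{xz}$ with each $\rho_E^{xz}\geq 0$, set $\rho_E^z=\sum_x\rho_E^{xz}$ and $\rho_E=\sum_z\rho_E^z$, and observe (working, as in Lemma~\ref{lem:eig} and the proof of Lemma~\ref{lem:2qb}, on $\supp\rho_E$, where $\rho_E^{-1/2}$ is the generalized inverse and $\rho_E^{-1/2}\rho_E\rho_E^{-1/2}=I_E$; this is legitimate because every $\rho_E^{xz}$ is supported on $\supp\rho_E$) that
$$\left(I_{XZ}\otimes\rho_E^{-1/2}\right)\bigl(\rho_{XZE}-U_X\otimes\rho_{ZE}\bigr)\left(I_{XZ}\otimes\rho_E^{-1/2}\right)=\sum_{x,z}\proj{x}_X\otimes\proj{z}_Z\otimes M_{xz},$$
where $M_{xz}=A_{xz}-\tfrac1{d_X}B_z$ with $A_{xz}:=\rho_E^{-1/2}\rho_E^{xz}\rho_E^{-1/2}\geq 0$ and $B_z:=\rho_E^{-1/2}\rho_E^z\rho_E^{-1/2}=\sum_x A_{xz}$. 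Thus the claim reduces to showing $-\tfrac1{d_X}I_E\leq M_{xz}\leq\bigl(1-\tfrac1{d_X}\bigr)I_E$ for every fixed $x,z$.

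Next I would record the two elementary operator inequalities that carry the argument, both being the three-register analogue of Lemma~\ref{lem:eig}: (i) $0\leq A_{xz}\leq B_z$, since $A_{xz}\geq 0$ and $B_z-A_{xz}=\sum_{x'\neq x}A_{x'z}\geq 0$; and (ii) $0\leq B_z\leq I_E$, since $B_z\geq 0$ and $\sum_z B_z=\rho_E^{-1/2}\rho_E\rho_E^{-1/2}=I_E$ with all summands PSD. The upper bound then follows from $M_{xz}=A_{xz}-\tfrac1{d_X}B_z\leq B_z-\tfrac1{d_X}B_z=\bigl(1-\tfrac1{d_X}\bigr)B_z\leq\bigl(1-\tfrac1{d_X}\bigr)I_E$ (using $1-\tfrac1{d_X}\geq 0$), and the lower bound from $M_{xz}=A_{xz}-\tfrac1{d_X}B_z\geq-\tfrac1{d_X}B_z\geq-\tfrac1{d_X}I_E$. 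Summing over the orthogonal blocks $\proj{x}_X\otimes\proj{z}_Z$ gives the lemma.

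I do not anticipate a genuine obstacle: the proof is a direct adaptation of Lemma~\ref{lem:eig}. The only points needing a little care are the usual bookkeeping with the generalized inverse $\rho_E^{-1/2}$ (checking that all operators live on $\supp\rho_E$), and the observation that, unlike in Lemma~\ref{lem:eig}, after conjugation the subtracted term $U_X\otimes\rho_{ZE}$ is no longer a multiple of the identity; this is why the intermediate operator $B_z$ must be carried through and the two-sided bound $0\leq B_z\leq I_E$ invoked rather than the implication being immediate.
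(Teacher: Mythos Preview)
Your proposal is correct and follows essentially the same approach as the paper: block-diagonalize using the ccq structure and bound each block $\rho_E^{-1/2}\bigl(\rho_E^{xz}-\tfrac1{d_X}\rho_E^{z}\bigr)\rho_E^{-1/2}$ using $0\le\rho_E^{xz}\le\rho_E^{z}\le\rho_E$. The only cosmetic difference is that for the upper bound the paper splits $\rho_E^z=\rho_E^{xz}+\sum_{x'\neq x}\rho_E^{x'z}$ and bounds $A_{xz}\le I_E$ directly, whereas you route through $A_{xz}\le B_z\le I_E$; both are equivalent one-line arguments.
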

\begin{proof}
We bound the eigenvalues of the middle expression. Since $\rho_{XZE}$ is a ccq state, we know that the middle expression
\ba
\L(I_{XZ}\otimes \rho_E^{-1/2}\R)(\rho_{XZE}-U_X \otimes \rho_{ZE} )\L(I_{XZ}\otimes \rho_E^{-1/2}\R)=\sum_{x,z} \proj{x}\otimes \proj{z}\otimes \rho_E^{-1/2} \L(\rho^{xz}_E- \oo{d_X} \rho^z_E \R)\rho_E^{-1/2} 
\ea
is block diagonal, where $\rho^z_E=\sum_x\rho^{xz}_E$ and $\rho_E=\sum_{x,z}\rho^{xz}_E$. For any state $\ket{\phi}\in \mH_E$ and $x,z$ in the range of $X,Z$, 
\ba
\vev{ \phi|\rho_E^{-1/2} \L(\rho^{xz}_E- \oo{d_X} \rho^z_E \R)\rho_E^{-1/2}|\phi} \geq \vev{ \phi|\rho_E^{-1/2} \L(- \oo{d_X} \rho^z_E \R)\rho_E^{-1/2}|\phi} \geq -\oo{d_X}\;.
\ea
This proves the first inequality. We also have
\ba
\vev{ \phi|\rho_E^{-1/2} \L(\rho^{xz}_E- \oo{d_X} \rho^z_E \R)\rho_E^{-1/2}|\phi}
&=\vev{ \phi|\rho_E^{-1/2} \L(\rho^{xz}_E- \oo{d_X} \sum_{x'} \rho^{x'z}_E \R)\rho_E^{-1/2}|\phi} \nn \\
&=\L(1- \oo{d_X}\R)  \vev{ \phi|\rho_E^{-1/2} \rho^{xz}_E \rho_E^{-1/2}|\phi} -\oo{d_X} \sum_{x'\neq x}  \vev{ \phi|\rho_E^{-1/2} \rho^{xz}_E \rho_E^{-1/2}|\phi}  \nn  \\
 &\leq   \L(1- \oo{d_X}\R) \;.
\ea
This proves the second inequality.
\end{proof}
We then bound the collision probability by the trace distance as in Lemma~\ref{lem:2qb}.
\begin{lemma}[Bounding collision probability with trace distance, non-uniform]\label{lem:3qb}
Let $ \rho_{XZE}$ be a ccq state. If 
\ba
\fot \onenorm{\rho_{XZE}-U_X \rho_{ZE}}=\epsilon \,, 
\ea 
then 
\ba
\frac{4  \epsilon^2 }{d_X d_Z} \leq \Gamma_c(\rho_{XZE}|\rho_E) - \oo{d_X}\Gamma_c(\rho_{ZE}|\rho_E) \leq 2 \epsilon\L(1- \oo{d_X}\R)\;.
\ea
\end{lemma}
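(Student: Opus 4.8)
The plan is to mirror, essentially line by line, the two-part argument in the proof of Lemma~\ref{lem:2qb}, with the single register $X$ there replaced by the pair $(X,Z)$ here, and with the two ccq-specific ingredients already in hand: the identity~\eqref{eq:trace-to-collision2} playing the role of~\eqref{eq:trace-to-collision}, and Lemma~\ref{lem:eig2} playing the role of Lemma~\ref{lem:eig}.

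For the lower bound, I would apply H\"{o}lder's inequality (Lemma~\ref{lem:holder}) with $r=t=4$, $s=2$, $A=C=I_{XZ}\otimes\rho_E^{1/4}$ and $B=(I_{XZ}\otimes\rho_E^{-1/4})(\rho_{XZE}-U_X\otimes\rho_{ZE})(I_{XZ}\otimes\rho_E^{-1/4})$, so that $ABC=\rho_{XZE}-U_X\otimes\rho_{ZE}$ and hence $2\epsilon=\onenorm{ABC}\leq\onenorm{A^4}^{1/4}\onenorm{B^2}^{1/2}\onenorm{C^4}^{1/4}$. Here $A^4=C^4=I_{XZ}\otimes\rho_E\geq0$ has trace norm $\tr(I_{XZ})\tr(\rho_E)=d_Xd_Z$, and since $B$ is Hermitian, $\onenorm{B^2}=\tr(B^2)=\tr\big((\rho_{XZE}-U_X\otimes\rho_{ZE})(I_{XZ}\otimes\rho_E^{-1/2})\big)^2$ after cyclically absorbing the outer factors $\rho_E^{-1/4}$, and this last quantity equals $\Gamma_c(\rho_{XZE}|\rho_E)-\oo{d_X}\Gamma_c(\rho_{ZE}|\rho_E)$ by~\eqref{eq:trace-to-collision2}. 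Squaring $2\epsilon\leq(d_Xd_Z)^{1/2}\big(\Gamma_c(\rho_{XZE}|\rho_E)-\oo{d_X}\Gamma_c(\rho_{ZE}|\rho_E)\big)^{1/2}$ yields the claimed $\frac{4\epsilon^2}{d_Xd_Z}\leq\Gamma_c(\rho_{XZE}|\rho_E)-\oo{d_X}\Gamma_c(\rho_{ZE}|\rho_E)$; the extra factor $d_Z$ relative to Lemma~\ref{lem:2qb} comes precisely from $\onenorm{A^4}=d_Xd_Z$ in place of $d_X$.

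For the upper bound I would again start from~\eqref{eq:trace-to-collision2}, write $D=\rho_{XZE}-U_X\otimes\rho_{ZE}$ and $\tilde D=(I_{XZ}\otimes\rho_E^{-1/2})\,D\,(I_{XZ}\otimes\rho_E^{-1/2})$, and use cyclicity of the trace to rewrite $\Gamma_c(\rho_{XZE}|\rho_E)-\oo{d_X}\Gamma_c(\rho_{ZE}|\rho_E)=\tr\big(D(I_{XZ}\otimes\rho_E^{-1/2})\big)^2=\tr(D\tilde D)$. Lemma~\ref{lem:eig2} gives $-\oo{d_X}I_{XZE}\leq\tilde D\leq(1-\oo{d_X})I_{XZE}$, hence $|\tilde D|\leq(1-\oo{d_X})I_{XZE}$ (the degenerate case $d_X=1$ being trivial, since then $D=0$ and $\epsilon=0$). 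As $D$ and $\tilde D$ are Hermitian we have $\tr(D\tilde D)\leq\tr(|D|\,|\tilde D|)\leq(1-\oo{d_X})\tr|D|=2\epsilon\,(1-\oo{d_X})$, exactly as in the cq case.

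I do not anticipate a real obstacle: the only non-routine ingredients, the trace-to-collision identity~\eqref{eq:trace-to-collision2} and the operator inequality of Lemma~\ref{lem:eig2}, are already established, so what remains is the same H\"{o}lder-plus-cyclicity bookkeeping as in the proof of Lemma~\ref{lem:2qb}. The one point that needs a bit of care is keeping track of which register the identity is tensored over (so that the $d_Z$-factor lands on the correct side of the lower bound) and verifying, as above, that $\onenorm{A^4}=d_Xd_Z$.
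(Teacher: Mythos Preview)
Your proposal is correct and follows the paper's proof essentially line by line: the lower bound via H\"{o}lder with $A=C=I_{XZ}\otimes\rho_E^{1/4}$ and $B$ as you wrote, combined with~\eqref{eq:trace-to-collision2}, and the upper bound via Lemma~\ref{lem:eig2} to get $|\tilde D|\leq(1-\tfrac{1}{d_X})I_{XZE}$ and then $\tr(D\tilde D)\leq\tr(|D|\,|\tilde D|)\leq 2\epsilon(1-\tfrac{1}{d_X})$, are exactly the paper's argument.
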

\begin{proof}
The proof is similar to that of Lemma~\ref{lem:2qb}. For the first inequality, we use H\"{o}lder's inequality (Lemma~\ref{lem:holder}) with $r=t=4, \, s=2, \, A=C=I_{XZ} \otimes \rho_E^{1/4}$, and $B=\L(I_{XZ} \otimes \rho_E^{-1/4}\R) \L(\rho_{XZE}-U_X \rho_{ZE}\R) \L(I_{XZ} \otimes \rho_E^{-1/4}\R)$. This leads to
\ba
 2 \epsilon &=  \onenorm{\rho_{XZE}-U_X \rho_{ZE}} \nl
\onenorm{ABC} \nn \\
 & \leq \onenorm{A^4}^{1/4} \onenorm{B^2}^{1/2} \onenorm{C^4}^{1/4} \nl
\sqrt{d_X d_Z\tr\L((\rho_{XZE}-U_X \otimes \rho_{ZE} )\L(I_{XZ}\otimes \rho_E^{-1/2}\R)\R)^2 } \nl
\sqrt{d_X d_Z\L(\Gamma_c(\rho_{XZE}|\rho_E) - \oo{d_X}\Gamma_c(\rho_{ZE}|\rho_E) \R)}\; ,
\ea
where we used Eq.~\eqref{eq:trace-to-collision2} in the last line. Squaring both sides and dividing by $d_X d_Z$, we get the desired inequality.
For the second inequality, we use Lemma~\ref{lem:eig2} to show that 
\ba \label{eq:bound-abs-delta2}
& -\oo{d_X}I_{XZE}\leq \L(I_{XZ}\otimes \rho_E^{-1/2}\R)(\rho_{XZE}-U_X \otimes \rho_{ZE} )\L(I_{XZ}\otimes \rho_E^{-1/2}\R)   \leq \L(1-\oo{d_X}\R) I_{XZE} \nn \\
&\Ra \L|\L(I_{XZ}\otimes \rho_E^{-1/2}\R)(\rho_{XZE}-U_X \otimes \rho_{ZE} )\L(I_{XZ}\otimes \rho_E^{-1/2}\R)\R| \leq \L(1-\oo{d_X}\R) I_{XZE}\;.  
\ea 
Starting with Eq.~\eqref{eq:trace-to-collision2}, we have
\ba
 & \Gamma_c(\rho_{XZE}|\rho_E) - \oo{d_X}\Gamma_c(\rho_{ZE}|\rho_E)\nl
\tr\L((\rho_{XZE}-U_X \otimes \rho_{ZE} )\L(I_{XZ}\otimes \rho_E^{-1/2}\R)\R)^2 \nn \\
 &\leq { \tr \L( \L|\rho_{XZE}-U_X \rho_{ZE}\R|  \L|\L(I_{XZ}\otimes \rho_E^{-1/2}\R)(\rho_{XZE}-U_X \otimes \rho_{ZE} )\L(I_{XZ}\otimes \rho_E^{-1/2}\R) \R|\R)} \nn \\
 &\leq { \tr \L( \L|\rho_{XZE}-U_X \rho_{ZE}\R| \L(1- \oo{d_X}\R) I_{XZE} \R)} \nl
 2 \epsilon  \L(1- \oo{d_X}\R),
\ea
 where we used Eq.~\eqref{eq:bound-abs-delta2} on the fourth line.
\end{proof}
Now we restate and prove the non-uniform XOR lemma. The proof idea is to start from the trace distance of $X_0$ given $X$ to uniform, apply Lemma~\ref{lem:3qb} to get an upper bound in terms of the collision probability of $X_0$ given $X$, apply Eq.~\eqref{eq:trace-to-collision2} and expand the square to express the collision probability of $X_0$ given $X$ in terms of the collision probability of $X_0+\vev{a,X}$, and finally apply Lemma~\ref{lem:3qb} again to get an upper bound in terms of the trace distance of $X_0+\vev{a,X}$ to uniform.

\begin{lem:xor2}[restated]
Let $p$ be a prime power, $t$ an integer and $\eps\geq 0$. 
Let $\rho_{X_0XE}$ be a ccq state with $X_0 \in \F_p$ and $X=(X_1,\dots,X_t)\in \F^t_p$. For all $a=(a_1,\dots,a_t)\in \F^t_p$, define a random variable $Z=X_0+\vev{a,X}=X_0+\sum_{i=1}^t a_i X_i$. If for all $a$,  $\fot \onenorm{\rho^a_{ZE}-U_Z\otimes \rho_E}\leq \epsilon$, then
\ba
 \fot\big\|\rho_{X_0XE}-U_{X_0}\otimes \rho_{XE}\big\|_1 &\leq \frac{p^{(t+1)/2}}{\sqrt{2}}\sqrt{\epsilon} \;.
\ea
\end{lem:xor2}


\begin{proof}
We start by relating the collision probability of $Z$ and $X_0+\vev{a,X}$:
\ba
  \Gamma_c(\rho^a_{ZE}|\rho_E)-\oo{p}&=\tr\L[(\rho^a_{ZE}-U_Z\rho_E)I_Z\otimes \rho_E^{-1/2} \R]^2 \nl
\tr \L[ \sum_z\proj{z} \sum_{x,x_0}\L({\delta}\L(z-x_0-\vev{a,x},0\R)-\oo{p} \R)  \rho^{x_0x}_E I_Z\rho_E^{-1/2}  \R]^2  \nl
\sum_z \tr \L[\sum_{x_0x} \L( {\delta}\L(z-x_0-\vev{a,x},0\R)-\oo{p} \R)  \rho^{x_0x}_E \rho_E^{-1/2}  \R]^2 \nl
\sum_{z,x_0,x_0',x,x'} \L[  {\delta}\L(z-x_0-\vev{a,x},0\R)  {\delta}\L(z-x_0'-\vev{a,x'},0\R)-\frac{2}{p} {\delta}\L(z-x_0-\vev{a,x},0\R)+\frac{1}{p^2} \R] \nn \\
&\,\,\,\,\, \cdot \tr\L(\rho^{x_0x}_E\rho_E^{-1/2}\rho^{x_0'x'}_E\rho_E^{-1/2}\R) \nl
\sum_{x_0,x_0',x,x'} \L[  {\delta}\L(x_0-x_0'+\vev{a,x-x'},0\R)  -\frac{1}{p} \R]\tr\L(\rho^{x_0x}_E\rho_E^{-1/2}\rho^{x_0'x'}_E\rho_E^{-1/2} \R)  \nl 
\sum_{x_0, x_0', x} \L(  {\delta}\L(x_0-x_0',0\R) -\frac{1}{p} \R)\tr\L(\rho^{x_0x}_E\rho_E^{-1/2}\rho^{x_0'x}_E\rho_E^{-1/2} \R) \nn \\ 
&\,\,\,\,\,+ \sum_{x_0,x_0', x \neq x'} \L[  {\delta}\L(x_0-x_0'+\vev{a,x-x'},0\R)  -\frac{1}{p} \R]\tr\L(\rho^{x_0x}_E\rho_E^{-1/2}\rho^{x_0'x'}_E\rho_E^{-1/2} \R) \nl 
\sum_{x_0, x} \tr\L(\rho^{x_0x}_E\rho_E^{-1/2}\rho^{x_0x}_E\rho_E^{-1/2} \R) - \oo{p}\sum_{x_0, x_0', x} \tr\L(\rho^{x_0x}_E\rho_E^{-1/2}\rho^{x_0'x}_E\rho_E^{-1/2} \R)  \nn \\  
&\,\,\,\,\,+\sum_{x_0,x_0', x \neq x'} \L[  {\delta}\L(x_0-x_0'+\vev{a,x-x'},0\R)  -\frac{1}{p} \R]\tr\L(\rho^{x_0x}_E\rho_E^{-1/2}\rho^{x_0'x'}_E\rho_E^{-1/2} \R) \nl 
 \Gamma_c(\rho_{X_0XE}|\rho_E) - \oo{p}\Gamma_c(\rho_{XE}|\rho_E) \nn \\
&\,\,\,\,\,+ \sum_{x_0,x_0', x \neq x'} \L[  {\delta}\L(x_0-x_0'+\vev{a,x-x'},0\R)  -\frac{1}{p} \R]\tr\L(\rho^{x_0x}_E\rho_E^{-1/2}\rho^{x_0'x'}_E\rho_E^{-1/2} \R) .
\ea 
When we average over $a$, the last term vanishes,
\ba \label{eq:xor2collision}
\Es{a}\L( \Gamma_c(\rho^a_{ZE}|\rho_E)-\oo{p} \R)  = \Gamma_c(\rho_{X_0XE}|\rho_E) - \oo{p}\Gamma_c(\rho_{XE}|\rho_E) \;.
\ea
With the heavy work done, we put everything together and prove the lemma
\ba
\frac{\onenorm{\rho_{X_0XE}-U_{X_0}\rho_{XE}}^2}{p^{t+1}} &\leq  \Gamma_c(\rho_{X_0XE}|\rho_E) - \oo{p}\Gamma_c(\rho_{XE}|\rho_E) \nl
\Es{a}\L( \Gamma_c(\rho^a_{ZE}|\rho_E)-\oo{p} \R)  \nn \\
&\leq 2\epsilon \, ,
\ea
where we used Lemma~\ref{lem:3qb} one the first line, Eq.~\eqref{eq:xor2collision} on the second line, Lemma~\ref{lem:3qb} and the assumption of the lemma on the third line. Multiplying both sides by $\frac{p^{t+1}}{2}$ and take a square root, we get the desired result:
\ba
\fot\onenorm{\rho_{X_0XE}-U_{X_0}\rho_{XE}} &\leq \frac{p^{(t+1)/2}}{\sqrt{2}}\sqrt{\epsilon}\;.
\ea
\end{proof}


\section{Quantum-Proof Non-malleable Extractor}\label{sec:nmext}

In this section we introduce our non-malleable extractor and prove its security. The extractor was first considered by Li~\cite{Li12a}. 
We use the symbol $\|$ for concatenation of strings, and for $a,b \in \F_p^n$ 
 write  $\vev{a,b}$ for the standard inner product over $\F_p^n$.

\begin{definition}[Inner product-based non-malleable extractor]\label{def:ip_nmext}
Let $p\neq 2$ be a prime.  For any even integer $n$, define a function 
 $\nmExt : \F_p^n \times \F_p^{n/2} \to \F_p$ by 
$\nmExt(X,Y) =\vev{X, Y||Y^2}$, where $Y^2$ is defined as in Section~\ref{sec:notation}. 
\end{definition}

\newtheorem*{thm:main}{Theorem \ref{thm:main}}
\begin{thm:main}
Let $p\neq 2$ be a prime. Let $n$ be an even integer. Then for any $\eps>0$ the function $\nmExt(X,Y)=\langle X, Y\|Y^2 \rangle$ is an $(\L(\frac{n}{2}+6\R){\log p} -1+4\log\oo{\eps},\eps)$  quantum-proof non-malleable extractor. 
\end{thm:main}

The proof of Theorem~\ref{thm:main} is based on a reduction showing that any successful attack for an adversary to $\nmExt$ leads to a good strategy for the players in a certain communication game, that we introduce next. 

\subsection{A communication game}
\label{sec:com-game}

Let $p\neq 2$ be a prime. Let $n$ be an even integer, and $g:\F_{p}^{n/2} \times \F_{p}^{n/2} \to \F_p^n$ an arbitrary function such that for any $z\in \F_p^n$ there are at most two possible pairs $(y,y')$ such that $y\neq y'$ and $g(y,y')=z$. 
Consider the following communication game, called $\guess(n,p,g)$, between two players Alice and Bob.

\begin{enumerate}
\item Bob receives $y\in \F_p^{n/2}$.
\item Alice creates a cq state $\rho_{XE}$, where $X\in\F_p^n$, and sends the quantum register $E$ to Bob. 
\item Bob returns $y'\in \F_p^{n/2}$ and $b\in \F_p$. 
\end{enumerate}
The players win if and only if $b= \vev{x ,g(y,y')}$ and $y' \neq  y$. 

The following lemma bounds the players' maximum success probability in this game as a function of the min-entropy of Alice's input $X$, conditioned on her message to Bob.

\begin{lemma}[Success probability of the communication game]\label{lem:commugame}
Suppose there exists a communication protocol for Alice and Bob in $\guess(n,p,g)$  that succeeds with probability at least $\frac{1}{p}+\eps$, on average over a uniformly random choice of input $y$ to Bob. Then  $\Hmin(X|E)_\rho \leq \frac{n}{2}\log p+1+2\log\frac{1}{\eps}$.
\end{lemma}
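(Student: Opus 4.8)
The plan is to prove the contrapositive through the operational characterization of min-entropy. By Lemma~\ref{lem:minentropyguess}, the claimed bound $\Hmin(X|E)_\rho\le \frac n2\log p+1+2\log\frac1\eps$ is equivalent to $p_{guess}(X|E)_\rho\ge \frac{\eps^2}{2p^{n/2}}$, so it suffices to turn any protocol that wins $\guess(n,p,g)$ with probability $\ge\frac1p+\eps$ into a POVM $\{N_x\}$ on $\mH_E$ with $\sum_x p_x\tr(N_x\rho_E^x)\ge \frac{\eps^2}{2p^{n/2}}$. This is the ``reconstruction'' step.

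First I would put the protocol in canonical form: on input $y$, Bob's strategy is a Stinespring isometry $V^y:\mH_E\to\mH_E\otimes\mH_{Y'}\otimes\mH_B$ followed by a computational-basis measurement of $\mH_{Y'}\otimes\mH_B$, and the promise $y'\neq y$ lets us assume the Kraus operators labelled by $y'=y$ vanish. Given Alice's register $E$, the reconstruction measurement runs Bob ``in superposition'': (i) prepare $\frac1{p^{n/4}}\sum_{y}\ket y$; (ii) apply $\sum_y\proj y\otimes V^y$; (iii) compute $z=g(y,y')$ into a fresh register $Z$; (iv) apply the phase $b\mapsto\omega^{\,b}$ on $\mH_B$, with $\omega$ a primitive $p$-th root of unity; (v) run the dilation of $V^y$ in reverse and post-select $\mH_{Y'},\mH_B$ onto $\ket0$ (outputting a uniform guess otherwise), so that on the kept branch this acts as the co-isometry $(V^y)^\dagger$; (vi) using that $g$ takes each value at most twice, apply the unitary that, controlled on $Z=z$, relabels the (at most two) values of $y$ occurring with $z$ by a single qubit $C$, thereby clearing the input register; (vii) apply a quantum Fourier transform over $\F_p^n$ to $Z$, producing a register $W$; (viii) measure $W$ and output the outcome as the guess for $X$. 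The construction is arranged so that on each branch where Bob wins --- $y'\neq y$ and $b=\langle x,g(y,y')\rangle$ --- step (iv) stamps exactly the phase $\omega^{\langle x,z\rangle}$ on the $\ket z_Z$ component, so that after (v)--(vi) the part of the state coming from winning branches is, up to the error due to losing branches, of the form $\sum_z\beta_z\,\omega^{\langle x,z\rangle}\ket z_Z\otimes(\cdots)_{CE\hat E}$, which the Fourier transform turns into amplitude $\frac1{p^{n/2}}\sum_z\beta_z$ on the outcome $W=x$.

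The quantitative heart of the argument is to lower bound $\E_y\sum_x p_x\,\|(\bra x_W\otimes I)\,\Psi_x\|^2$, where $\Psi_x$ is the final state on the $X=x$ branch. Expanding the square and separating the winning Kraus operators from the rest, the winning part contributes a coherent sum whose magnitude I would bound below by Cauchy--Schwarz in terms of the average winning probability $\ge\frac1p+\eps$ --- this is where the square, hence the $\eps^2$, enters --- while the at-most-two-to-one property of $g$ bounds the renormalization incurred in passing to the $Z$ register ($\sum_z|\beta_z|^2\le 2\sum_{(y,y')}|\alpha_{(y,y')}|^2$) and, crucially, guarantees that $Z$ does not collapse (without it Bob could deterministically fix $y'$ and the Fourier peak would disappear). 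Subtracting the ``$\frac1p$-baseline'' --- the part of Bob's success attributable to a random guess, which contributes only uniform background after step (vii) --- leaves the advantage, and accounting for constants gives $p_{guess}(X|E)_\rho\ge\frac{\eps^2}{2p^{n/2}}$.

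The step I expect to be the main obstacle is precisely this last bound: controlling the interference between winning and losing branches so that the losing branches, and the $b\neq\langle x,z\rangle$ cross terms, cannot destructively cancel the winning contribution to the $W=x$ amplitude, while tracking the entanglement of the residual $E\hat E$ register and of the compression qubit $C$ with $Z$. Getting a clean universal constant (the ``$2$'') in the final bound and isolating the ``$\frac1p$-baseline'' correctly is the delicate part; everything before it --- the reduction to $p_{guess}$, the canonical form of Bob's strategy, and the description of the reconstruction circuit --- is routine.
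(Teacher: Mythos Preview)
Your overall plan---run Bob in superposition over $y$, attach a phase depending on his guess $b$, use the $2$-to-$1$ property of $g$ to compress $(y,y')$ into $Z$ plus one bit, Fourier transform $Z$, and measure---is exactly the paper's reconstruction strategy, and the Cauchy--Schwarz step converting advantage into guessing probability is also the same. The gap is in the piece you yourself flag as ``the delicate part'': isolating the $\tfrac1p$ baseline.

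Your step~(iv) applies the single phase $b\mapsto\omega^b$, i.e.\ the Fourier character at $u=1$. Carrying your circuit through (and noting that with a projective dilation your steps (ii)--(v) on the kept branch implement exactly $\ket\psi\mapsto\sum_{y'}\ket{g(y,y')}\,A_{y,1}^{y'}\ket\psi$, where $A_{y,u}^{y'}:=\sum_b\omega^{ub}M_y^{y',b}$), the final Cauchy--Schwarz bound yields only $p_{guess}(X|E)\ge\frac{1}{2p^{n/2}}|\mathrm{term}_1|^2$, with
\[
\mathrm{term}_u\;:=\;p^{-n/2}\sum_{x,y,y'}\omega^{-u\langle x,g(y,y')\rangle}\,\Tr\big(A_{y,u}^{y'}\rho_E^x\big)\;.
\]
But the success probability decomposes as $\tfrac1p+\eps=\tfrac1p\sum_{u\in\F_p}\mathrm{term}_u$ with $\mathrm{term}_0=1$ identically (because $\sum_{y'}A_{y,0}^{y'}=I$), so one only has $\eps\le\max_{u\neq0}|\mathrm{term}_u|$; for $p\ge5$ it is easy to arrange $\mathrm{term}_1=0$ while $\eps>0$, so the fixed choice $u=1$ does not give the claimed bound. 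Your heuristic that ``the baseline contributes only uniform background after step~(vii)'' is not what happens: with $u=1$ fixed the $u=0$ component is already absent (since $\sum_b\omega^b=0$), and the losing branches $b\neq\langle x,g\rangle$ can genuinely cancel the winning ones in the amplitude on $W=x$.

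The paper's resolution is precisely the step you are missing: \emph{before} building the circuit, Fourier-expand the success condition in $b$, observe that the $u=0$ term equals exactly $\tfrac1p$, and then choose the $u\neq0$ maximizing $|\mathrm{term}_u|$; use \emph{that} $u$ as the phase in step~(iv). With this one change your argument goes through verbatim and gives the stated constant. A secondary simplification you may also adopt: once you work with the operators $A_{y,u}^{y'}$ directly, the map $T_{y,u}:\ket\psi\mapsto\sum_{y'}\ket{y'}A_{y,u}^{y'}\ket\psi$ is already an isometry by projectivity of the $M_y^{y',b}$, so the reverse-and-post-select of step~(v) is unnecessary---the post-selection in fact always succeeds, and there is no ``otherwise'' branch to handle.
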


\begin{proof}  
Let $\rho_{XE} = \sum_x \ket{x}\!\bra{x}_X \otimes \rho_E^x$ be the cq state prepared by Alice. A strategy for Bob is a family of POVM $\{M_{y}^{y',b}\}_{y',b}$, indexed by $y\in \F_{p}^{n/2}$ and with outcomes $(y',b) \in \F_{p}^{n/2}\times \F_p$. We can assume that $\{M_{y}^{y',b}\}_{y',b}$ is projective, since Alice can send ancilla qubits along with $\rho$ and allow Bob to apply Naimark's theorem to his POVM in order to obtain a projective measurement; this will change neither his success probability nor the min-entropy of Alice's state. By definition, the players' success probability in $\guess(n,p,g)$ is 
\begin{equation}\label{eq:ps-0}
\frac{1}{p} + \eps \,=\, \sum_x \,p^{-\frac{n}{2}} \sum_{y} \sum_{y'} \sum_b\, {\delta}({b, \langle x, g(y,y')\rangle})\, \Tr\big( M_{y}^{y',b} \,\rho_E^x\big)\;.
\end{equation}
For each $u\in \F_p$ let $A_{y,u}^{y'} = \sum_b \omega^{ub} M_{y}^{y',b}$, where $\omega = e^{\frac{2i\pi}{p}}$. By inversion, $M_{y}^{y',b} = \frac{1}{p} \sum_u \omega^{-ub} A_{y,u}^{y'}$. Replacing this into~\eqref{eq:ps-0} we obtain
\begin{align}
\frac{1}{p} + \eps & =\frac{1}{p}\sum_u \,p^{-\frac{n}{2}} \sum_{y} \sum_{y'} \sum_b\, {\delta}({b, \langle x, g(y,y')\rangle})\, \omega^{-ub} \,\Tr\big( A_{y,u}^{y'} \,\rho_E^x\big)\notag\\
&\leq \frac{1}{p} + \Big(1-\frac{1}{p}\Big)\max_{u\neq 0}  \Big|p^{-\frac{n}{2}} \sum_{y} \sum_{y'} \sum_b\, {\delta}({b, \langle x, g(y,y')\rangle})\, \omega^{-ub} \,\Tr\big( A_{y,u}^{y'} \,\rho_E^x\big)\Big| \;,\label{eq:ps-1}
\end{align}
where for the second line we used that $\sum_{y'} A_{y,0}^{y'} = \sum_{y',b} M_y^{y',b} = I_E$. 

Fix $u\neq 0$ that achieves the maximum in~\eqref{eq:ps-1}. For fixed $y$, define the  map $T_{y,u}$ on $\mH_E$ by
\ba
T_{y,u}: \ket{\psi} \mapsto \sum_{y'} \ket{y'} A_{y,u}^{y'} \ket{\psi}\;.
\ea 
$T_{y,u}$ has norm at most $1$, since 
$$T_{y,u}^\dag T_{y,u} =\sum_{y'} (A_{y,u}^{y'})^\dagger  A_{y,u}^{y'}= \sum_{y'} \,\sum_b \Big(\,M_{y}^{y',b} \Big)^2 = I_E\;.$$
For the second equality we used that  $\{M_{y}^{y',b}\}_{y',b}$ is projective. Therefore we can complete $T_{y,u}$ into a unitary map by adding arbitrary, unused outcomes; for simplicity we'll assume $T_{y,u}$ itself is unitary. 

Consider the following guessing strategy for an adversary holding side information $\rho_E^x$ about $x$. The adversary first prepares a uniform superposition over $y$. Conditioned on $y$, it applies the map $T_{y,u}$. It computes $g(y,y')$ in an ancilla register, and erases $(y,y')$, except for one bit of information $r(y,y')\in\{0,1\}$, which specifies which pre-image $(y,y')$ is, given $ g(y,y')$ (this is possible by the $2$-to-$1$ assumption on $g$). The adversary applies a Fourier transform on the register containing $g(y,y')$, using $\omega_u = \omega^{-u}$ as primitive $p$-th root of unity (this is possible since $u\neq 0$ and $p$ is prime). It measures the result and outputs it as a guess for $x$. Formally, the transformation this implements is
\begin{align*}
\ket{\psi}  &\mapsto p^{-\frac{n}{4}} \sum_y \ket{y} \sum_{y'} \ket{y'} A_{y,u}^{y'} \ket{\psi}\\
&\mapsto  p^{-\frac{n}{4}} \sum_{y,y'}  \ket{g(y,y')} \ket{r(y,y')} A_{y,u}^{y'} \ket{\psi}\\
&\mapsto   \sum_v \ket{v} \Big(p^{-\frac{3n}{4}} \sum_{y,y'} \omega_u^{\langle v, g(y,y')\rangle} \ket{r(y,y')} A_{y,u}^{y'} \Big)\ket{\psi}\;.
\end{align*}
The adversary's success probability in guessing $v=x$ on input $\rho_E^x$ is therefore 
\begin{align}
p_s &= \sum_x \,\Tr\Big( \Big(p^{-\frac{3n}{4}} \sum_{y,y'} \omega_u^{\langle x, g(y,y')\rangle} \ket{r(y,y')}\otimes A_{y,u}^{y'} \Big) \rho_E^x \Big(p^{-\frac{3n}{4}} \sum_{y,y'} \omega_u^{-\langle x, g(y,y')\rangle} \bra{r(y,y')}\otimes  (A_{y,u}^{y'})^\dagger \Big)\Big)\notag\\
&= \frac{1}{p^{\frac{3n}{2}}} \sum_x\, \sum_{r\in\{0,1\}} \Tr\Big( \Big(\sum_{y,y':\, r(y,y')=r} \omega_u^{\langle x, g(y,y')\rangle} A_{y,u}^{y'} \Big)^\dagger\Big(\sum_{y,y':\, r(y,y')=r} \omega_u^{\langle x, g(y,y')\rangle} A_{y,u}^{y'} \Big) \rho_E^x \Big)\notag\\
&\geq \frac{1}{p^{\frac{3n}{2}}} \sum_x \, \frac{1}{2}\,\Tr\Big( \Big(\sum_{y,y'} \omega_u^{\langle x, g(y,y')\rangle} A_{y,u}^{y'} \Big)^\dagger \Big(\sum_{y,y'} \omega_u^{\langle x, g(y,y')\rangle} A_{y,u}^{y'} \Big) \rho_E^x \Big)\; ,\label{eq:ps-2}
\end{align}
where for the last line we used $a^2 + b^2 \geq \frac{1}{2} (a+b)^2$. Now, recall from~\eqref{eq:ps-1} and our choice of $u$ that 
\begin{align}
\eps &\leq \,p^{-\frac{n}{2}} \Big|\sum_{x,y,y'} \, \omega^{-u(\langle x, g(y,y')\rangle)} \,\Tr\big( A_{y,u}^{y'} \,\rho_E^x\big)\Big|\notag\\
&\leq p^{-\frac{n}{2}}\Big(\sum_x \Tr(\rho_E^x)\Big)^{1/2}\Big( \sum_x\, \Tr\Big(  \Big(\sum_{y,y'} \, \omega^{-u(\langle x, g(y,y')\rangle)} \, A_{y,u}^{y'}\Big) \,\rho_E^x\,\Big(\sum_{y,y'} \, \omega^{-u(\langle x, g(y,y')\rangle)} \, A_{y,u}^{y'}\Big)^\dagger\Big)\Big)^{1/2}\;,\label{eq:ps-3}
\end{align}
where the inequality is Cauchy-Schwarz.  Comparing~\eqref{eq:ps-2} and~\eqref{eq:ps-3} gives 
$$ p_s \,\geq\, \frac{1}{2} p^{-\frac{n}{2}} \eps^2\;.$$
We conclude using that by Lemma~\ref{lem:minentropyguess}, $\Hmin(X|E) \leq -\log p_s$. 
\end{proof}

\subsection{Proof of Theorem~\ref{thm:main}}

In this section we give the proof of Theorem~\ref{thm:main}. Towards this we first prove a preliminary lemma showing that a certain function, based on the definition of $\nmExt$, has few collisions. 

\begin{lemma}\label{lem:221}
Let $p \neq 2$ be a prime and $n$ an even integer. For $a \in \F_{p}$ define a function $g_a:\F_{p}^{n/2} \times \F_{p}^{n/2} \to \F_{p}^n$ by 
\ba\label{eq:def-ga}
g_a(y,y') \,=\, y+ay'\|y^2+ay'^2\;,
\ea
where $y^2$ is defined in Section~\ref{sec:notation}. Then for any  $a \in \F_{p}, a\neq 0$ and $z\in\F_p^n$ there are at most $2$ distinct pairs $(y,y')$ such that $y'\neq  y$ and $g_a(y,y')=z$.
\end{lemma}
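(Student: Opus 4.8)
The plan is to fix $a \neq 0$ and $z = z_1 \| z_2 \in \F_p^n$ (thinking of $z_1, z_2$ as elements of $\F_{p^{n/2}}$ via the bijection $\phi$ of Section~\ref{sec:notation}), and count solutions $(y,y')$ with $y \neq y'$ to the system
\[
y + ay' = z_1, \qquad y^2 + a y'^2 = z_2,
\]
where $y, y' \in \F_{p^{n/2}}$ and all arithmetic is in the field $\F_{p^{n/2}}$. First I would use the first equation to substitute $y = z_1 - a y'$ into the second, obtaining $(z_1 - ay')^2 + a y'^2 = z_2$, i.e. a quadratic equation in $y'$ alone: $(a^2 + a) y'^2 - 2 a z_1 y' + (z_1^2 - z_2) = 0$. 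The leading coefficient is $a(a+1)$. Since $p \neq 2$, the field $\F_{p^{n/2}}$ has odd characteristic, so a nonzero quadratic polynomial over a field has at most $2$ roots; this immediately gives at most $2$ values of $y'$, hence at most $2$ pairs $(y,y')$, \emph{provided} the leading coefficient $a(a+1)$ is nonzero.

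The one case requiring separate treatment is $a + 1 = 0$, i.e. $a = -1$ (which can occur since $p \neq 2$ means $-1 \neq 1$). Then the quadratic degenerates: the equation becomes $-2a z_1 y' + (z_1^2 - z_2) = 0$, i.e. $2 z_1 y' = z_1^2 - z_2$ (using $a = -1$). If $z_1 \neq 0$ this has the unique solution $y' = (z_1^2 - z_2)/(2 z_1)$, so at most one pair — fine. If $z_1 = 0$, then $a = -1$ forces $y = -ay' = y'$, so the constraint $y' \neq y$ is violated for \emph{every} candidate; there are $0$ valid pairs. Either way the bound of $2$ holds. Note we never needed to explicitly discard the $y = y'$ solutions in the generic case — the claim only asserts $\leq 2$, and the at-most-$2$ count of the quadratic already suffices; the $y \neq y'$ hypothesis is only genuinely used (to kill solutions) in the degenerate $a = -1$, $z_1 = 0$ subcase, and is harmless elsewhere.

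The main (minor) obstacle is just bookkeeping: being careful that the reduction to a single-variable polynomial is valid (the first equation lets us eliminate $y$ with no loss since $a$ is fixed and the map $y' \mapsto z_1 - ay'$ is a bijection), and handling the vanishing-leading-coefficient case $a = -1$ cleanly. There is no deep content — the key facts are that $\phi$ respects addition and scalar multiplication so that $g_a(y,y') = z$ really does decouple coordinate-wise into the two displayed equations over $\F_{p^{n/2}}$, and that a field of odd characteristic has at most two square roots of any element. I would write the argument in two short cases ($a \neq -1$ and $a = -1$) and conclude.
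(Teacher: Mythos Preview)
Your proposal is correct and follows essentially the same approach as the paper: eliminate $y$ via the first equation to obtain a quadratic in $y'$ over $\F_{p^{n/2}}$, then use that a nonzero polynomial of degree $\leq 2$ has at most two roots, with the degenerate case (leading coefficient vanishing, i.e.\ $a=-1$) handled separately and resolved by the $y\neq y'$ hypothesis. The only cosmetic difference is that the paper phrases the degenerate case as ``all coefficients zero'' rather than splitting on $a=-1$ first; your version is arguably cleaner.
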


\begin{proof}
We use the bijection defined in Section~\ref{sec:notation} to interpret $y$ and $y'$  in $\F_{p^{n/2}}$. For $a\neq0$, we fix an image $g_a=(c,d)$, where $c,d$ are interpreted as elements of $\F_{p^{n/2}}$, and solve for $(y,y')$ in $\F_{p^{n/2}}\times \F_{p^{n/2}}$ satisfying
\ba  \label{eq:g1storder}
 y+ay' &=c \;,\\ 
 y^2+ay'^2 &=d\;.
\ea
 Using~(\ref{eq:g1storder}) to eliminate $y$ we get
\ba
&(c-ay')^2+ay'^2=d \nn \\
&\Ra (a+a^2)y'^2+(-2ca)y'+(c^2-d)=0 \; . \label{eq:g1storder-2}
\ea
Since~\eqref{eq:g1storder-2} is a quadratic equation, there are at most two solutions unless all coefficients are zero. Since $p\neq2$, $-2\neq 0$. If all coefficients are zero, $-2\neq 0$, and $a \neq0$,  then $c=d=0,a=-1$, which implies $y' =  y$ by~(\ref{eq:g1storder}) and contradicts our assumption. So there are at most two different $y'$ that can be mapped to $(c,d)$. By~(\ref{eq:g1storder}) each $y'$ corresponds to a unique $y$, so there are at most two pre-images.
\end{proof}

We are ready to give the proof of Theorem~\ref{thm:main}. The proof depends on a simple lemma relating trace distance and guessing measurements, Lemma~\ref{lem:dist2guess}, which is stated and proved after the proof of the theorem.

\begin{proof}[Proof of Theorem~\ref{thm:main}]
Let $k=\L(\frac{n}{2}+6\R)\log p-1+4\log\oo{\eps}$ and 
$\rho_{XE} \in \Density(\C^{p^{n
}} \otimes \mH_E)$ an $(n
\log p,k)$-source. Fix a CPTP map $\Adv: \Lin(\C^{p^{n/2}}\otimes \mH_E) \to \Lin(\C^{p^{n/2}} \otimes \mH_{E'})$ with no fixed points, and define $\sigma_{\nmExt(X,Y) \nmExt(X,Y') YY' E'}$ as in Definition~\ref{def:nmext}. Given the definition of $\nmExt$, to prove the theorem we need to show that 
\ba\label{eq:nmext-last}
(\vev{X,Y\|Y^2},\vev{X,Y'\|Y'^2},Y',Y,E')_\sigma \approx_\eps (U_{\F_p},\vev{X,Y'\|Y'^2},Y',Y,E')_\sigma \;.
\ea 
Applying the XOR lemma, Lemma~\ref{lem:xor2}, with $X_0 =\vev{X,Y||Y^2}$, $X = \vev{X,Y'||Y'^2}$, $E = (Y',Y,E')$ and $t=1$,~\eqref{eq:nmext-last} will follow once it is shown that 
\ba\label{eq:nm-final-1}
&(\vev{X,Y||Y^2}+a \vev{X,Y'\|Y'^2},Y',Y,E')_\sigma \approx_{\frac{2\eps^2}{p^2}} (U_{\F_p},Y',Y,E')_\sigma \;, 
\ea
 for all $a\in \F_p$. For $a=0$, ~\eqref{eq:nm-final-1} follows from the fact that inner product is a quantum-proof two source extractor, which can be shown by the combination of  Theorem 5.3 of \cite{clw14} and Lemma~1 in~\cite{LLST05}. 
 For non-zero $a\in \F_p$, recall the function $g_a: \F_{p}^{n/2} \times \F_{p}^{n/2} \to \F_p^n$ defined in~\eqref{eq:def-ga}. 
Lemma~\ref{lem:221} shows that for any $a\neq 0$, the restriction of $g_a$ to $\{(y,y'):\,y\neq y'\}$ is at most $2$-to-$1$, which is ensured by the fact that $\Adv$ has no fixed points. 
We establish~\eqref{eq:nm-final-1} by contradiction. Assume thus that 
\ba \label{eq:beforecommu}
(\vev{X,g_a(Y,Y')},Y',Y,E' )_\sigma \approx_{\frac{2\eps^2}{p^2}}( U_{\F_p},Y',Y,E' )_\sigma
\ea
does not hold, for some non-zero $a\in \F_p$. Fix such an $a$ and write $g_a$ for $g$. 
From Lemma~\ref{lem:dist2guess} it follows that there exists a POVM measurement $\{M^z\}_{z\in\F_p}$ on $\sigma_{Y'YE'}$ such that 
\ba\label{eq:nm-final-2}
 \sum_{z\in \F_p}\, \Tr\big( M^z \sigma^z_{YY'E} \big) \,\geq\,\oo{p}+ \frac{2\eps^2}{p^3}\;,
\ea
where $\sigma^z_{YY'E}$ is the reduced density of $\sigma$ on $YY'E$ conditioned on $\vev{X,g(Y,Y')}=z$. To conclude the proof of the theorem we show that the adversary's map $\Adv$ and the POVM $\{M^z\}$ can be combined to give a ``successful'' strategy for the players in the communication game introduced in Section~\ref{sec:com-game}. To see this, consider the state $\rho_{XE}$ that is instantiated as the source for the extractor; by definition $\Hmin(X|E)_\rho = k = \L(\frac{n}{2}+6\R)\log p-1+4\log\oo{\eps}$. In the third step of the game, Bob applies the map $\Adv$ to the registers $Y$ and $E$ containing his input $Y$ and the state sent by Alice, and measures to obtain an outcome $Y'$. He then applies the measurement $\{M^z\}$ on his registers $(Y,Y',E)$ to obtain a value $b=z\in \F_p$ that he provides as his output in the game. By~\eqref{eq:nm-final-2} it follows that this strategy succeeds in the game with probability at least $\oo{p}+ \frac{2\eps^2}{p^3}$, which by Lemma~\ref{lem:commugame} implies  $\Hmin(X|E) \leq \frac{n}{2}\log p+1+2\log\frac{p^3}{2\eps^2} $, contradicting our choice of $k$. This proves~\eqref{eq:nm-final-1} and thus the theorem. 
\end{proof}

The following lemma is used in the proof of the theorem. 

\begin{lemma} \label{lem:dist2guess}
Let $\rho_{XE} = \sum_{x} \ket{x}\!\bra{x}\otimes \rho_E^x$ be such that 
$$\fot \|(X,E)-(U,E)\|_1\,=\,\fot \big\|\rho_{XE}- U_X \otimes \rho_E \big\|_1 \,= \, \eps\;,$$
 where $U_X$ is the totally mixed state on $X$ and $\rho_E = \sum_x \rho_E^x$. Then there exists a POVM $\{M_x\}$ on $\rho_E$ such that 
$$\sum_{x} \Tr(M_x\rho_E^x) = \frac{1}{d_X} + \frac{\eps}{d_X}\;.$$
\end{lemma}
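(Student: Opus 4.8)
The plan is to construct the POVM $\{M_x\}$ explicitly from the positive part of the witness operator $\rho_{XE}-U_X\otimes\rho_E$. First I would reduce the target quantity to a ``gain over the trivial strategy''. Since $\rho_{XE}$ is a cq state, $\rho_{XE}-U_X\otimes\rho_E=\sum_x\proj{x}\otimes\Delta_x$ is block diagonal, where $\Delta_x:=\rho_E^x-\frac{1}{d_X}\rho_E$; hence $\sum_x\onenorm{\Delta_x}=\onenorm{\rho_{XE}-U_X\otimes\rho_E}=2\eps$ and $\sum_x\Delta_x=\rho_E-\rho_E=0$. For \emph{any} POVM $\{M_x\}$ on $\mH_E$, substituting $\rho_E^x=\Delta_x+\frac{1}{d_X}\rho_E$ and using $\sum_xM_x=I_E$ and $\Tr\rho_E=1$ gives
\[
\sum_x\Tr\!\big(M_x\rho_E^x\big)\,=\,\frac{1}{d_X}\Tr\!\big(\rho_E\big)+\sum_x\Tr\!\big(M_x\Delta_x\big)\,=\,\frac{1}{d_X}+\sum_x\Tr\!\big(M_x\Delta_x\big)\;.
\]
So it suffices to produce a POVM with $\sum_x\Tr(M_x\Delta_x)=\eps/d_X$.

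Next I would bring in the positive-part projectors. Let $P_x$ be the orthogonal projector onto the support of the positive part $\Delta_x^{+}$ of $\Delta_x$, so that $P_x\Delta_x=\Delta_x^{+}$ and $\Tr(P_x\Delta_x)=\frac{1}{2}\big(\onenorm{\Delta_x}+\Tr\Delta_x\big)$. Summing over $x$ and using $\sum_x\Delta_x=0$ and $\sum_x\onenorm{\Delta_x}=2\eps$ yields the key identity $\sum_x\Tr(P_x\Delta_x)=\eps$. The ``naive'' completion $M_x=P_x+\frac{1}{d_X}\big(I_E-\sum_{x'}P_{x'}\big)$ would in fact realise the full gain $\eps$, but it need not be positive semidefinite, since the $P_{x'}$ need not have (almost) disjoint supports and $\sum_{x'}P_{x'}$ can have operator norm as large as $d_X$. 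The fix --- and this normalisation is the one genuinely delicate step --- is to shrink by a factor $d_X$ and set
\[
M_x\,:=\,\frac{1}{d_X}\Big(I_E-\frac{1}{d_X}\sum_{x'}P_{x'}+P_x\Big)\;.
\]

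Finally I would verify the three POVM/value requirements. \emph{Completeness:} $\sum_xM_x=\frac{1}{d_X}\big(d_X\,I_E-\sum_{x'}P_{x'}+\sum_xP_x\big)=I_E$. \emph{Positivity:} each $0\le P_{x'}\le I_E$ forces $\sum_{x'}P_{x'}\le d_X\,I_E$, so $I_E-\frac{1}{d_X}\sum_{x'}P_{x'}\ge 0$, and adding $P_x\ge 0$ keeps $M_x\ge 0$ (whence also $M_x\le I_E$, automatically, from completeness). \emph{Value:} expanding $M_x$ and using $\sum_x\Tr\Delta_x=0$ to annihilate the $I_E$ term, $\sum_{x'}\Tr\!\big(P_{x'}\sum_y\Delta_y\big)=0$ to annihilate the $\sum_{x'}P_{x'}$ term, and the key identity for the $P_x$ term, one gets $\sum_x\Tr(M_x\Delta_x)=\frac{1}{d_X}\big(0-0+\eps\big)=\eps/d_X$. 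Substituting into the first display gives $\sum_x\Tr(M_x\rho_E^x)=\frac{1}{d_X}+\frac{\eps}{d_X}$, as claimed; the degenerate case $d_X=1$ (where $\eps=0$ and $M_1=I_E$) is trivial. I expect the only real obstacle to be spotting the correct scaling in the definition of $M_x$ --- the completion of $\{P_x\}$ to a bona fide POVM is exactly what forces the loss of a factor $d_X$ relative to the full witness gain; everything else is routine bookkeeping with block-diagonal cq states.
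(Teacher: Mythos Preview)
Your proof is correct and follows essentially the same approach as the paper: the paper's $M'_x$ is your $P_x$, and its POVM $M_x=\frac{1}{d_X}\big(M'_x+(I_E-\frac{1}{d_X}\sum_{x'}M'_{x'})\big)$ is exactly your $M_x=\frac{1}{d_X}\big(P_x+I_E-\frac{1}{d_X}\sum_{x'}P_{x'}\big)$. The only cosmetic difference is that the paper computes $\sum_x\Tr(M_x\rho_E^x)$ directly, whereas you first reduce to the gain $\sum_x\Tr(M_x\Delta_x)$ and invoke $\sum_x\Delta_x=0$ explicitly.
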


\begin{proof}
Since $\rho_{XE}$ is a cq state,  $\|\rho_{XE}- U_X \otimes \rho_E\|_1 = \sum_x \|\rho_E^x - \frac{1}{d_X}\rho_E\|_1$. For each $x$, let $M'_x $ be the projector onto the positive eigenvalues of $ \|\rho_E^x - \frac{1}{d_X}\rho_E\|_1$, so 
\begin{equation}\label{eq:norm1-1}
\sum_x \Tr( M'_x(\rho_E^x - \oo{d_X}\rho_E)) = \fot\sum_x \|\rho_E^x -\oo{d_X}\rho_E\|_1\;.
\end{equation}
Let $M' = \sum_x M'_x$ and $M_x =\oo{d_X}( M'_x + (I_E-\oo{d_X} M'))$. Then $M_x\geq 0$ and $\sum_x M_x =\oo{d_X}(M'+d_XI_E-M') = I_E$. Moreover,
\begin{align*}
\sum_x \Tr(M_x\rho_E^x) &=\sum_x \Tr\L[ \oo{d_X}( M'_x + (I_E-\oo{d_X} M')) \rho_E^x \R] \\
&= \frac{1}{d_X} \L[\sum_x \L(\Tr( M'_x \rho_E^x )\R)+ \Tr\L((I_E-\oo{d_X} M')  \rho_E \R)\R] \\
&= \frac{1}{d_X} +\frac{1}{d_X} \sum_x \Big(\Tr(M'_x \rho_E^x)  - \frac{1}{d_X}\Tr(M'_x \rho_E) \Big)\\
&= \frac{1}{d_X} + \frac{1}{d_X} \Big(\sum_x \Tr\big( M'_x (\rho_E^x - \frac{1}{d_X}\rho_E)\big)\Big)\\
&= \frac{1}{d_X} + \frac{1}{2d_X} \sum_x \Big\|\rho_E^x - \frac{1}{d_X}\rho_E\Big\|_1
\end{align*}
by~\eqref{eq:norm1-1}.
\end{proof}

\section{Privacy amplification}
\label{sec:pa}

Dodis and Wichs~\cite{DW09} introduced a framework for constructing a two-message  privacy amplification protocol from any non-malleable extractor. In~\cite{CV16} it is shown that the same framework, when instantiated with a quantum-proof non-malleable extractor $\nmExt$ as defined in Definition~\ref{def:nmext}, leads to a protocol that is secure against active quantum adversaries. In Section~\ref{sec:dw-nm} we recall the Dodis-Wichs protocol, and state the security guarantees that follow by plugging in our non-malleable extractor construction. The guarantees follows from the quantum extension of the Dodis-Wichs results in~\cite{CV16}; since that work has not been published we include their results regarding the Dodis-Wichs protocol in Appendix~\ref{app:pa}.

In Section~\ref{one-round} we show that a different protocol for privacy amplification due to Dodis et al.~\cite{DKKRS12}, whose main advantage is of being a one-round protocol, is also quantum-proof. The construction and analysis of the protocol of~\cite{DKKRS12} is simple, with the drawback of a large entropy loss.

We start with the definition of a quantum-secure privacy amplification protocol against active adversaries. A privacy amplification protocol $(P_A, P_B)$ is defined as follows. The protocol is executed by two parties Alice and Bob sharing a
secret $X\in \{0,1\}^n$, whose actions are described by $P_A$, $P_B$ respectively.\footnote{It is not necessary for the definition to specify exactly how the protocols are formulated; informally, each player's actions is described by a sequence of efficient algorithms that compute the player's next message, given the past interaction.} In addition there is an active, computationally
unbounded adversary Eve, who might have some quantum side information $E$
correlated with $X$ but satisfying $\Hmin(X|E)_{\rho} \ge k$, where $\rho_{XE}$ denotes the initial state at beginning of the protocol. 

Informally, the goal for the protocol is that
whenever a party (Alice or Bob) does not reject, the key $R$ output by
this party is random and statistically independent of Eve's
view. Moreover, if both parties do not reject, they must output the
same keys $R_A=R_B$ with overwhelming probability.

More formally, we assume that Eve is in full control of the
communication channel between Alice and Bob, and can arbitrarily
insert, delete, reorder or modify messages sent by Alice and Bob to
each other. 
At the end of the
protocol, Alice outputs a key $R_A\in
\{0,1\}^m \cup \{\perp\}$, where $\perp$ is a special symbol indicating
rejection. Similarly, Bob outputs a key $R_B \in \{0,1\}^m \cup
\{\perp\}$. The following definition generalizes the classical definition in \cite{DLWZ11}.

\begin{definition}\label{privamp}
Let $k,m$ be integer and $\eps\geq 0$. A privacy amplification protocol $(P_A, P_B)$ is
a $(k, m, \eps)$-\emph{privacy amplification protocol secure against active quantum adversaries} if it satisfies
the following properties for any initial state $\rho_{XE}$ such that $\Hmin(X|E)_\rho \geq k$, and where  $\sigma$ be the joint state of Alice, Bob, and Eve at the end of the protocol:
\begin{enumerate}
\item \emph{Correctness.} If the adversary does not interfere with the protocol, then $\Pr[R_A=R_B \land~ R_A\neq \perp \land~ R_B\neq \perp]=1$.
\item \emph{Robustness.} This property comes in two flavors. The first is {\em pre-application} robustness, which states that even in the presence of an active adversary, $\Pr[R_A \neq R_B \land~ R_A \neq \perp \land~ R_B \neq \perp]\le \eps$. The second is {\em post-application} robustness, which is defined similarly, except the adversary is additionally given the key $R_A$ that is the result of the interaction $(P_A,P_E)$, and the key $R_B$ that results from the interaction $(P_E,P_B)$, where $P_E$ denotes the adversary's actions in its interaction with Alice and Bob. 

\item \emph{Extraction.} Given a string $r\in \{0,1\}^m\cup \{\perp\}$, let $\mathsf{purify}(r)$ be a random variable on $m$-bit strings that is deterministically equal to $\perp$ if $r=\perp$, and is otherwise uniformly distributed. Let $V$ denotes the transcript of an execution of the protocol execution, and $\rho_{E'}$ the final quantum state possessed by Eve. Then the following should hold: 
\[(R_A, V, E')_\sigma \approx_{\eps} (\mathsf{purify}(R_A), V, E')_\sigma
~~~~\mbox{and}~~~~
(R_B, V, E')_\sigma \approx_{\eps} (\mathsf{purify}(R_B), V, E')_\sigma\;.\]
In other words, whenever a party does not reject, the party's key is indistinguishable from a fresh random string to the adversary.
%
%
\end{enumerate}
The quantity $k-m$ is called the \emph{entropy loss}. 
\end{definition}

\subsection{Dodis-Wichs protocol with non-malleable extractor}
\label{sec:dw-nm}

Here we first recall the Dodis-Wichs protocol for privacy amplification (hereafter called \emph{Protocol DW}), which is summarized in Figure~\ref{fig:pa}, and the required security definitions, taken from~\cite{CV16}. We then state the result obtained by instantiating the protocol with the quantum-proof non-malleable extractor from Theorem~\ref{thm:main}. 

\begin{figure}
\begin{protocol*}{Protocol DW}
\begin{description}
\item Let $d_X,d_Y,d_2,\ell,d_Z,t,k$ be integers and $\eps_\mac,\eps_\Ext,\eps_\nmExt>0$.
\item Let $\mac:\{0,\ldots,d_Z-1\} \times \{0,1\}^{d_2} \to \{0,1\}^t$ be a one-time $\eps_\mac$-information-theoretically secure message authentication code.
\item Let $\Ext:\{0,\ldots,d_X-1\}\times\{0,1\}^{d_2} \to \{0,1\}^m$ be a strong $(k-\ell-\log(1/\eps_{\Ext}),\eps_\Ext)$  quantum-proof extractor.
\item Let $\nmExt:\{0,\ldots,d_X-1\}\times\{0,,\ldots,d_Y-1\} \to \{0,\ldots,d_Z-1\}$ be a $(k,\eps_\nmExt)$ quantum-proof non-malleable extractor.
\item  It is assumed that both parties, Alice and Bob, have access to a shared random variable $X\in\{0,\ldots,d_X-1\}$.
\begin{step}
\item[1.] Alice samples a $Y_A$ uniformly from $\{0,,\ldots,d_Y-1\}$. She sends $Y_A$ to Bob. She computes $Z = \nmExt(X,Y_A)$.
\item[2.] Bob receives $Y'_A$ from Alice. He samples a uniform $Y_B \sim U_{{d_2}}$, and computes $Z' = \nmExt(X,Y'_A)$ and $W = \mac(Z',Y_B)$. He sends $(Y_B,W)$ to Alice. Bob then reaches the $\kd$ state and outputs $R_B = \Ext(X,Y_B)$.
\item[3.] Alice receives $(Y'_B,W')$ from Bob. If $W' = \mac(Z,Y'_B)$ she reaches the $\kc$ state and outputs $R_A = \Ext(X,Y_B')$. Otherwise she outputs $R_A=\perp$.
\end{step}
\end{description}
\end{protocol*}
\caption{The Dodis-Wichs privacy amplification protocol.}
\label{fig:pa}
\end{figure}

Aside from the use of a strong quantum-proof extractor (Definition~\ref{def:ext}) and a quantum-proof non-malleable extractor (Definition~\ref{def:nmext}), the protocol relies on an information-theoretically secure one-time message authentication codes, or $\mac$. This security notion is defined as follows.

\begin{definition}\label{def:mac}
A function $\mac:\{0,\ldots,d_Z-1\}\times\{0,1\}^d \to \{0,1\}^t$ is an \emph{$\eps_\mac$-information-theoretically secure one-time message authentication code} if for any function $\mathcal{A}:\{0,1\}^d \times \{0,1\}^t \to \{0,1\}^d\times \{0,1\}^t$ it holds that for all $m\in \zo^d$
$$\Pr_{k\leftarrow {U_{Z}}}\big[ (\mac(k,m') = \sigma' ) \, \wedge \, (m'\neq m) : (m',\sigma') \leftarrow \mathcal{A}(m,\mac(k,m))\big] \,\leq\,\eps_\mac.$$  
\end{definition}
Efficient constructions of $\mac$ satisfying the conditions of Definition~\ref{def:mac} are known. The following proposition summarizes some parameters that are achievable using a construction based on polynomial evaluation.

\begin{proposition}[Proposition 1 in~\cite{renner2005universally}]\label{prop:mac}
For any $\eps_\mac>0$, integer $d > 0$, $d_Z \ge \frac{d^2}{\eps_\mac^2}$, there exists an efficient family of  $\eps_\mac$-information-theoretically secure one-time message authentication codes $$\{\mac:\{0,\ldots,d_Z-1\} \times \{0,1\}^d \to \{0,1\}^t\}_{d\in\N}$$ with $t\leq \log d + \log(1/\eps_\mac)$.
\end{proposition}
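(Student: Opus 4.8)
The plan is to instantiate the classical polynomial-evaluation (Carter--Wegman style) one-time message authentication code and verify that it meets all three requirements of the proposition simultaneously: $\eps_\mac$-security, tag length $t\le\log d+\log(1/\eps_\mac)$, and key space of size $d_Z$. First I would choose a parameter $\ell$ of size roughly $\log d+\log(1/\eps_\mac)$, work in the finite field $\F_q$ with $q=2^{\ell}$, and identify the key space $\{0,\dots,d_Z-1\}$ with a pair $(\alpha,\beta)\in\F_q\times\F_q$ (legitimate, since the hypothesis $d_Z\ge d^2/\eps_\mac^2$ allows $q^2\le d_Z$ for a suitable $\ell$). On a $d$-bit message $m$, split $m$ into $L=\lceil d/\ell\rceil$ blocks $m_1,\dots,m_L\in\F_q$ (zero-padding the last block) and set $\mac\big((\alpha,\beta),m\big)=\beta+\sum_{i=1}^{L}m_i\alpha^{i}\in\F_q$, a tag of length $t=\ell$.

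For the security proof I would condition on the transcript $(m,\sigma)$ that the adversary $\cA$ observes, where $\sigma=\mac((\alpha,\beta),m)$. The key point is that for every value of $\alpha$ there is exactly one $\beta$ consistent with $\sigma=\beta+\sum_i m_i\alpha^i$, so over a uniformly random key $\alpha$ remains uniform over $\F_q$ conditioned on $(m,\sigma)$; this is precisely where the \emph{one-time} restriction is used. A forgery $(m',\sigma')$ with $m'\neq m$ is then valid iff $\alpha$ is a root of the fixed polynomial $P(X)=\sum_{i=1}^{L}(m'_i-m_i)X^{i}-(\sigma'-\sigma)$, which is nonzero (because $m'\neq m$) and of degree at most $L$, hence has at most $L$ roots in $\F_q$. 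Therefore the forgery probability is at most $L/q=\lceil d/\ell\rceil/2^{\ell}\le d/2^{\ell}\le\eps_\mac$ by the choice of $\ell$. Efficiency is immediate: the computation is $O(L)$ arithmetic operations in $\F_{2^{\ell}}$, i.e. polynomial in $d$ and $\log(1/\eps_\mac)$.

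The only delicate part — and the ``main obstacle'', to the extent a textbook fact has one — is the parameter bookkeeping: one needs $2^{\ell}$ to sit in the window $[\,d/\eps_\mac,\ \sqrt{d_Z}\,]$, large enough that the error $d/2^{\ell}$ is below $\eps_\mac$ and that $2^{2\ell}\le d_Z$ so that a key genuinely encodes two field elements, yet small enough that $t=\ell\le\log d+\log(1/\eps_\mac)$. The assumption $d_Z\ge d^2/\eps_\mac^2$ makes this window nonempty up to rounding $d/\eps_\mac$ to the nearest power of two (a multiplicative slack of at most a constant, harmlessly absorbed), and \cite{renner2005universally} carries out exactly this accounting. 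Since the present paper only invokes the proposition as a black box inside Protocol DW, citing that reference is sufficient and no further ideas are needed.
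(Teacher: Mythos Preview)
Your proposal is correct and follows exactly the standard polynomial-evaluation construction the paper alludes to. In fact the paper gives no proof at all: it simply states the proposition with a citation to~\cite{renner2005universally} and the remark that the construction is ``based on polynomial evaluation,'' so your sketch supplies strictly more detail than the paper does.
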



The correctness and security requirements for the protocol are natural extensions of the classical case (see Definition 18 in~\cite{DW09}). Informally, the adversary has the following control over the outcome of the protocol. First, it possess initial quantum side information $E$ about the weak secret $X$ shared by Alice and Bob. That is, it has a choice of a cq source $\rho_{XE}$, under the condition that $\Hmin(X|E)$ is sufficiently large. Second, the adversary may intercept and modify any of the messages exchanged. In Protocol DW there are only two messages exchanged, $Y_A$ from Alice to Bob and $(Y_B,\sigma)$ from Bob to Alice. To each of these messages the adversary may apply an arbitrary transformation, that may depend on its side information $E$. We model the two possible attacks, one for each message, as CPTP maps $T_1 : \Lin(\mH_Y \otimes \mH_E) \to \Lin(\mH_Y \otimes \mH_{E'})$ and $T_2 : \Lin(\C^{2^{d_2}} \otimes \mH_{2^{t}}\otimes  \mH_{E'}) \to \Lin(\C^{2^{d_2}} \otimes \C^{2^{t}} \otimes \mH_{E''})$, where $\mH$ denotes the Hilbert space associated with system $E$. Note that we may always assume that $\mH$ is large enough for the adversary to keep a local copy of the messages it sees, if it so desires. 





The following result on the security of protocol DW is shown in~\cite{CV16}. We include the proof in Appendix~\ref{app:pa}. 

\begin{theorem}\label{thm:pa-sec}
Let $k,t,d_Z$ and $\eps_\mac,\eps_\Ext,\eps_\nmExt$ be parameters of Protocol DW, as specified in Figure~\ref{fig:pa}. Let $\nmExt$ be a $(k,\eps_\nmExt)$ quantum-proof non-malleable extractor, $\Ext$ a strong $(k-\log d_Z-\log(1/\eps_\Ext),\eps_\Ext)$ quantum-proof extractor, and $\mac$ an $\eps_\mac$-information-theoretically secure one-time message authentication code. Then for any active attack $(\rho_{XE},T_1,T_2)$ such that $\Hmin(X|E)_\rho \geq k$, the DW privacy amplification protocol described in Figure~\ref{fig:pa} is $(k,m,\eps)$-secure as defined in Definition~\ref{privamp} with $\eps = O(\eps_\Ext + \eps_{\nmExt} + \eps_\mac)$.
\end{theorem}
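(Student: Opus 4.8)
The plan is to verify in turn the three requirements of Definition~\ref{privamp} --- correctness, robustness, and extraction --- following the classical Dodis--Wichs template while carrying the quantum side information through every step. Correctness is immediate: if Eve does not interfere then $Y'_A=Y_A$, hence $Z'=Z$, Bob's tag $W=\mac(Z,Y_B)$ passes Alice's check, and both parties output $\Ext(X,Y_B)$.

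For the extraction statements I would first do the min-entropy bookkeeping. Since $T_1,T_2$ act only on the message and Eve registers, and conditioning on a $d_Z$-valued classical variable costs at most $\log d_Z$ bits of min-entropy, one gets $\Hmin(X\mid Y_A Y'_A Z' E')\geq k-\log d_Z$, where $Z'=\nmExt(X,Y'_A)$ and $E'$ is Eve's post-$T_1$ state. As $Y_B$ is sampled fresh and independently, I apply the strong $(k-\log d_Z-\log(1/\eps_\Ext),\eps_\Ext)$ quantum-proof extractor $\Ext$ to the source $X$ with seed $Y_B$ and side information $(Y_A,Y'_A,Z',E')$, then compose with the fixed map that computes $W=\mac(Z',Y_B)$ and runs $T_2$ on $(Y_B,W,E')$; since this map leaves the output register untouched and does not increase trace distance, it yields $(R_B,V,E'')\approx_{\eps_\Ext}(U_m,V,E'')$, which is the extraction property for $R_B$ (here $\mathsf{purify}(R_B)=U_m$ since Bob never rejects). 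Extraction for $R_A$ is then reduced to this: on the event that Alice rejects, $R_A=\perp=\mathsf{purify}(R_A)$ and nothing is to prove; on the event that Alice accepts, the robustness analysis below shows that $Y'_B=Y_B$ except with probability $O(\eps_\mac+\eps_\nmExt)$, and on that sub-event $R_A=\Ext(X,Y_B)=R_B$, so near-uniformity of $R_A$ given $(V,E'')$ follows from that of $R_B$ (plus a term $O(\eps_\mac+\eps_\nmExt)$ for the complementary sub-event).

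The core is robustness, which I would split according to whether Eve relays the first message ($Y'_A=Y_A$) or modifies it. In the relay case $Z=Z'$, so to make Alice accept with $Y'_B\neq Y_B$ Eve must forge a fresh tag $\mac(Z,Y'_B)$ after seeing $\mac(Z,Y_B)$; since a quantum-proof non-malleable extractor is in particular a quantum-proof strong extractor (compose the no-fixed-point adversary with any fixed-point-free permutation of the seed and trace out the extra outputs), $Z$ is $\eps_\nmExt$-close to uniform given Eve's view, and after replacing $Z$ by a fresh uniform key and invoking $\eps_\mac$-security of $\mac$ this has probability $O(\eps_\mac+\eps_\nmExt)$. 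In the modify case I invoke the non-malleability property itself: $Z=\nmExt(X,Y_A)$ is $\eps_\nmExt$-close to uniform and independent of $Z'=\nmExt(X,Y'_A)$ given $(Y_A,Y'_A,E')$, while Eve's second-round message $(Y'_B,W')$ is a function of $(Y_B,\mac(Z',Y_B),E')$ alone and hence independent of $Z$, so Alice accepts with probability $O(\eps_\mac+\eps_\nmExt)$. Adding the two cases and using $R_B\neq\perp$ always gives $\Pr[R_A\neq R_B\wedge R_A\neq\perp]=O(\eps_\mac+\eps_\nmExt)$; post-application robustness follows by the same computation, the keys being revealed only after the interaction and the bad event already being determined (alternatively, revealing them can be simulated with fresh randomness using the extraction property just proved). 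Collecting the three bounds gives $\eps=O(\eps_\Ext+\eps_\nmExt+\eps_\mac)$.

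I expect the main obstacle to be the modify case: the attack map $T_1$ may relay with positive probability and therefore is \emph{not} a no-fixed-point map, so Definition~\ref{def:nmext} cannot be applied to $T_1$ directly. The fix is to complete $T_1$ to a no-fixed-point map $\bar T_1$ that agrees with $T_1$ whenever $T_1$ modifies and sends $y$ to $\pi(y)$, for some fixed-point-free permutation $\pi$, whenever $T_1$ would relay; recording the relay/modify flag makes all relevant states block-diagonal in it, so applying Definition~\ref{def:nmext} to $\bar T_1$ delivers the needed bound on the modify branch with an extra factor $1/\Pr[\text{modify}]$ that cancels against the $\Pr[\text{modify}]$ weight in the final union bound. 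A secondary point needing care is promoting the classical MAC-security notion of Definition~\ref{def:mac} to quantum side information that is (close to) independent of the key; this is exactly what the ``replace the near-uniform key by a fresh uniform one'' step accomplishes, reducing each use of $\mac$ to the purely classical statement.
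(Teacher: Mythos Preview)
Your proposal is correct and follows essentially the same approach as the paper: the same relay/modify case split for robustness, the same completion of $T_1$ to a no-fixed-point map (the paper replaces a relayed $Y_A$ by a uniformly random $Y'_A\neq Y_A$ where you use a fixed permutation, but the effect is identical), and the same min-entropy chain-rule plus data-processing argument for extraction of $R_B$, with extraction for $R_A$ then derived from robustness. The only noteworthy difference is that in the relay case you explicitly invoke the strong-extractor property of $\nmExt$ to make the MAC key near-uniform before appealing to $\eps_\mac$-security, whereas the paper applies MAC security directly and implicitly absorbs this step into the final $O(\eps_\Ext+\eps_\nmExt+\eps_\mac)$ bound.
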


Combined with Theorem~\ref{thm:main} stating the security of our construction of a quantum-proof non-malleable extractor, Theorem~\ref{thm:pa-sec} provides a means to obtain privacy amplification protocol secure against active attacks for a range of parameters. Due to the limitations of our non-malleable extractor we are only able to extract from sources whose entropy rate is at least $\frac{1}{2}$. This is a typical setting in the case of quantum key distribution, where the initial min-entropy satisfies $\Hmin(X|E) \geq \alpha \log d_X$ for some constant $\alpha$ which depends on the protocol and the noise tolerance, but is generally larger than $3/4$. Specifically, we obtain the following:

\begin{corollary}\label{cor:dw-pa} For any $\eps > 0$, there exists a constant $c > 0$, such that the following holds. For any active attack $(\rho_{XE},T_1,T_2)$ such that $\Hmin(X|E)_\rho  = k \geq \frac{1}{2} \log d_X + c \cdot \logeps$, there is an $O(\eps)$-secure DW protocol that outputs a key of length $m = k - O(\logeps)$. 
\end{corollary}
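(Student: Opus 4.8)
The plan is to instantiate the Dodis--Wichs security theorem, Theorem~\ref{thm:pa-sec}, with three ingredients: the quantum-proof non-malleable extractor $\nmExt$ of Theorem~\ref{thm:main}, a strong quantum-proof extractor $\Ext$ from Theorem~\ref{thm:quantum-LHL}, and the polynomial-evaluation message authentication code of Proposition~\ref{prop:mac}; the rest is parameter bookkeeping. The single genuine design choice is the prime $p$ underlying $\nmExt$. Since $\nmExt$ outputs one element of $\F_p$, this element plays the role of the $\mac$ key $Z$ authenticating the $d_2$-bit second message $Y_B$, where $d_2$ is the seed length of $\Ext$; hence $p$ must be chosen large enough for Proposition~\ref{prop:mac} to apply.

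Concretely, I would first fix, via Theorem~\ref{thm:quantum-LHL}, a strong $(\,\cdot\,,\eps)$ quantum-proof extractor for sources on $\{0,\ldots,d_X-1\}$ and let $d_2 = O(\log d_X)$ be its seed length in bits (a short-seed construction would even give $d_2 = \poly(\log\log d_X,\logeps)$). By density of the primes, pick a prime $p\neq 2$ with $d_2^2/\eps^2 \le p = O(d_2^2/\eps^2)$, so $\log p = O(\log\log d_X + \logeps)$. Reinterpret a prefix of the shared secret $X$ as an element $X'$ of $\F_p^{n'}$ for the largest even $n'$ with $p^{n'}\le d_X$; this costs $O(\log p)$ bits of min-entropy, so $\Hmin(X'|E)\ge k - O(\log p)$ and $n'\log p \ge \log d_X - O(\log p)$. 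Then set $\eps_\mac=\eps_\Ext=\eps_\nmExt=\eps$, so that Theorem~\ref{thm:pa-sec} will give overall error $O(\eps)$.

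Next I would check the three hypotheses of Theorem~\ref{thm:pa-sec} with $d_Z=p$. For $\nmExt(X',Y)=\langle X', Y\|Y^2\rangle$: Theorem~\ref{thm:main} makes it a $(k_\nmExt,\eps)$ quantum-proof non-malleable extractor with $k_\nmExt = (\tfrac{n'}{2}+6)\log p - 1 + 4\logeps$, and since $\tfrac{n'}{2}\log p \le \tfrac12\log d_X$ while also $n'\log p \ge \log d_X - O(\log p)$ this is $\tfrac12\log d_X + O(\log p) + 4\logeps$; so the hypothesis $\Hmin(X'|E)\ge k_\nmExt$ holds as soon as $k \ge \tfrac12\log d_X + O(\log p) + 4\logeps = \tfrac12\log d_X + O(\log\log d_X + \logeps)$. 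For the $\mac$: Proposition~\ref{prop:mac} applied with message length $d_2$, error $\eps$, and key space $d_Z = p \ge d_2^2/\eps^2$ yields a $\mac$ of the required type with tag length $t \le \log d_2 + \logeps$. For $\Ext$: Theorem~\ref{thm:pa-sec} asks for a strong $(k-\log d_Z-\logeps,\eps)$ quantum-proof extractor, and Theorem~\ref{thm:quantum-LHL} provides one with seed length $d_2$ and output length $m = (k-\log p-\logeps)-O(\logeps)-O(1) = k - O(\log\log d_X + \logeps)$. Feeding all of this into Theorem~\ref{thm:pa-sec} then yields that Protocol~DW with these parameters is $(k,m,O(\eps))$-secure against active quantum adversaries, which is the claim.

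I expect the only real obstacle to be the tension between the $\mac$ key size and the entropy threshold of $\nmExt$. Because $\nmExt$ outputs a single field element, $p$ must grow at least linearly with $d_2 = \Omega(\log\log d_X)$, which introduces an $O(\log p) = O(\log\log d_X + \logeps)$ additive overhead into both the min-entropy requirement (via the $6\log p$ term in Theorem~\ref{thm:main}, the loss from embedding $X$ into $\F_p^{n'}$, and the $\log d_Z$ entropy loss from revealing $\nmExt(X',Y'_A)$ inside the protocol) and the entropy loss $k-m$. In the parameter regime of quantum key distribution, where $\eps$ is exponentially small so that $\logeps = \Omega(\log\log d_X)$, this overhead is absorbed into $O(\logeps)$ and one obtains exactly $k \ge \tfrac12\log d_X + c\,\logeps$ and $m = k - O(\logeps)$; otherwise the $\log\log d_X$ term is swept into the constant $c$. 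The only step needing more than mechanical care is the $\F_p^{n'}$-embedding (choosing $n'$ even and bounding the induced min-entropy loss), which is nonetheless routine.
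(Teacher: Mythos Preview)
Your approach is the same as the paper's: instantiate Theorem~\ref{thm:pa-sec} with $\nmExt$ from Theorem~\ref{thm:main}, $\Ext$ from Theorem~\ref{thm:quantum-LHL}, and the $\mac$ from Proposition~\ref{prop:mac}. The paper's proof is terser about the choice of prime, simply taking $\log p = \Theta(\logeps)$ and writing $d_X = p^n$ without discussing the embedding of $X$ into $\F_p^n$; your treatment of that point, and of the resulting $O(\log\log d_X)$ overhead, is more careful than the paper's but not a different idea.
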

\begin{proof}
Let $p$ be a prime and $n$ a positive integer such that $\log p = \Theta(\logeps)$ and $d_X = p^n$. Let $d_Y = p^{n/2}$, and $d_Z = p$. Also, let $d_2 = O(\log d_X)$, $m = k - O(\logeps)$, and $t = O(\logeps)$. We instantiate Theorem~\ref{thm:pa-sec} with the following.
\begin{itemize}
\item Let $\Ext:\{0,\ldots,d_X-1\}\times\{0,1\}^{d_2} \to \{0,1\}^m$ be the $(k - O(\log(1/\eps)),\eps)$ strong quantum-proof extractor from Theorem~\ref{thm:quantum-LHL}.
\item  Let $\nmExt:\{0,\ldots,d_X-1\}\times\{0,,\ldots,d_Y-1\} \to \{0,\ldots,d_Z-1\}$ be the $(\frac{1}{2} \cdot \log d_X + O(\logeps),\eps)$ non-malleable extractor from Theorem~\ref{thm:main}.
\item Let $\mac:\{0,\ldots,d_Z-1\} \times \{0,1\}^{d_2} \to \{0,1\}^t$ be the one-time $\eps$-information-theoretically secure message authentication code from Proposition~\ref{prop:mac}.
\end{itemize}
The result follows.
\end{proof}

\subsection{One-round Privacy Amplification Protocol} \label{one-round}
In this section we show that the one-round protocol of Dodis et al.~\cite{DKKRS12} is also quantum-proof. This protocol has significantly higher entropy loss, $(n/2) + \log(1/\eps)$, than the DW protocol we presented in the previous section.

\begin{figure}
\begin{protocol*}{One-round Privacy Amplification Protocol }
\begin{description}
\item Let $n,k$ be integers and $\eps > 0$. Let $v = n - k + \log(1/\eps)$ and $m = (n/2)-v$.
\item  It is assumed that both parties, Alice and Bob, have access to a shared random variable $X\in \{0,1\}^n$. They interpret $X$ as a pair $X= (X_1, X_2)$ where $X_1, X_2$ are identified as elements in $\F_{2^{n/2}}$.
\begin{step}
\item[1.] Alice samples a $Y$ uniformly from $\F_{2^{n/2}}$ and computes $Z = Y X_1 + X_2$. Let $W = [Z]_1^v$ be the first $v$ bits of $Z$. She sends $(Y,W)$ to Bob and outputs $R_A = [Z]_{v+1}^{n/2}$, the remaining part of $Z$.
\item[2.] Bob receives $(Y', W')$ from Alice and computes $Z' = Y' X_1 + X_2$. If $W' = [Z']_1^v$, then Bob outputs $R_B = [Z']_{v+1}^{n/2}$. Otherwise he outputs $\bot$.
\end{step}
\end{description}
\end{protocol*}
\caption{The one-round privacy amplification protocol from ~\cite{DKKRS12}.}
\label{fig:pa-one-round}
\end{figure}

\begin{theorem} For any integer $n$ and $k > n/2$, and any $\eps>0$, the protocol in Figure~\ref{fig:pa-one-round} is a one-round $(k,m,\eps)$-quantum secure privacy amplification protocol with post-application robustness and entropy loss $k-m = (n/2)+ \log(1/\eps)$.
\end{theorem}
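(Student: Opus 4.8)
My plan is to verify, in turn, the three requirements of a $(k,m,\eps)$-privacy amplification protocol (Definition~\ref{privamp}); only post-application robustness is nontrivial, and I would obtain it by a direct ``small-set guessing'' argument rather than through the non-malleable-extractor machinery. \emph{Correctness} is immediate: with no interference $Y'=Y$ and $W'=W$, so $Z'=Z$, Bob's test $W'=[Z']_1^v$ passes, and $R_B=[Z']_{v+1}^{n/2}=[Z]_{v+1}^{n/2}=R_A$. For the rest, note that post-application robustness hands Eve both $W=[Z]_1^v$ and $R_A=[Z]_{v+1}^{n/2}$, hence all of $Z$; so I will treat her attack as a measurement on the classical--quantum state $(Y,Z,E)$ producing the message $(Y',W')$ she sends to Bob.

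For \emph{extraction} I would first note that $h_y(x_1,x_2)=yx_1+x_2$ is a $2$-universal hash family with output in $\F_{2^{n/2}}$ (for distinct $(x_1,x_2),(x_1',x_2')$ the collision probability over uniform $y$ is $2^{-n/2}$ when $x_1\neq x_1'$ and $0$ otherwise). The quantum leftover hash lemma~\cite{leftoverhash} then gives $(Z,Y,E)\approx_\delta(U_{n/2},Y,E)$ with $\delta=\frac{1}{2}\,2^{-(k-n/2)/2}$. Since $W=[Z]_1^v$ and $R_A=[Z]_{v+1}^{n/2}$ are the two halves of $Z$, and since the transcript $V$ and Eve's final register $E'$ are obtained by applying one fixed CPTP map to $(Y,W,E)$ (Eve never sees $R_A$ in the extraction game), replacing $Z$ by $U_{n/2}$ on both sides shows that $R_A$ is $2\delta$-close to uniform and independent of $(V,E')$; as Alice never rejects, this is precisely the extraction property for $R_A$. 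For $R_B$ I would combine this with the robustness bound below --- whenever Bob accepts one has $R_B=R_A$ except with probability $\le\eps$, and $R_B=\bot$ exactly when he rejects --- via the same argument used in the analysis of the Dodis--Wichs protocol (Appendix~\ref{app:pa}).

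\emph{Robustness} is the crux. Fix Eve's strategy, let $\mathsf{bad}$ be the event ``Bob accepts and $R_B\neq R_A$'', and note it forces $Y'\neq Y$; write $c=Y'-Y\neq0$. Since addition in $\F_{2^{n/2}}$ is bitwise \textsc{xor}, truncation to the first $v$ coordinates is $\F_2$-linear, and $X_2=Z-YX_1$ forces $Z'=Y'X_1+X_2=Z+cX_1$, Bob's test $W'=[Z']_1^v$ becomes $[cX_1]_1^v=W'\oplus[Z]_1^v$, i.e.\ $X_1\in\mathcal{S}$ where $\mathcal{S}=c^{-1}\{z\in\F_{2^{n/2}}:[z]_1^v=W'\oplus[Z]_1^v\}$ has size $2^{n/2-v}=2^m$ and is determined by Eve's output $(Y',W')$ on $(Y,Z,E)$. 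Hence the procedure ``run Eve's measurement on $(Y,Z,E)$, form $\mathcal{S}$, output a uniformly random element of $\mathcal{S}$'' guesses $X_1$ with probability at least $2^{-m}\Pr[\mathsf{bad}]$, so by Lemma~\ref{lem:minentropyguess}, $\Pr[\mathsf{bad}]\le 2^m\,p_{guess}(X_1\mid Y,Z,E)=2^m\,2^{-\Hmin(X_1\mid Y,Z,E)}$. Because $Y$ is independent of $(X,E)$ and, given $Y$, the map $(x_1,x_2)\mapsto(x_1,yx_1+x_2)$ is a bijection, $\Hmin(X_1Z\mid YE)=\Hmin(X_1X_2\mid E)\ge k$, and the min-entropy chain rule gives $\Hmin(X_1\mid YZE)\ge k-\log|Z|=k-n/2$. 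Plugging in $m=n/2-v=k-n/2-\log(1/\eps)$ yields $\Pr[\mathsf{bad}]\le 2^{m-(k-n/2)}=\eps$; removing $R_A$ from Eve's input gives pre-application robustness as a special case.

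The main obstacle I anticipate is exactly the modelling difficulty flagged in the introduction: Eve's choice of $Y'$ --- equivalently, of the set $\mathcal{S}$ --- depends on both her quantum register $E$ and on $Z$, each correlated with the source $X$, so $Y'$ cannot be treated as independent of $X$. What rescues this particular protocol is that, once one conditions on the \emph{classical} pair $(Y,Z)$, the residual task is a plain guessing game, and the reduction to $p_{guess}(X_1\mid YZE)$ above is valid for arbitrary (correlated) Eve strategies; no quantum Markov-chain reasoning is required, which is why the analysis is short but the entropy loss is the large $n/2+\log(1/\eps)$. Everything else is routine bookkeeping, including checking that the leftover-hash error $O(2^{-(k-n/2)/2})$ from the extraction step is $\le\eps$ in the claimed regime (which requires $k-n/2=\Omega(\log(1/\eps))$, of the same form as the constraint $m\ge0$).
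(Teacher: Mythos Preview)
Your proposal is correct and follows essentially the same approach as the paper. Correctness and extraction are handled identically (via the quantum leftover hash lemma applied to the universal family $h_y(x_1,x_2)=yx_1+x_2$), and your robustness argument is the same guessing reduction: the paper has the guesser pick a uniformly random $R'_B$, form the two linear equations $(W,R_A)=YX_1+X_2$ and $(W',R'_B)=Y'X_1+X_2$, and solve for $(X_1,X_2)$, whereas you observe directly that Bob's acceptance pins $X_1$ to a set $\mathcal{S}$ of size $2^m$ and guess $X_1$ uniformly from it. Since $X_2=Z-YX_1$ is determined by $X_1$ once $(Y,Z)$ is known, these are the same reduction with the same success probability and the same min-entropy bound $\Hmin(X_1\mid YZE)=\Hmin(X\mid YZE)\geq k-n/2$; your phrasing is just a bit more streamlined.
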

\begin{proof}
Correctness and extraction follow as in the classical proof by observing that $\Ext(X, Y) = YX_1+X_2$ is a quantum-proof extractor since $h_Y(X_1,X_2) = YX_1+X_2$ is a family of universal hash function, which is shown to be a quantum-proof strong extractor in~\cite{leftoverhash}.
For robustness, the classical proof does not generalize directly. We prove post-application robustness as follows.

We proceed by contradiction.  Suppose post-application robustness is violated, i.e. $\Pr[R_A \neq R_B \land~ R_A \neq \perp \land~ R_B \neq \perp]> \eps$. Then there is an initial state $\rho_{XE}$ with $\Hmin(X|E)_\rho \geq k$ and a CPTP map $T:  \Lin(\mH_Y \otimes \mH_W \otimes \mH_{R_A} \otimes \mH_E) \to \Lin(\mH_Y \otimes \mH_W   \otimes \mH_{E'})$ that can be applied by an adversary Eve to produce a modified message that is accepted by Bob with probability greater than $\eps$. Note that $T$ has $R_A$ as input since we consider post-application robustness. Let $(Y',W',E')=T(Y,W,R_A,E)$. If post-application robustness is violated, then $\Pr[ W'=[Y' X_1+X_2]_1^v] > \eps$. 

Consider the following communication game: Alice has access to a cq-state $\rho_{XE}$. Alice samples a uniformly random $Y$, computes $W=[Y X_1+X_2]_1^v$, $R_A = [Y X_1+X_2]_{v+1}^{n/2}$, and sends $E$, $Y$, $W$, and $R_A$ to Bob. They win if Bob guesses $X$ correctly from $E$, $Y$, $W$, and $R_A$. Using the map $T$ introduced above, Bob can execute the following strategy. First, apply $T$ on Alice's message to generate a guess $(Y',W')$. Second, guess a uniformly random  $R'_B$. Third, use $Y,Y', (W,R_A) = Y X_1+X_2,$ and $(W',R'_B) = Y' X_1+X_2$ to solve for a unique $X=(X_1,X_2)$. Note that Bob succeeds if the guesses $(Y',W')$ and $R'_B$ in the first two steps  are both correct (i.e., $(W',R'_B) = Y' X_1+X_2$), which has probability greater than $\eps \cdot 2^{- ((n/2) - v)}$. On the other hand, we can upper bound the winning probability of the communication game using the min entropy assumption $H(X|E)_\rho \geq k$. Since $Y$ is independent of $X$ and the length of  $(W,R_A)$ is $n/2$, $\Hmin(X|E,Y,W)_\rho \geq k - (n/2)$. Thus the winning probability is less than $2^{-(k-(n/2))}$. Putting the two calculations  together we have 
$$\eps \cdot 2^{- ((n/2) - v)} \leq \Pr[ \mbox{ Bob wins } ] \leq  2^{-(k-(n/2))},$$
which implies $v < n-k-\log(1/\eps)$, a contradiction.
\end{proof}


%
%



\bibliographystyle{alpha}
\bibliography{fuzzy,nmext}

\appendix

\section{The Dodis-Wichs Protocol}
\label{app:pa}

In this appendix we reproduce the proof of Theorem~\ref{thm:pa-sec}, taken from~\cite{CV16}. 


\begin{proof}[Proof of Theorem~\ref{thm:pa-sec}] 
Let an \emph{active attack} on Protocol DW be specified by
\begin{itemize}
\item A cq state $\rho_{XE} \in \Density(\mH_X \otimes \mH_E)$ such that $\Hmin(X|E)_\rho \geq k$;
\item A CPTP map $T_1 : \Lin(\mH_Y \otimes \mH_E) \to \Lin(\mH_Y \otimes \mH_{E'})$ whose output on the first registered is systematically decohered in the computational basis; formally, for any $\rho_{YE}$, $T_1(\rho_{YE}) = \sum_{y} (\proj{y}_Y \otimes \Id_E)T_1(\rho_{YE}) (\proj{y}_Y \\ \otimes \Id_E) $; 
\item A CPTP map $T_2 : \Lin(\C^{2^{d_2}} \otimes \C^{2^{t}}\otimes  \mH_{E'}) \to \Lin(\C^{2^{d_2}} \otimes \C^{2^{t}} \otimes \mH_{E''})$.
\end{itemize}
Given an active attack $(\rho_{XE},T_1,T_2)$ we instantiate random variables $Y_A,Z,Y'_A, Y_B,Z',\sigma,Y'_B,\sigma'$ and $R_A,R_B$ in the obvious way, as defined in the protocol and taking into account the maps $T_1$ and $T_2$, applied successively to determine $Y'_A$ and $(Y'_B,\sigma')$.

The correctness of the protocol is clear.

To show robustness, let $\sigma_{ Y'_A Y_A X E'}$ denote the joint state of $Y'_A$, $Y_A$ (which represents a local copy of $Y_A$ kept by Alice), $X$, and Eve's registers after her first map $T_1$ has been applied. Further decompose $\rho$ as a sum of sub-normalized densities $\sigma^=_{Y'_A Y_A X E'}$, corresponding to conditioning on $Y'_A=Y_A$, and $\sigma^\perp_{Y'_A Y_A X E'}$, corresponding to conditioning on $Y'_A\neq Y_A$.

Conditioned on $Y'_A=Y_A$, by definition of a $\mac$ the probability that $(Y'_B,W')\neq (Y_B,W)$ and Alice reaches the $\kc$ state is at most $\eps_{\mac}$. If $(Y'_B,W') = (Y_B,W)$ then $R_A=R_B$, so that in this case robustness holds with error at most $\eps_\mac$.

Now suppose $Y'_A \neq Y_A$. Consider a modified adversary $\Adv'$ that keeps a copy of $Y_A$, applies the map $T_1$, and if $Y'_A=Y_A$ replaces $Y'_A$ with a uniformly random string that is distinct from $Y_A$. This adversary implements a CPTP map $T'_1$ that has no fixed point. By the assumption that $\nmExt$ is a quantum-proof non-malleable extractor,
\begin{equation}\label{eq:pa-sec-1}
\sigma'_{\nmExt(X,Y_A) \nmExt(X,Y'_A) Y_AY'_A E'} \approx_{\eps_\nmExt} U_m \otimes {\sigma'}_{\nmExt(X,Y'_A) Y_A Y'_AE'},
\end{equation}
where here $Y'_AE'$ is defined as the output system of the map $T'_1$ implemented by $\Adv'$. Conditioned on $Y_A \neq Y'_A$ the maps $T_1$ and $T'_1$ are identical, thus it follows from~\eqref{eq:pa-sec-1} and the definition of $\rho^{\perp}$ that
$$\sigma^{\perp}_{\nmExt(X,Y_A) \nmExt(X,Y'_A) Y_AY'_A E'} \approx_{\eps_\nmExt} U_m \otimes \sigma^{\perp}_{\nmExt(X,Y'_A) Y_A Y'_AE'},$$
where now the states are sub-normalized. Since $Z'= \nmExt(X,Y'_A)$ this means that the key used by Alice to verify the signature in Step 3. of Protocol DW is (up to statistical distance $\eps_\nmExt$) uniform and independent of the key used by Bob to make the MAC. By the security of MAC, the probability for Alice to reach the $\kc$ state in this case is at most $\eps_{\nmExt}+\eps_{\mac}$. Adding both parts together, $\Pr(R_A \notin \{R_B,\perp\})\leq \eps_{\nmExt}+2\eps_{\mac}$. Since $R_B$ is never $\bot$, this implies the robustness property. 

For the extraction property, it is sufficient to show that $(R_B,V,E)\approx_{\eps} (U_m,V,E)$ since then key extraction property follows from the robustness and the fact that $R_B$ is never $\bot$. We have that $R_B=\Ext(X,Y_B)$ is close to uniform given $V=Y_AY_BW$ and $E'$, and we need to establish two properties: first, independence between $X$ and $Y_B$ given $Y_AZ'E'$ and second, that the source has enough entropy conditioned on $Y_AZ'E'$.  Regarding the first property, observe that conditioned on $Y_AZ'$, $X$ and $Y_B$ are independent given $E'$. Regarding the source entropy, by the chain rule for the (smooth) min-entropy~\cite{vitanov2013chain}, it follows that
$\Hmin^{\eps_\Ext}(X|Y_AZ'E') \geq k-\log d_Z-c\log(1/\eps_\Ext)$ for some constant $c>0$. Note that 
\begin{align*}
\big\| (R_B,V,E')_\sigma-(U_m,V,E')_\sigma \big\|_1 &\leq \big\| (R_B,Y_A,Y_B,Z',E')_\sigma-(U_m,Y_A,Y_B,Z',E')_\sigma \big\|_1 ,
\end{align*}
which follows since $W$ is a deterministic function $Y_B$ and $Z'$. 
Using that $\Ext$ is a strong quantum-proof extractor, we conclude that $(R_B,V,E)\approx_{\eps} (U_m,V,E)$, as long as $\eps$ is such that $\eps > \eps_\Ext$. 

\end{proof}

\end{document}